\documentclass{article}
\textwidth 130mm
\textheight 210mm
\newtheorem{definition}{Definition}
\newtheorem{example}[definition]{Example}

\newtheorem{theorem}[definition]{Theorem}

\newtheorem{lemma}[definition]{Lemma}

\newtheorem{proposition}[definition]{Proposition}

\newtheorem{remark}[definition]{Remark}

\makeatletter
\renewcommand{\@begintheorem}[2]{ % not in italics
\trivlist\item[\hskip\labelsep{\bf #1\ #2}]}
\renewcommand{\@opargbegintheorem}[3]{
\trivlist\item[\hskip \labelsep{\bf #1\ #2\ (#3)}]}
\makeatother
\newtheorem{proof}{Proof}
 % no numbers on proofs
 %
\usepackage[T1]{fontenc} % correct output rendering as first packages
\usepackage{ae,aecompl}
\usepackage[latin1]{inputenc} % correct input typing as second package
%% gives \llparenthesis, \llceil, \llbracket, and right correspondent
\usepackage{stmaryrd}
\usepackage[lutzsyntax]{virginialake}\vlsmallbrackets\aftrianglefalse
\usepackage{amsmath,amssymb,latexsym}
\usepackage{paralist}
\usepackage{calrsfs}
\usepackage{hyperref}
\usepackage{graphicx}
\usepackage{txfonts} % supplies \multimapdots
\usepackage{color} %usage \textcolor{nameofhecolor}{texttocolor}
\usepackage{subfigure}

\usepackage{textcomp}
\usepackage{xspace}
\usepackage{rotating}
\usepackage{setspace}
\usepackage{xypic}
%
%%%%%%%%Map from Private Pi calculus to deep inference
 %
 %
 %

%%%%%%% standard CCS
 %
 %
 %
 %
 %

%%%%%%%% Private ccs calculus with no choice
\newcommand{\pincLTSJud}[3]{\xymatrix@1{#1\, \ar[r]^-{\ #3\ } & \,#2}} % with xypic
\newcommand{\pincLTSEXTJud}[5]
{\xymatrix@1{#1\, \ar[r#4]^-{\ #3\ } &#5 \,#2}} % with xypic
\newcommand{\pincRSJud}[3]{\xymatrix@1{#1\, \ar@{>>}[r]^-{\ #3\ } &\, #2}} %
\newcommand{\pincRSEXTJud}[5]
{\xymatrix@1{#1\, \ar@{>>}[r#4% 0 <= n occurrences of r
]^{\ #3\ } &#5% 0 <= n occurrences of &
\, #2}} %
 %
 %
 %
 %
 %
 %
 %
 %
 %
 %

 % Syntactical category for prefixes

 %
 %
 %
 %
 % input prefix
 % output prefix
 % synchro prefix
 % label of interaction
 % structural interaction
 % Generic label

 % Set of labels

 % Pi terms names

 % Pi channel names

 % Pi channel co-names

 % Pi names of variables

 % Pi meta names prefixes

 % Free names set
 % Free names set
 % Bound names set

%%%%%%%%%%%%% COMMON START

\newcommand{\cutelimination}{cut-elimination\xspace}

\newcommand{\Set}[1]{ \{ #1 \}}
\newcommand{\Size}[1]{|#1|}
\newcommand{\smalltitle}[1]{\small #1}
\newcommand{\ST}{such that}

\newcommand{\IFF}{iff}

\newcommand{\wrt}{w.r.t.\xspace}
\newcommand{\dfn}[1]{\emph{#1}}
\newcommand{\grammareq}{::=\ }
\newcommand{\DI}{DI\xspace}

\newcommand{\NEL}{\mathsf{NEL}\xspace}
\newcommand{\BV}{\mathsf{BV}\xspace}
\newcommand{\SBV}{\mathsf{SBV}\xspace}

\newcommand{\SBVsub}{\mathsf{B}\xspace}
\newcommand{\BVT}{\BV\mathsf{Q}\xspace}

 % comm core
\newcommand{\CCSR}{\mathsf{CC_{sp}}\xspace} % CCS with restr
\newcommand{\SBVT}{\SBV\mathsf{Q}\xspace}

\newcommand{\CCS}{\mathsf{CCS}\xspace}

\newcommand{\MLL}{\mathsf{MLL}\xspace}

\definecolor{Blue}{rgb}{0,0,1}

% \definecolor{brown}{rgb}{0.7,0.4,0.3}
\definecolor{Red}{rgb}{1,0.2,0}
% \definecolor{yellow}{rgb}{1,1,0}

%%%%%%%%%%%%%%%%%%%%%%%%%%%%%%%%%%%%%%%%%%%%%%
% nice theorem numbering compatible with lncs style
{\bfseries}{\itshape}
{\bfseries}{\itshape}
{\bfseries}{\itshape}
{\bfseries}{\itshape}
{\bfseries}{\itshape}
{\bfseries}{\itshape}
{\bfseries}{}
{\bfseries}{}
{\bfseries}{}

\catcode`\@=11
\def\@thmcountersep{.}
\def\@thmcounterend{\;}
\catcode`@=12

%%%%%%%%%%%%%%%%% Map from lambda calculus to deep inference
\newcommand{\mapLcToDi}[2]
           {\llparenthesis\,\! #1\, \rrparenthesis_{#2}} % #2 is out channel
 % #2 is out channel
 % #2 is out channel
 % #2 is out channel

%%%%%%%%%%%%%%%%% Derivations of \MLL
 % derivation names
 % derivation names
 % derivation names
 % premises names
 %
 % formulas names
 %
 % formulas names
 %
 % variable names

 % rule names

 % rule names inline

 %

%%%%%%%% Linear Lambda Calculus with explicit substitutions
\newcommand{\llcxlamterm}{$\lambda$-term}

\newcommand{\llcxlamterms}{\llcxlamterm s}

\newcommand{\llcxlamcalc}{$\lambda$-calculus}

\newcommand{\llcxlamvar}{$\lambda$-variable}
\newcommand{\llcxlamvars}{\llcxlamvar s}
\newcommand{\llcxlinlamterm}{linear \llcxlamterm}

\newcommand{\llcxlinlamterms}{\llcxlinlamterm s}
\newcommand{\llcxlinlamcalc}{linear $\lambda$-calculus}
\newcommand{\llcxLinlamcalc}{Linear $\lambda$-calculus}
\newcommand{\llcxwithexs}{with explicit substitutions}
\newcommand{\llcxVar}{\mathcal{V}} %
\newcommand{\llcxSet}[1]{\Lambda_{#1}} %
\newcommand{\llcxFV}[1]{\operatorname{fv}(#1)} % Free variables set
 % # of occurrences of a variable in a term
 % Single substitution
 % Double subst.
 % Multiple subst.
\newcommand{\llcxF}[2]{\lambda #1.#2} % lambda abstraction
\newcommand{\llcxA}[2]{(#1)\, #2} % application
\newcommand{\llcxE}[3]{#1\, \subst{#3}{#2}} % Explicit subst. #3 for #2 in #1
% \newcommand{\llcxE}[3]{(#1)\, \Set{#2 = #3}} % Explicit subst. #3 for #2 in #1
 % Identity

\newcommand{\llcxM}{M} % lambda terms names
\newcommand{\llcxN}{N}
\newcommand{\llcxP}{P}
\newcommand{\llcxQ}{Q}
\newcommand{\llcxNot}{\mathtt{Not}}
\newcommand{\llcxTT}{\mathtt{True}}
\newcommand{\llcxFF}{\mathtt{False}}
\newcommand{\llcxX}{x} % lambda variables names
\newcommand{\llcxY}{y}
\newcommand{\llcxW}{w}
\newcommand{\llcxZ}{z}
\newcommand{\llcxo}{o} % lambda variables channel names
\newcommand{\llcxno}{\overline{o}} % lambda variables channel names
\newcommand{\llcxp}{p} %
\newcommand{\llcxq}{q} %
\newcommand{\llcxnr}{\overline{r}} %
 %
 %

 % Rewriting relations
 %
 %
\newcommand{\llcxSOSJud}[2]{ #1 \Rightarrow #2} %

 % Rewriting rule names
\newcommand{\llcxSOSBetarule}{\beta} 
\newcommand{\llcxSOStrarule}{\mathsf{tra}} %
\newcommand{\llcxSOSrflrule}{\mathsf{rfl}} %
\newcommand{\llcxSOSfrule  }{\mathsf{f}} %
\newcommand{\llcxSOSalrule}{\mathsf{@l}} %
\newcommand{\llcxSOSarrule}{\mathsf{@r}} %
 %
 %

 % Rewriting rule names inline
\newcommand{\llcxSOStrarulein}{\mathsf{tra}} %
 %
 %
 %
 %
 %
 %

 %
%%%% To include in virginialake.sty ???
\vllineartrue
\renewcommand{\vlscn}[1]{
\!\ifvirginialakesmallbrackets\{\else\left\{\fi #1
  \ifvirginialakesmallbrackets\}\else\right\}\fi   } % context
\newcommand{\vlone}{\circ} % multiplicative unit
 % multiplicative unit
 % multiplicative unit
 % additive unit

\newcommand{\vlfo}[2]{\vlqu{#1}{#2}}
\newcommand{\vlex}[2]{\vlqu{#1}{#2}}

\newcommand{\vlpl}{\vlor}
 % right context
%%%% To include in virginialake.sty ???

%%%%%%%%%%%%%% Structures
\newcommand{\OpNameSeq}{\textnormal{\textsf{Seq}}\xspace}
\newcommand{\OpNamePar}{\textnormal{\textsf{Par}}\xspace}
\newcommand{\OpNameCop}{\textnormal{\textsf{CoPar}}\xspace}
\newcommand{\OpNameRen}{\textnormal{\textsf{Sdq}}\xspace}

\newcommand{\OpNameNot}{\textnormal{\textsf{Not}}\xspace}
\renewcommand{\vlne}[1]{\overline{#1}}

\newcommand{\strFN}[1]{\operatorname{fn}(#1)} % free names in structures
\newcommand{\strBN}[1]{\operatorname{bn}(#1)} % bound names in structures
\newcommand{\strK}{K} % structure names
\newcommand{\strP}{P}
\newcommand{\strR}{R}
\newcommand{\strS}{S}

\newcommand{\strT}{T}
\newcommand{\strU}{U}
\newcommand{\strV}{V}

 % negated structures names

\newcommand{\nstrR}{\vlne\strR}

\newcommand{\atma}{a} % atoms

\newcommand{\atmb}{b}

\newcommand{\atmc}{c}

\newcommand{\natma}{\vlne\atma} % negated atoms

\newcommand{\natmb}{\vlne\atmb}

\newcommand{\natmc}{\vlne\atmc}

\newcommand{\atmLabL}{\mathfrak{l}} %% generic names
\newcommand{\atmLabM}{\mathfrak{m}}
\newcommand{\atmLabN}{\mathfrak{n}}
 %% generic co-names

 % Pi meta names prefixes

\newcommand{\subst}[2]{\{^{#1}\!/\!_{#2}\}} % substitutes #1 for #2
 % sets of legal structures

%%%%%%%%%%%%%%%%% SBVr rule names
 % rule names

\newcommand{\bvtrdrule}{\mathsf{u}\downarrow}
\newcommand{\bvtrurule}{\mathsf{u}\uparrow}

 %generic rule name \rho

\newcommand{\bvtrhorule}{\rho}
\newcommand{\bvtmixprule}{\mathsf{mixp}} % derivable rule names
\newcommand{\bvtpmixrule}{\mathsf{pmix}}

\newcommand{\bvttradrule}{\mathsf{t}\downarrow}

\newcommand{\bvttradrulep}{\mathsf{mt}\downarrow}

\newcommand{\bvtintdrule}{\mathsf{i}\downarrow}
\newcommand{\bvtinturule}{\mathsf{i}\uparrow}

\newcommand{\bvtseqdrule}{\mathsf{q}\downarrow}
\newcommand{\bvtsequrule}{\mathsf{q}\uparrow}
\newcommand{\bvtatidrule}{\mathsf{ai}\downarrow}
\newcommand{\bvtatiurule}{\mathsf{ai}\uparrow}
\newcommand{\bvtswirule }{\mathsf{s}}
\newcommand{\bvtpludrule}{\mathsf{p}\downarrow}
\newcommand{\bvtpluurule}{\mathsf{p}\uparrow}

%%% Rule names dervable in connection with lambda caclulus
\newcommand{\bvtorerule}{\mathsf{subst}}
\newcommand{\bvtsubsrule}{\bvtorerule}

\newcommand{\bvtsinrule}{\mathsf{beta}}

%%%%%%%%%%%%%%%%% IN LINE SBVr rule names
 % rule names

\newcommand{\bvtrdrulein}{\mbox{$\mathsf{u}\!\!\downarrow$}}
\newcommand{\bvtrurulein}{\mbox{$\mathsf{u}\!\!\uparrow$}}

 %generic rule name \rho

\newcommand{\bvtrhorulein}{\rho}
\newcommand{\bvtmixprulein}{\mathsf{mixp}} % derivable rule names
\newcommand{\bvtpmixrulein}{\mathsf{pmix}}

\newcommand{\bvttradrulein}{\mbox{$\mathsf{t}\!\!\downarrow$}}

\newcommand{\bvttradruleinp}{\mbox{$\mathsf{mt}\!\!\downarrow$}}

\newcommand{\bvtintdrulein}{\mbox{$\mathsf{i}\!\!\downarrow$}}
\newcommand{\bvtinturulein}{\mbox{$\mathsf{i}\!\!\uparrow$}}
\newcommand{\bvtseqdrulein}{\mbox{$\mathsf{q}\!\!\downarrow$}}

\newcommand{\bvtsequrulein}{\mbox{$\mathsf{q}\!\!\uparrow$}}
\newcommand{\bvtatidrulein}{\mbox{$\mathsf{ai}\!\!\downarrow$}}
\newcommand{\bvtatiurulein}{\mbox{$\mathsf{ai}\!\!\uparrow$}}
\newcommand{\bvtswirulein }{\mathsf{s}}
\newcommand{\bvtpludrulein}{\mbox{$\mathsf{p}\!\!\downarrow$}}
\newcommand{\bvtpluurulein}{\mbox{$\mathsf{p}\!\!\uparrow$}}

%%% IN LINE rule names derivable in connection with lambda caclulus

%%%%%%%%%%%%%%%%% Derivations of \BVT
\newcommand{\bvtDder}{\mathcal{D}} % derivation names
\newcommand{\bvtEder}{\mathcal{E}}
\newcommand{\bvtPder}{\mathcal{P}} % proof names
\newcommand{\bvtQder}{\mathcal{Q}}
\newcommand{\bvtInfer}[2]{#1: #2}
\newcommand{\bvtJudGen}[2]{\vdash_{#1 % system of reference
}^{#2 % subset of rules
}}

\newcommand{\bvtJ}{\vdash}

 % tieni

%%%%%%%%%%%%%%%%% Map from lambda calculus to deep inference
%\newcommand{\mapLcToDi}[2]
 %          {\llparenthesis\,\! #1\, \rrparenthesis_{#2}} % #2 is out channel
%\newcommand{\mapLcToDiR}[2]
%           {\mapLcToDi{#1}{#2}^{\textrm{R}}} % #2 is out channel
%\newcommand{\mapLcToDiL}[2]
%           {\mapLcToDi{#1}{#2}^{\textrm{L}}} % #2 is out channel
%\newcommand{\mapLcToDiS}[2]
%           {\llbracket\,\! #1\, \rrbracket_{#2}} % #2 is out channel
\newcommand{\atmo}{o}
\newcommand{\atmp}{p}
\newcommand{\atmq}{q}
\newcommand{\atmr}{r}
\newcommand{\atms}{s}
\newcommand{\atmt}{t}
\newcommand{\natmo}{\vlne\atmo}
\newcommand{\natmp}{\vlne\atmp}
\newcommand{\natmq}{\vlne\atmq}
\newcommand{\natmr}{\vlne\atmr}
\newcommand{\natms}{\vlne\atms}
\newcommand{\natmt}{\vlne\atmt}

\hyphenation{
  co-multi-pli-ca-tion
}
\newcommand{\TITLE}{Extending a system in the calculus of structures\\ with a self-dual quantifier}

\title
{\TITLE\\
{\small Luca Roversi}\\
{\small Universit\`a di Torino --- Dipartimento di Informatica\footnote
{{\it E-mail}: luca.roversi@unito.it
}}}

\begin{document}
\date{}
\maketitle

\begin{abstract}
We recall that $\SBV$, a proof system developed under the methodology of
deep inference, extends multiplicative linear logic with the self-dual non-commutative logical operator \OpNameSeq. We introduce $\SBVT$ that extends $\SBV$ by adding the self-dual quantifier \OpNameRen. The system $ \SBVT $ is consistent because we prove that (the analogous of) cut elimination holds for it. Its new logical operator \OpNameRen operationally behaves as a binder, in a way that the interplay between \OpNameSeq, and \OpNameRen\ can model $\beta$-reduction of \llcxlinlamcalc\ inside the cut-free subsystem $\BVT$ of $\SBVT$. The long term aim is to keep developing a programme whose goal is to give pure logical accounts of computational primitives under the proof-search-as-computation analogy, by means of minimal, and incremental extensions of $ \SBV $.
\end{abstract}
\section{Introduction.}
\label{section:Introduction}
This is a work in structural proof-theory. We extend $\SBV$ \cite{Gugl:06:A-System:kl}, the paradigmatic system of the deep inference methodology to design proof systems.
%%%%%%%%%%%
\paragraph{Deep inference (\DI).} One of the main aspects of \DI is that logical systems can be designed as they were rewriting systems, namely, systems with rules that apply \emph{deeply} inside terms, or, equivalently, in any suitable context. We must read ``deep'' as opposed to 
``shallow''. Rules of sequent and natural deduction systems are shallow because they
build proofs whose form mimics the one of formulas.
Thanks to the deep application of its rules, $\BV$ substantially extends
multiplicative linear logic ($\MLL$) \cite{GirardTaylorLafont89} with the non commutative binary operator \OpNameSeq, whose logical properties are strictly connected to the expressiveness of $\BV$ itself.
Any limits we might put on the application depth of $\BV$ rules would yield a strictly less expressive system \cite{Tiu:06:A-System:ai} indeed. An extension of $\BV$, by means of linear logic exponentials \cite{GuglStra:01:Non-comm:rp,GuglStra:02:A-Non-co:lq,
DBLP:journals/mscs/GuglielmiS11,DBLP:journals/tocl/StrassburgerG11} is $\NEL$, whose provability is undecidable \cite{Stra:03:System-N:mb}.
%%%%%%%%%%%%
\paragraph{Contributions, and motivations.}
We introduce $\SBVT$. It is $\SBV$ plus a quantifier that we identify as \OpNameRen, which abbreviates ``\textsf{S}elf-\textsf{d}ual \textsf{q}uantifier''. The relevant feature of \OpNameRen is to bind variable names of $\SBVT$ only. The consequence is twofold. First, we do not need to classify \OpNameRen as either an existential, or a universal quantifier. Indeed, binding variable names only, it never requires to distinguish if the quantification is over a variable which we can think of as an assumption or as a conclusion. Hence, a second consequence is that \OpNameRen naturally becomes self-dual. So, $ \SBVT $ can be viewed as a minimal extension of $ \SBVT $ by means of a logical operator whose instances identify regions of formulas where specific variable names can essentially change freely.
\par
The work may be viewed as divided in two parts. 
The first is about proving that $ \SBVT $ is consistent. 
Namely, $ \SBVT $ enjoys Splitting (Section~\ref{section:Splitting theorem of BVT}) which
identifies the subset $\BVT$ of $\SBVT$ which plays the role of cut-free fragment.
\par
The second part of the work gives to \OpNameRen an operational semantics. Exploiting that \OpNameRen is a binder, we show that its interplay with \OpNameSeq makes proof-search inside $\BVT$ complete \wrt the basic functional computation expressed by \llcxlinlamcalc. We recall that functions \llcxlinlamcalc\ represents use their arguments exactly once in the course of the evaluation. So, the set of functions it can express is quite limited, but large enough to let the decision about which is the normal form of two \llcxlinlamterms\ a \emph{polynomial time complete} problem \cite{Mairson:2003JFP}.
Completeness amounts to first defining an embedding
$\mapLcToDi{\cdot}{\cdot}$ from \llcxlinlamterms\ to formulas of
$\BVT$ (Section~\ref{section:Completeness of SBVT and BVT}.) 
Then, completeness states that, for every \llcxlinlamterm\ $\llcxM$, and every atom $\atmo$, which plays the role of an output-channel, if $\llcxM$ reduces to $\llcxN$, then there
is a derivation $ \bvtDder $ of $\BVT$, that derives the conclusion $\mapLcToDi{\llcxM}{\atmo}$
from the assumption $\mapLcToDi{\llcxN}{\atmo}$.
(Theorem~\ref{theorem:Completeness of SBVT}.) 
For example, let us recall a possible encoding of boolean values, and of boolean negation:
\begin{center}
\fbox{
{\small
\begin{minipage}{.95\textwidth}
\[
\llcxNot\equiv
\llcxF{\llcxZ}
      {\llcxF{\llcxX}
             {\llcxF{\llcxY}
                    {\llcxA{\llcxA{\llcxZ}
                                  {\llcxY}}
                           {\llcxX}}}}
\qquad\qquad
\llcxTT\equiv
\llcxF{\llcxW}
      {\llcxF{\llcxZ}
             {\llcxA{\llcxW}
                    {\llcxZ}}}
\qquad\qquad
\llcxFF\equiv
\llcxF{\llcxW}
      {\llcxF{\llcxZ}
             {\llcxA{\llcxZ}
                    {\llcxW}}}
\]
\vspace{-.3cm}
\end{minipage}
} %\small
}%\fbox
\end{center}
Figure~\ref{figure:Computing Not applied TT BVT} shows (part of) a non trivial example of completeness. We have a derivation of $\BVT$ whose conclusion encodes $\llcxA{\llcxNot}{\llcxTT}$, while the premise encodes its
$\beta$-reduct
$\llcxF{\llcxX}{\llcxF{\llcxY}{\llcxA{\llcxA{\llcxTT}{\llcxY}}{\llcxX}}}$.
%%%%%%%%%%%%%%%%%
\begin{figure}
\scalebox{.79}{
%\fbox{
  \begin{minipage}{\textwidth}
  \input{example-not-applied-to-tt}
  \end{minipage}
%  }%\fbox
 }%\scalebox
\caption{Computing \llcxlamterm\ $\llcxA{\llcxNot}{\llcxTT}$ in $\BVT$.}
\label{figure:Computing Not applied TT BVT}
\end{figure}
%%%%%%
\par
Finally, showing completeness means we keep developing a programme whose goal is to give pure logical accounts of computational primitives under the proof-search-as-computa\-tion analogy, by means of minimal extensions of $\SBV$.
This programme begins in \cite{Brus:02:A-Purely:wd}. It shows that \OpNameSeq\ soundly, and completely models $ \CCSR $, the restriction of Milner $\CCS$ \cite{Miln:89:Communic:qo} 
to a fragment that contains sequential, and parallel composition only.
%%%%%%%%%%
\paragraph{Related works.}
This work directly relates to \cite{Roversi:2010-LLCexDI}, and \cite{Roversi:TLCA11} as follows. 
First here we choose a better terminology. Current ``self-dual quantifier'' \OpNameRen were dubbed as ``renaming'' in both \cite{Roversi:2010-LLCexDI}, and \cite{Roversi:TLCA11}, putting too much emphasis about its operational meaning. Moreover, this work
(i) cleans up the definition, and the properties of \OpNameRen,
(ii) generalizes the statements of some principal property, correcting non-crucial flows in their proofs,
(iii) states and proves deduction and standardization properties,
(iv) includes details of many proofs of the given statements,
(v) simplifies the map $\mapLcToDi{\cdot}{\cdot}$ from \llcxlinlamterms\ to formulas of $ \BVT $ dropping any reference to explicit substitutions inside \llcxlinlamterms, which was, instead, mandatory in \cite{Roversi:2010-LLCexDI,Roversi:TLCA11},
(vi) among the conclusions (Section~\ref{section:Conclusions and future work}), anticipates that $\BVT$ can be complete, and not only sound, \wrt\ a suitable extension of the above fragment $\CCS$ of Milner $\CCS$.
\par
Besides \cite{Brus:02:A-Purely:wd}, further related works are \cite{BrunMcKi:08:An-Algor:wq,Guenot:2013}, and \cite{DBLP:conf/concur/BakelV09}.
\par
The former restates natural deduction of the negative fragment of intuitionistic logic into a \DI\ system from which extracting an algebra of combinators where interpreting \llcxlamterms. So, the connection is the aim of giving a computational interpretation to a \DI system. Further investigation on the computational nature of deep inference system is in \cite{Guenot:2013}. It shows relations among lambda calculi with explicit substitutions, and intuitionistic systems redefined in accordance with the deep inference approach to proof theory. Specifically, \cite{Guenot:2013} shows the impact on the design of $\lambda$-calculi with explicit substitutions  of intuitionistic logic reworked in terms of nested sequents or of calculus of structures,  under both the proofs-as-programs, and fromulas-as-programs paradigms.
\par
Finally, \cite{DBLP:conf/concur/BakelV09} inspired the two-arguments map $\mapLcToDi{\cdot}{\cdot}$ from \llcxlinlamterms\ to formulas of  $\BVT$. Anticipating a bit the content of Section~\ref{section:From llcxlinlamcalc llcxwithexs to structures}, the definition of the basic clause of $\mapLcToDi{\cdot}{\cdot}$ is
$\vlstore{\vlsbr<\llcxX;\natmo>}
\mapLcToDi{\llcxX}{\atmo} 
= \vlread $.
Intuitively, the \llcxlinlamcalc\ variable $\llcxX$ in
$\mapLcToDi{\cdot}{\cdot}$ becomes the name of an input channel to the left of
the occurrence $ \vlse $ of \OpNameSeq. The input channel is forwarded to the output channel $\atmo$ in analogy with the \emph{forwarder} 
$
\vlstore{
\vlsbr\llcxX(\vlone)\,\vldot\,\natmo <\vlone>
}
\textlbrackdbl\llcxX\textrbrackdbl_\atmo 
= \vlread$
which comes from \cite{HondaYoshida:TCS1995}, and which is one of the defining clauses of the \emph{output-based embedding} of
\emph{standard} \llcxlamcalc\ \emph{\llcxwithexs} into $\pi$-calculus
\cite{DBLP:conf/concur/BakelV09}. So, $\SBV$ can model a forwarder, the basic input/output communication flow that \llcxlamvars\ realize. The introduction of \OpNameRen allows to model any \emph{on-the-fly renaming} of channels that serves to model the substitution of a term for a bound variable, namely, the linear $\beta$-reduction process of \llcxlinlamcalc.
%%%%%%%%%
\paragraph{Road map.}
Section~\ref{section:Systems SBVT and BVT} introduces the extension $\SBVT$ ($\BVT$) of $\SBV$ ($\BV$).
Section~\ref{section:Splitting theorem of BVT} 
proves that $\SBVT$ is consistent by extending the proof of (the analogous of) \cutelimination for $ \SBV $ to $ \SBVT $.
Section~\ref{section:From llcxlinlamcalc llcxwithexs to structures} recalls \llcxlinlamcalc, and defines the embedding of its terms to formulas of $ \BVT $.
Section~\ref{section:Completeness of SBVT and BVT} shows the completeness of $ \BVT $ \wrt \llcxlinlamcalc, namely it shows that every computation in the latter corresponds to a proof-search in the former.
Section~\ref{section:Conclusions and future work} comments about the lack of a reasonable soundness of $ \BVT $ \wrt to \llcxlamcalc, and points to future work.
\section{Systems $\SBVT$ and $\BVT$}
\label{section:Systems SBVT and BVT}
We recall and clean-up the definitions of \cite{Roversi:2010-LLCexDI,Roversi:TLCA11}.
\paragraph{Structures.}
Let $\atma, \atmb, \atmc, \ldots$ denote the elements of a countable set of
\dfn{positive propositional variables}. Let $\natma, \natmb, \natmc, \ldots$
denote the elements of a countable set of \dfn{negative propositional variables}.
The set of \dfn{names}, which we range over by $\atmLabL, \atmLabM$, and
$\atmLabN$, contains both positive, and negative propositional variables,
and nothing else. Let $\vlone$ be a constant, different from any name, which we call \dfn{unit}. The set of \dfn{atoms} contains both names and the unit, while the set of
\dfn{structures} identifies formulas of $\SBV$. Structures belong to
the language of the grammar in~\eqref{fig:BVT-structures}.
%%%%%%%%%%%%%%%%
\par\vspace{\baselineskip}\noindent
{\small
  \fbox{
   \begin{minipage}{.974\linewidth}
      \begin{equation}
       \label{fig:BVT-structures}
       \strR  \grammareq \vlone
              \ \ \mid \ \ \atmLabL
              \ \ \mid \ \ \vlne{\strR}
              \ \ \mid \ \ \vlrobrl\strR\vlte\strR\vlrobrr
              \ \ \mid \ \ \vlsbr<\strR;\strR>
              \ \ \mid \ \ \vlsqbrl\strR\vlpa\strR\vlsqbrr
              \ \ \mid \ \ \vlfo{\atma}{\strR}
      \end{equation}
%       \vspace{.2cm}
    \end{minipage}
  }%fbox
}%\small
\par\vspace{\baselineskip}\noindent
%%%%%%%%%%%%%%%%
We use $\strK, \strP, \strR, \strT, \strU, \strV$ to range over structures. As in
$\SBV$, $\vlne{\strR}$ is a \OpNameNot structure, $\vlrobrl\strR\vlte\strT\vlrobrr$
is a \OpNameCop structure, $\vlsbr<\strR;\strT>$ is a \OpNameSeq structure, and
$\vlsqbrl\strR\vlpa\strT\vlsqbrr$ is a \OpNamePar structure. The \OpNameRen
structure $\vlfo{\atma}{\strR}$ is new.
It comes with the proviso that $\atma$ must be a positive atom. Namely, 
$\vlfo{\natma}{\strR}$ is not in the syntax. \OpNameRen\
induces notions of \dfn{free}, and \dfn{bound names}, defined in
\eqref{equation:BV2-free-names}.
%%%%%%%%%
\par\vspace{\baselineskip}\noindent
  \fbox{
    \begin{minipage}{.974\linewidth}
     \vspace{-.4cm}
     {\small
      \begin{equation}
      \label{equation:BV2-free-names}
  	   \begin{gathered}
  \begin{minipage}{.45\textwidth}
        \vlstore{
    && \Set{\atma}=\strFN{\atma}\cup\strFN{\natma}
    \\
    && \atma\in\strFN{\vlne \strR} \textrm{ if }
      \atma \in
      \strFN{\strR}
    \\
    && \atma\in\strFN{\vlsbr(\strR;\strT)} \textrm{ if }
      \atma \in \strFN{\strR} \cup \strFN{\strT}
    \\
    && \atma\in\strFN{\vlsbr<\strR;\strT>} \textrm{ if }
      \atma \in \strFN{\strR} \cup \strFN{\strT}
    \\
    && \atma\in\strFN{\vlsbr[\strR;\strT]} \textrm{ if }
      \atma \in \strFN{\strR} \cup \strFN{\strT}
    \\
    && \atma\in\strFN{\vlfo{\atmb}{\strR}} \textrm{ if }
      \atma \neq \atmb \textrm{ and }
      \atma\in\strFN{\strR}
}%\vlstore
\begin{eqnarray*}
\vlread
\end{eqnarray*}
  \end{minipage}
  %%%% 2nd column
  \begin{minipage}{.45\textwidth}
      \vlstore{
    && \emptyset=\strBN{\atma}\cup\strBN{\natma}
    \\
    && \atma\in\strBN{\vlne \strR} \textrm{ if }
    \atma \in
    \strBN{\strR}
    \\
    && \atma\in\strBN{\vlsbr<\strR;\strT>} \textrm{ if }
    \atma \in
    \strBN{\strR}\cup\strBN{\strT}
    \\
    && \atma\in\strBN{\vlsbr(\strR;\strT)} \textrm{ if }
    \atma \in
    \strBN{\strR}\cup\strBN{\strT}
    \\
    && \atma\in\strBN{\vlsbr[\strR;\strT]} \textrm{ if }
    \atma \in
    \strBN{\strR}\cup\strBN{\strT}
    \\
    && \atma\in\strBN{\vlfo{\atmb}{\strR}} \textrm{ if }
    \atma \equiv \atmb \textrm{ or }
    \atma\in\strBN{\strR}
}
\begin{eqnarray*}
  \vlread
\end{eqnarray*}
  \end{minipage}
\end{gathered}
      \end{equation}
     }%\small
     \vspace{-.2cm}
    \end{minipage}
  }%fbox
\par\vspace{\baselineskip}\noindent
%%%%%%%%%
Finally, \eqref{equation:BV2-structure-substitution} defines
the substitution $\strR\subst{\atma}{\atmb}$ that replaces
(i) the atom $\atma$ for the free occurrences of $\atmb$, and
(ii) the atom $\natma$ for those ones of $\natmb$, in $\strR$.
%%%%%%%%%%%%%
\par\vspace{\baselineskip}\noindent
  \fbox{
    \begin{minipage}{.974\linewidth}
     \vspace{-.3cm}
     {\small
      \begin{equation}
        \label{equation:BV2-structure-substitution}
        \begin{gathered}
  \begin{minipage}{.2\textwidth}
      \vlstore{
        \vlone\subst{\atma}{\atmb} & \equiv &\vlone           \\
        \atmb\subst{\atma}{\atmb}  & \equiv &\atma            \\
        \natmb\subst{\atma}{\atmb} & \equiv &\natma
}
{\setlength{\arraycolsep}{2pt}
\begin{eqnarray*}
 \vlread
\end{eqnarray*}}

  \end{minipage}
  %%%% 2nd column
  \begin{minipage}{.35\textwidth}
      \vlstore{
        \atmc\subst{\atma}{\atmb}  & \equiv &\atmc            \\
        \natmc\subst{\atma}{\atmb} & \equiv &\natmc           \\
        {\vlne\strR\subst{\atma}{\atmb}} & \equiv&
           {\vlne{\strR\subst{\atma}{\atmb}}}                  \\
        {\vlsbr(\strR;\strT)}\subst{\atma}{\atmb} &\equiv&
           {\vlsbr(\strR\subst{\atma}{\atmb};
                   \strT\subst{\atma}{\atmb})}
}
{\setlength{\arraycolsep}{2pt}
\begin{eqnarray*}
 \vlread
\end{eqnarray*}}

  \end{minipage}
  %%%% 3rd column
  \begin{minipage}{.35\textwidth}
      \vlstore{
        {\vlsbr<\strR;\strT>}\subst{\atma}{\atmb} &
              \equiv &
           {\vlsbr<\strR\subst{\atma}{\atmb};
                   \strT\subst{\atma}{\atmb}>}                   \\
        {\vlsbr[\strR;\strT]}\subst{\atma}{\atmb} &
              \equiv &
           {\vlsbr[\strR\subst{\atma}{\atmb};
                   \strT\subst{\atma}{\atmb}]}                   \\
\vlfo{\atmb}{\strR}\subst{\atma}{\atmb} & \equiv & 
          \vlfo{\atmb}{\strR} \\
        \vlfo{\atmc}{\strR}\subst{\atma}{\atmb} & \equiv &
             \vlfo{\atmc}{\strR\subst{\atma}{\atmb}}
}
{\setlength{\arraycolsep}{2pt}
\begin{eqnarray*}
 \vlread
\end{eqnarray*}}
  \end{minipage}
\end{gathered}
      \end{equation}
     }%\small
     \vspace{-.5cm}
    \end{minipage}
  }%fbox
\par\vspace{\baselineskip}\noindent
%%%%%%%%%%%%%
\paragraph{Size of the structures.}
The \dfn{size} $\Size{\strR}$ of $\strR$ is the number of occurrences
of atoms in $\strR$ plus the number of occurrences of \OpNameRen\ that effectively
bind an atom.
%%%%%%%%%%%%%
\begin{example}[\textbf{\textit{Size of the structures}}]
\label{example:Size of the structures}
We have $\vlstore{\vlsbr[\atma;\natma]}\Size{\vlread} =
\vlstore{\vlfo{\atmb}{\vlsbr[\atma;\natma]}}\Size{\vlread}=2$ for we do not count
the occurrence of $\vlfo{\cdot}{\cdot}$. Instead, we count it in
$\vlstore{\vlsbr[\atma;\natma]}
 \vlfo{\atma}{\vlread}$, getting
$\vlstore{\vlfo{\atma}{\vlsbr[\atma;\natma]}} \Size{\vlread}=3$.
\end{example}
%%%%
\paragraph{(Structure) Contexts.}
We denote them by $\strS\vlhole$. A context is a structure with a single hole
$\vlhole$ in it. If $\strS\vlscn{\strR}$, then $\strR$ is a \dfn{substructure} of
$\strS$. We shall tend to shorten
$\vlstore{\vlsbr[\strR;\strU]}\strS\vlscn{\vlread}$ as $\strS{\vlsbr[\strR;\strU]}$
when ${\vlsbr[\strR;\strU]}$ fills the hole $\vlhole$ of $\strS\vlhole$ exactly.
%%%%%%%%%%%%%
\paragraph{Congruence $\approx$ on structures.}
Structures are partitioned by the smallest congruence $\approx$ we obtain as
reflexive, symmetric, transitive and contextual closure of the relation $\sim$ whose
defining clauses are \eqref{align:negation-atom}, through \eqref{align:alpha-symm} here
below.
%%%%%%%%%%%%%
\par\vspace{\baselineskip}\noindent
  \fbox{
    \begin{minipage}{.974\linewidth}
%      \vspace{-.3cm}
     {\small
       \begin{minipage}{.48\textwidth}
      \begin{center}
      %%%%%%
      \smalltitle{\textbf{Negation}}%
{%do not erase
 \vlstore{
   \label{align:negation-atom}
   \vlne{\vlone} &
   \sim & \vlone
   \\
   \label{align:negation-negation}
   \vlne{\vlne \strR} &
   \sim & \strR
   \\
   \label{align:negation-pa}
   \vlne{\vlsbr[\strR;\strT]} &
   \sim & {\vlsbr(\vlne{\strR};\vlne{\strT})}
   \\
   \label{align:negation-co}
   \vlne{\vlsbr(\strR;\strT)} &
   \sim & {\vlsbr[\vlne{\strR};\vlne{\strT}]}
   \\
   \label{align:negation-seq}
   \vlne{\vlsbr<\strR;\strT>} &
   \sim & {\vlsbr<\vlne{\strR};\vlne{\strT}>}
   \\
   \label{align:negation-fo}
   \vlne{\vlfo{\atma}{\strR}} &
   \sim & \vlex{\atma}{\vlne{\strR}}
}%\vlstore
{\setlength{\arraycolsep}{2pt}
\begin{eqnarray}
 \vlread
\end{eqnarray}}
 }%do not erase
      %%%%%%

      %%%%%%
      \smalltitle{\textbf{Symmetry}}%
{%do not erase
  \vlstore{
         \label{align:symm-pa}
         \vlsbr[\strR;\strT] & \sim & \vlsbr[\strT;\strR]
         \\
         \label{align:symm-co}
         \vlsbr(\strR;\strT) & \sim & \vlsbr(\strT;\strR)
  }%\vlstore
{\setlength{\arraycolsep}{2pt}
\begin{eqnarray}
 \vlread
\end{eqnarray}}
}%do not erase
      %%%%%%
%
      %%%%%%
%       \input{BV2-structure-equivalences-singleton}
      %%%%%%
      \end{center}
\end{minipage}
%%%% 2nd column
\begin{minipage}{.5\textwidth}
      \begin{center}
%         \vspace{2\baselineskip}
      %%%%%%
      \smalltitle{\textbf{Associativity}}%
{%do not erase
 \vlstore{
   \label{align:assoc-co}
   \vlsbr(\strR;(\strT;\strV))
   & \sim &
   \vlsbr((\strR;\strT);\strV)
   \\
   \label{align:assoc-se}
   \vlsbr<\strR;<\strT;\strV>>
   & \sim &
   \vlsbr<<\strR;\strT>;\strV>
   \\
   \label{align:assoc-pa}
   \vlsbr[\strR;[\strT;\strV]]
   & \sim &
   \vlsbr[[\strR;\strT];\strV]
 }%\vlstore
{\setlength{\arraycolsep}{2pt}
\begin{eqnarray}
 \vlread
\end{eqnarray}}
}%do not erase
      %%%%%%

      %%%%%%
      \smalltitle{\textbf{Unit}}%
{%do not erase
  \vlstore{
   \label{align:unit-co}
          \vlsbr(\vlone;\strR)  & \sim  &\strR\\
   \label{align:unit-seq}
          \vlsbr<\vlone;\strR>  & \sim  &\vlsbr<\strR;\vlone>
                                           \sim  \strR\\
   \label{align:unit-pa}
          \vlsbr[\vlone;\strR]  & \sim  &\strR
  }%\vlstore
{\setlength{\arraycolsep}{2pt}
\begin{eqnarray}
 \vlread
\end{eqnarray}}
}%do not erase
      %%%%%

      %%%%%%
      \smalltitle{$\alpha$-\textbf{rule}}%
{%do not erase
    \vlstore{
%         \label{align:alpha-intro-vlone}
%	  \vlfo{\atma}{\vlone} &
%	  \sim \vlone\\
	\label{align:alpha-intro}
	  \vlfo{\atma}{\strR} &
	  \sim & \strR
	  \quad\quad\ \textrm{ if } \atma\not\in\strFN{\strR}\\
        \label{align:alpha-varsub}
	  \vlfo{\atma}{\strR\subst{\atma}{\atmb}} &
	  \sim & \vlfo{\atmb}{\strR}
	  \quad \textrm{ if } \atma\not\in\strFN{\strR}
	  \\
        \label{align:alpha-symm}
	  \vlfo{\atma}{\vlfo{\atmb}{\strR}} &
	  \sim & \vlfo{\atmb}{\vlfo{\atma}{\strR}}
%         \label{align:alpha-extrusion}
%	  \vlfo{\atma}{\vlsbr[\strR;\strT]} &
%	  \sim \vlsbr[\vlfo{\atma}{\strR};\strT]
%	  & \textrm{ if } \atma\not\in\strFN{\strT}
         }%\vlstore
{\setlength{\arraycolsep}{2pt}
\begin{eqnarray}
 \vlread
\end{eqnarray}}
}%do not erase
      %%%%%%
      \end{center}
\end{minipage}
     }%\small
     \vspace{-.5cm}
    \end{minipage}
  }%fbox
\par\vspace{\baselineskip}\noindent
%%%%%%%%%%%%%
\emph{Contextual closure} means that ${\strS{\vlscn{\strR}}} \approx {\strS{\vlscn{\strT}}}$ whenever $\strR \approx \strT$.
% The notations $\strSeq{\strR}, \strSeq{\strT}$ and
% $\strSeq{\strU}$ in Figure~\ref{fig:BVT-structure equivalence} stand for finite,
% non-empty sequences of structures with elements separated by $\vlte,\ \vlse$, or
% $\vlpa$ as appropriate.
We remark that \OpNameRen\ is self-dual like \OpNameSeq is. When introducing the logical rules we shall clarify why. Thanks to \eqref{align:alpha-symm}, we abbreviate 
$ \vlfo{\atma_n}{\vldots\vlfo{\atma_1}{\strR}\vldots} $ as  $ \vlfo{\vec{\atma}}{\strR}$, where we may also interpret $ \vec{\atma} $ as one of the permutations of $ \atma_1, \ldots, \atma_n $.
%%%%%%%%%%%%%
\paragraph{The system $\SBVT$.}
It contains the set of inference rules in \eqref{fig:SBVT} here below.
Every rule has form $\vldownsmash{\vlinf{\bvtrhorule}{}{\strR}{\strT}}$, \dfn{name}
$\bvtrhorule$, \dfn{premise} $\strT$, and \dfn{conclusion} $\strR$.
\par\vspace{\baselineskip}\noindent
  \fbox{
    \begin{minipage}{.974\linewidth}
%      \vspace{-.3cm}
     {\small
      \begin{equation}
        \label{fig:SBVT}
        {\small
\begin{tabular}{ccc}
%%%%%%%%%%
{$\vlinf{\bvtatidrule}{}{\vlsbr[\atma;\natma]}
                        {\vlone}$}
&
&
\qquad
{$\vlinf{\bvtatiurule}{}{\vlone}
                       {\vlsbr(\atma;\natma)}$}
\\
\\
{$\vlinf{\bvtseqdrule}{}
        {\vlsbr[<\strR;\strT>;<\strU;\strV>]}
        {\vlsbr<[\strR;\strU];[\strT;\strV]>}$}
&\qquad
{$\vlinf{\bvtswirule}{}
        {\vlsbr[(\strR;\strU);\strT]}
        {\vlsbr([\strR;\strT];\strU)}$}
&\qquad
{$\vlinf{\bvtsequrule}{}
        {\vlsbr<(\strR;\strU);(\strT;\strV)>}
        {\vlsbr(<\strR;\strT>;<\strU;\strV>)}$}
%%%%%%%%%
\\ \\
%%%%%%%%%%
{$\vlinf{\bvtrdrule}{}
       {\vlsbr[{\vlfo{\atma}{\strR}};{\vlex{\atma}{\strU}}]}
       {\vlfo{\atma}{\vlsbr[\strR;\strU]}}$}
&\qquad&
\qquad
{$\vlinf{\bvtrurule}{}
       {\vlex{\atma}{\vlsbr(\strR;\strU)}}
       {\vlsbr({\vlex{\atma}{\strR}};{\vlfo{\atma}{\strU}})}$}
%%%%%%%%%
\end{tabular}
}
      \end{equation}
     }%\small
%      \vspace{-.5cm}
    \end{minipage}
  }%fbox
\par\vspace{\baselineskip}\noindent
Every (instance of) inference rule can be used in any context, namely as
$
%  \vlupsmash{
 \vlinf{\bvtrhorule}{}
       {\strS\vlscn{\strR}}
       {\strS\vlscn{\strT}}
%  }
$ for any $\strS\vlhole$.
This means that, if a structure $\strU$ matches $\strR$ in $\strS\vlhole$,
it can be rewritten to $\strS\vlscn{\strT}$. This justifies calling $\strR$ the
\dfn{redex} of $\bvtrhorule$, and $\strT$ its \dfn{reduct}.
%%%
\paragraph{\dfn{Up} and \dfn{down} fragments of $\SBVT$.}
The set $\Set{\bvtatidrulein, \bvtswirulein, \bvtseqdrulein,
\bvtrdrulein}$ is the \dfn{down fragment} $\BVT$ of $\SBVT$.
The \dfn{up fragment} is $\Set{\bvtatiurulein,\bvtswirulein, \bvtsequrulein,
\bvtrurulein}$. So $\bvtswirulein$ belongs to both.
The rule $\bvtatiurulein$ \emph{plays the role of the cut rule of sequent calculus}.
The down rule for \OpNameRen\ restricts the following one \cite{Strasburger:2008qf}:
{\small
\[
\renewcommand{\vlfo}[2]{\forall{#1}.{#2}}
\renewcommand{\vlex}[2]{\exists{#1}.{#2}}
\vlinf{}{}
      {\vlsbr[{\vlfo{\atma}{\strR}};{\vlex{\atma}{\strU}}]}
      {\vlfo{\atma}{\vlsbr[\strR;\strU]}}
\qquad
\]
}%\small
to binding variable names only.
Limiting \OpNameRen to abstract variables implies that the difference between existentially, and universally quantified names disappears. The reason is that the \cutelimination will have no need to differentiate between the substitution of an existentially quantified variable for a universally quantified one, or vice versa. So, \OpNameRen becomes self-dual.
%%%%%%%%%%%%5
\paragraph{Derivations vs. proofs.}
A \dfn{derivation} in $\SBVT$ is either a structure or an instance of the above
rules or a sequence of two derivations. Both $\bvtDder$, and $\bvtEder$ will range
over derivations. The topmost structure in a derivation is
its \dfn{premise}. The bottommost is its \dfn{conclusion}. The \dfn{length}
$\Size{\bvtDder}$ of a derivation $\bvtDder$ is the number of rule instances in
$\bvtDder$. A derivation $\bvtDder$ of a structure $\strR$ in $\SBVT$ from a
structure $\strT$ in $\SBVT$, only using a subset $\SBVsub\subseteq\SBVT$ is
$\vlderivation                  {
\vlde{\bvtDder}{\SBVsub}{\strR}{
\vlhy                   {\strT}}}
$.
The equivalent \emph{space-saving} form we shall tend to use is
$\bvtInfer{\bvtDder}{\strT \bvtJudGen{\SBVsub}{}\strR}$.
The derivation
$\vlupsmash{
\vlderivation                  {
\vlde{\bvtDder}{\SBVsub}{\strR}{
\vlhy                   {\strT}}}}$
is a \dfn{proof} whenever $\strT\approx \vlone$. We denote it as
$\vlupsmash{
\vlderivation                  {
\vlde{\bvtPder}{\SBVsub}{\strR}{
\vlhy                   {\vlone}}}}$, or
$\vlupsmash{\vlproof{\bvtPder}{\SBVsub}{\strR}}$,
or $\bvtInfer{\bvtPder}{\ \bvtJudGen{\SBVsub}{}\strR}$. Both $\bvtPder$, and
$\bvtQder$ will range over proofs.
In general, we shall drop $\SBVsub$ when clear from the context.
In a derivation, we write
$
\vliqf{\bvtrhorule_1,\ldots,\bvtrhorule_m,n_1,\ldots,n_p}{}{\strR}{\strT}
$,
whenever we use the rules $\bvtrhorule_1,\ldots,\bvtrhorule_m$ to derive $\strR$
from $\strT$ with the help of $n_1,\ldots,n_p$ instances of
\eqref{align:negation-atom}, \ldots, \eqref{align:symm-co}.
To avoid cluttering derivations, whenever possible, we shall tend to omit the use of
negation axioms \eqref{align:negation-atom}, \ldots, \eqref{align:negation-fo},
associativity axioms \eqref{align:assoc-co}, \eqref{align:assoc-se},
\eqref{align:assoc-pa}, and symmetry aximos \eqref{align:symm-pa},
\eqref{align:symm-co}. This means we avoid writing all
brackets, as in $\vlsbr[\strR;[\strT;\strU]]$, in favor of
$\vlsbr[\strR;\strT;\strU]$, for example.
Finally if, for example, $q>1$ instances of some axiom $(n)$ of
\eqref{align:negation-atom}, \ldots, \eqref{align:alpha-symm} occurs among $n_1,\ldots,n_p$, then we write $(n)^q$.
%%%%%%%%%
\paragraph{Admissible and derivable rules.}
A rule $\bvtrhorule$ is \dfn{admissible} for the system $\SBVT$ if
$\bvtrhorule\notin\SBVT$ and, for every derivation $\bvtDder$ \ST\
$\bvtInfer{\bvtDder}{\strT \bvtJudGen{\Set{\bvtrhorule}\cup\SBVT}{}\strR}$, there is
a derivation $\bvtDder'$ \ST\ $\bvtInfer{\bvtDder'}{\strT
\bvtJudGen{\SBVT}{}\strR}$. A rule $\bvtrhorule$ is \dfn{derivable} in
$\SBVsub\subseteq\SBVT$ if $\bvtrhorule\not\in\SBVsub$ and, for every instance
$ %\vldownsmash
 {\vlinf{\bvtrhorule}{}{\strR}{\strT}}$, there exists a derivation
$\bvtDder$ in
$\SBVsub$ \ST\ $\bvtInfer{\bvtDder}{\strT \bvtJudGen{\SBVsub}{}\strR}$.
%%%
\par
The rules in \eqref{equation:SBVT-core-set-derivable-rules} recall a core set of
rules derivable in $\SBV$, hence in $\SBVT$.
%%%%%%%%%%%%
\par\vspace{\baselineskip}\noindent
  \fbox{
    \begin{minipage}{.974\linewidth}
     {\small
      \begin{equation}
         \label{equation:SBVT-core-set-derivable-rules}
    	 \begin{tabular}{ccccccccc}
$\vlinf{\bvtintdrule}{}{\vlsbr[\strR;\nstrR]}
                       {\vlone}$
% &\quad&%atomic transitivity
% $\vlinf{\bvtatrdrule}{}{\vlsbr[<\strR;\natma>;<\atma;\strT>]}
%                        {\vlsbr<\strR;\strT>}$
% &\quad&%transitivity
% $\vlinf{\bvttradrule}{(*)}
%                        {\vlsbr[\strS<\strR;\natma>;<\atma;\strT>]}
%                        {\vlsbr{\strS<\strR;\strT>}}$
&\qquad\qquad\qquad\qquad&
$\vlinf{\bvtmixprule}{}{\vlsbr<\strR;\strT>}
                       {\vlsbr(\strR;\strT)}$
\\\\
$\vlinf{\bvtinturule}{}{\vlone}
                       {\vlsbr(\strR;\nstrR)}$
% &\quad&
% $\vlinf{\bvtatrurule}{}{\vlsbr<\strR;\strT>}
%                        {\vlsbr(<\strR;\natma>;<\atma;\strT>)}$
% &\quad&
% $\vlinf{\bvttradrule}{(*)}
%                        {\vlsbr{\strS<\strR;\strT>}}
% 		       {\vlsbr(\strS<\strR;\natma>;<\atma;\strT>)}$
&\qquad\qquad\qquad\qquad&
$\vlinf{\bvtpmixrule}{}{\vlsbr[\strR;\strT]}
                       {\vlsbr<\strR;\strT>}$
\end{tabular}
      \end{equation}
     }%\small
    \end{minipage}
  }%fbox
\par\vspace{\baselineskip}\noindent
%%%%%%%%%%%
\paragraph{General interaction down and up.}
In \eqref{equation:SBVT-core-set-derivable-rules}, \dfn{general interaction up} is $\bvtinturulein$, derivable in the set
$\Set{\bvtatiurulein, \bvtswirulein, \bvtsequrulein, \bvtrurulein}$, reasoning by
induction on $\Size{\strR}$, and proceeding by cases on the form of $\strR$. We
show the few steps of the proof, relative the case \OpNameRen:
{\small
\[
\vlderivation                                                        {
\vliq{\eqref{align:alpha-intro}}{}{\vlone}        {
\vlin{\bvtinturule}{\textrm{ind. hypothesis}}
                   {\vlex{\atma}{\vlone}}                          {
\vlin{\bvtrurule}  {}{\vlex{\atma}{\vlsbr(\strR;\nstrR)}}         {
\vliq{\eqref{align:negation-fo}}{}
                  {\vlsbr(\vlfo{\atma}{\strR}
                         ;\vlex{\atma}{\nstrR})}                 {
\vlhy{\vlsbr(\vlfo{\atma}{\strR}
            ;\vlne{\vlex{\atma}{\strR}})}                        }}}}}
\]

}%\small
Similar arguments apply to the cases relative to \OpNameNot, \OpNameCop,
\OpNameSeq, and \OpNamePar. Symmetrically, \dfn{general interaction down}
$\bvtintdrulein$ is derivable in $\Set{\bvtatidrulein, \bvtswirulein,
\bvtseqdrulein, \bvtrdrulein}$.
%%%%%%%%%%%
\paragraph{General \OpNameSeq-transitive up, and down rules.}
In \eqref{equation:SBVT-core-set-derivable-rules} $\bvttradrulein$ is derivable by reasoning inductively on the size of $\strS\vlhole$, and proceeding by cases on its structure, under
the proviso $(*)$ which says that $(\Set{\atma}\cup\strFN{\strT})\cap\strBN{\strS\vlhole}=\emptyset$.
If $\strS\vlhole\approx\vlhole$, then $\bvttradrulein$ is:
{\small
\[
\vlderivation                                                         {
\vliq{\eqref{align:unit-seq}}
                   {}{\vlsbr[<\strR;\atma>;<\natma;\strT>]         } {
\vlin{\bvtseqdrule}{}{\vlsbr[<\strR;\atma >;<\vlone;<\natma;\strT>>]} {
\vliq{\eqref{align:unit-pa}
     ,\eqref{align:unit-seq}}
                   {}{\vlsbr<[\strR;\vlone];[\atma ;<\natma;\strT>]>} {
\vlin{\bvtseqdrule}{}{\vlsbr<\strR;[<\atma;\vlone>;<\natma;\strT>]>}  {
\vliq{\bvtatidrule
     ,\eqref{align:unit-pa}
     ,\eqref{align:unit-seq}}
                   {}{\vlsbr<\strR;<[\atma;\natma];[\vlone;\strT]>> } {
\vlhy                {\vlsbr<\strR;\strT>                       }}}}}}}
\]

}%\small
If $\strS\vlhole\approx\vlsbr(\strS'\vlhole;\strU)$, then:
{\small
\[
\vlderivation                                                {
\vlin{\bvtswirule}
  {}
  {\vlsbr[(\strS'<\strR;\natma>;\strU);<\atma;\strT>]}   {
\vlin{\bvttradrule}
  {\textrm{ind. hypothesis}}
  {\vlsbr([\strS'<\strR;\natma>;<\atma;\strT>];\strU)}{
\vlhy{\vlsbr([\strS'<\strR;\strT>];\strU)}              }}}
\]
}%small
If $\strS\vlhole\approx\vlsbr{\vlfo{\atmp}{\strS'\vlhole}}$, then:
{\small
\[
\vlderivation                                                    {
\vliq{\eqref{align:alpha-intro}
  ,\bvtrdrule}
  {}
  {\vlsbr[\vlfo{\atmp}{\strS'<\strR;\natma>};<\atma;\strT>]}{
\vlin{\bvttradrule}
  {\textrm{ind. hypothesis}}
  {\vlsbr\vlfo{\atmp}{[\strS'<\strR;\natma>;<\atma;\strT>]}}{
\vlhy{\vlsbr\vlfo{\atmp}{[\strS'<\strR;\strT>]}}              }}}
\]
}%small
The case with $\strS\vlhole\approx\vlsbr[\strS'\vlhole;\strU]$ is simpler than the
two here above.
%%%%%%%%%%%
\paragraph{Mix rules.}
In \eqref{equation:SBVT-core-set-derivable-rules} both $\bvtmixprulein$, and $\bvtpmixrulein$, show a hierarchy between connectives: \OpNamePar is the lowermost, \OpNameSeq lies in the middle, and \OpNameCop on top \cite{Gugl:06:A-System:kl}.
\dfn{Postfix mix rule} $\bvtmixprulein$ is derivable in $\Set{\bvtsequrulein}$.
\par\vspace{\baselineskip}
Finally, some properties that formalize simple derivations we can always build inside $ \BVT $. The first one says when two structures $\strR$, and
$\strT$ of $\BVT$ can be moved inside a context so that they get one aside the
other.
%%%%%%%%%
\begin{proposition}[\textit{\textbf{Context extrusion}}]
\label{proposition:Context extrusion}
$\vlstore{\vlsbr[\strR;\strT]}\strS\vlread\bvtJudGen{\Set{\bvtseqdrule, \bvtrdrule
, \bvtswirule}}{}\vlsbr[\strS\vlscn{ \strR }
;\strT]$, for every $\strS, \strR, \strT$.
\end{proposition}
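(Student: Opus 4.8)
The plan is to argue by structural induction on the context $\strS\vlhole$, exactly in the style of the derivability proof of $\bvttradrulein$ recalled above, keeping structures in negation normal form so that \OpNameNot\ binds only names and the hole never lies beneath a negation. The cases to treat are then the empty context and the contexts whose outermost constructor around the hole is \OpNamePar, \OpNameCop, \OpNameSeq\ (with the hole on either side), or \OpNameRen. In the base case $\strS\vlhole\approx\vlhole$ both the premise $\strS\vlscn{\vlsbr[\strR;\strT]}$ and the conclusion $\vlsbr[\strS\vlscn{\strR};\strT]$ collapse to $\vlsbr[\strR;\strT]$, so the empty derivation suffices. In every inductive case I would first apply the induction hypothesis \emph{deeply}, i.e.\ underneath the outermost constructor, turning $\strS'\vlscn{\vlsbr[\strR;\strT]}$ into $\vlsbr[\strS'\vlscn{\strR};\strT]$, and then fire a single rule of $\Set{\bvtseqdrule,\bvtrdrule,\bvtswirule}$ to extrude $\strT$ past that constructor.

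Concretely, when $\strS\vlhole\approx\vlsbr[\strS'\vlhole;\strU]$ the induction hypothesis already yields the target up to associativity and commutativity of \OpNamePar, so no rule from the set is needed. When $\strS\vlhole\approx\vlsbr(\strS'\vlhole;\strU)$, the hypothesis gives $\vlsbr([\strS'\vlscn{\strR};\strT];\strU)$ and a single $\bvtswirule$ produces $\vlsbr[(\strS'\vlscn{\strR};\strU);\strT]$, which is the conclusion. When $\strS\vlhole\approx\vlsbr<\strS'\vlhole;\strU>$ (and symmetrically with the hole on the right), the hypothesis gives $\vlsbr<[\strS'\vlscn{\strR};\strT];\strU>$; reading $\strU$ as $\vlsbr[\strU;\vlone]$ and the target's $\strT$ as $\vlsbr<\strT;\vlone>$ through the unit laws \eqref{align:unit-seq} and \eqref{align:unit-pa}, one instance of $\bvtseqdrule$ delivers $\vlsbr[<\strS'\vlscn{\strR};\strU>;\strT]$. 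Finally, when $\strS\vlhole\approx\vlfo{\atma}{\strS'\vlhole}$, the hypothesis applied under $\vlfo{\atma}{\cdot}$ gives $\vlfo{\atma}{\vlsbr[\strS'\vlscn{\strR};\strT]}$ and a single $\bvtrdrule$ yields $\vlsbr[\vlfo{\atma}{\strS'\vlscn{\strR}};\vlfo{\atma}{\strT}]$.

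The one delicate point, and the step I expect to be the real obstacle, is this \OpNameRen\ case: $\bvtrdrule$ inevitably leaves $\vlfo{\atma}{\strT}$ where the conclusion requires a bare $\strT$, and the two agree only when $\atma\notin\strFN{\strT}$, in which case the $\alpha$-rule \eqref{align:alpha-intro} gives $\vlfo{\atma}{\strT}\approx\strT$. I would secure this freshness uniformly at the outset: since $\strS\vlhole$ is taken modulo $\approx$, I would use \eqref{align:alpha-varsub} and \eqref{align:alpha-symm} to rename every binder of $\strS\vlhole$ apart from $\strFN{\strT}$, so that at each \OpNameRen-node the bound name does not occur free in $\strT$. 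No analogous condition is needed for $\strR$, which stays inside the scope of every binder of $\strS$ in both the premise and the conclusion, so any capture of its names is preserved. This is precisely the role played by the proviso $(*)$ in the derivability argument for $\bvttradrulein$, and it is what lets the statement be phrased without an explicit side condition.
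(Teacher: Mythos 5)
Your proposal is correct and follows essentially the same route as the paper's proof: induction on the context, with $\bvtswirulein$ for \OpNameCop, $\bvtseqdrulein$ plus unit laws for \OpNameSeq, and $\bvtrdrulein$ for \OpNameRen, where the bound name is $\alpha$-renamed apart from $\strFN{\strT}$ so that \eqref{align:alpha-intro} collapses $\vlfo{\atma}{\strT}$ to $\strT$ --- exactly the ``without loss of generality'' step the paper takes. The only cosmetic difference is that you make the \OpNamePar\ case and the freshness bookkeeping explicit up front, which the paper leaves implicit.
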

\begin{proof}
By induction on
$\Size{\strS\vlhole}$, proceeding by cases on the form of
$\strS\vlhole$.
(Details in Appendix~\ref{section:Proof of proposition:Context extrusion}).
\end{proof}
The following statement highlights the scoping nature of \OpNameRen. For proving it,
it is enough to inspect the behavior of the rules in $ \BVT $.
%%%%%%%%%
\begin{fact}[\textit{\textbf{\OpNameRen is a scoping operator}}]
\label{fact:OpNamRen is a scoping operator}
Let $\atma, \strU$, and $\strV$ be given.
\begin{enumerate}
\item 
If $\bvtInfer{\bvtDder}
             {\strV\bvtJudGen{\BVT}{}\vlfo{\atma}{\strU}}$, then 
there exist $ \strR $, and $ \bvtDder $ \ST\ 
$\bvtInfer{\bvtDder}
          {\vlfo{\atma}{\strR}\bvtJudGen{\BVT}{}\vlfo{\atma}{\strU}}$.
\item 
For every $ \strR $, if $\bvtInfer{\bvtDder}
             {\vlfo{\atma}{\strR}\bvtJudGen{\BVT}{}\vlfo{\atma}{\strU}}$, then 
$\bvtInfer{\bvtDder'}
          {\strR\bvtJudGen{\BVT}{}\strU}$, for some $ \bvtDder' $.
\end{enumerate}
\end{fact}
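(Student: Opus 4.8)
The plan is to read every derivation of $\BVT$ as a rewriting sequence that starts at its conclusion and moves upward to its premise, and to exploit one structural observation about the rules. Inspecting $\bvtatidrule$, $\bvtswirule$, $\bvtseqdrule$, $\bvtrdrule$, every redex is a \OpNamePar structure, so \emph{no rule of $\BVT$ has an outermost \OpNameRen as its redex}. Hence, whenever the current structure has the shape $\vlfo{\atma}{\strP}$, a rule instance can fire only strictly inside the body $\strP$ — in a context of the form $\vlfo{\atma}{\strS'\vlhole}$ — the sole exception being a preliminary congruence step that removes the binder. I will also use that free names can only shrink when moving upward: the reduct of $\bvtatidrule$ erases $\atma$ and $\natma$, those of $\bvtswirule$ and $\bvtseqdrule$ keep the same names, and $\bvtrdrule$ merely binds a name already present; moreover every clause of $\sim$, being an $\alpha$-style identity, preserves $\strFN{\cdot}$. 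Therefore, once $\atma$ leaves $\strFN{\cdot}$ it never re-enters it higher up, and a non-vacuous $\vlfo{\atma}{\strP}$ can never become congruent to a top-level \OpNamePar.

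For the second item I would argue by induction on the number $\Size{\bvtDder}$ of rule instances in $\bvtDder$. In the base case $\vlfo{\atma}{\strR}\approx\vlfo{\atma}{\strU}$, and a compatibility lemma — that $\vlfo{\atma}{\strR}\approx\vlfo{\atma}{\strU}$ implies $\strR\approx\strU$ — supplies the length-zero derivation $\strR\bvtJudGen{\BVT}{}\strU$. For the inductive step I isolate the bottommost rule instance applied to the conclusion $\vlfo{\atma}{\strU}$. By the observation above it acts inside the body, so its reduct is $\vlfo{\atma}{\strU'}$ with $\strU'\bvtJudGen{\BVT}{}\strU$ a single step on bodies; the remaining derivation $\vlfo{\atma}{\strR}\bvtJudGen{\BVT}{}\vlfo{\atma}{\strU'}$ is shorter, so the induction hypothesis yields $\strR\bvtJudGen{\BVT}{}\strU'$, and composing with the isolated step gives $\strR\bvtJudGen{\BVT}{}\strU$.

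For the first item I would read $\bvtDder$ upward from its conclusion $\vlfo{\atma}{\strU}$. As long as $\atma$ stays free, the binder is preserved verbatim, so every intermediate structure, and in particular the premise, already has the form $\vlfo{\atma}{\cdot}$ and I simply take $\strR$ to be its body. The only way the binder can disappear while moving upward is through an application of \eqref{align:alpha-intro}, $\vlfo{\atma}{\strP}\sim\strP$, which requires $\atma\notin\strFN{\strP}$; by the monotonicity of free names, $\atma$ then stays out of $\strFN{\cdot}$ all the way to the premise, so $\atma\notin\strFN{\strV}$. In that case \eqref{align:alpha-intro} rewrites $\strV$ as $\vlfo{\atma}{\strV}$ and I take $\strR:=\strV$, obtaining a derivation $\vlfo{\atma}{\strR}\bvtJudGen{\BVT}{}\vlfo{\atma}{\strU}$ as required.

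I expect the main obstacle to be the bookkeeping around the congruence $\approx$, not the rule analysis. Concretely, the compatibility lemma used in the base case of item~2 — that a chain of $\sim$-steps between $\vlfo{\atma}{\strR}$ and $\vlfo{\atma}{\strU}$ can be reorganised into a chain between the bodies $\strR$ and $\strU$ — must be checked clause by clause, the delicate ones being \eqref{align:alpha-intro}, \eqref{align:alpha-varsub} and the negation clause \eqref{align:negation-fo}, where the outermost binder can momentarily vanish or commute with negation. The cleanest way to discharge it is to prove, by induction on the generation of $\approx$, that $\approx$ commutes with the capture-avoiding removal of a top \OpNameRen; this lemma simultaneously justifies the uniform treatment of vacuous binders that both items silently rely on.
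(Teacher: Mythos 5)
Your proposal is correct and takes the route the paper itself prescribes: the paper supplies no written proof of this Fact beyond the remark that it suffices to inspect the behaviour of the rules of $\BVT$, and your analysis --- every redex of $\bvtatidrulein$, $\bvtswirulein$, $\bvtseqdrulein$, $\bvtrdrulein$ is a \OpNamePar structure, free names only shrink upward, and the congruence $\approx$ cannot extrude a non-vacuous \OpNameRen --- is exactly that inspection carried out in full, with the genuinely delicate point (the compatibility of $\approx$ with removal of a top binder) correctly isolated as the residual lemma. As a small simplification, note that item~1 follows in one line from free-name monotonicity alone: since $\atma\notin\strFN{\vlfo{\atma}{\strU}}$, monotonicity gives $\atma\notin\strFN{\strV}$, so \eqref{align:alpha-intro} lets you take $\strR:=\strV$ directly, without tracking where the binder disappears.
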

The last property says that no new variable can be introduced in the course of a derivation.
%%%%%%%%%%%
\begin{proposition}[\textit{\textbf{$\BVT$ is affine}}]
\label{proposition:BVT is affine}
In every $\bvtInfer{\bvtDder} {\strT \bvtJudGen{\BVT}{} \strR}$, we have
$\Size{\strR} \geq \Size{\strT}$.
\end{proposition}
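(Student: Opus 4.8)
The plan is to prove, by induction on the length $\Size{\bvtDder}$ of $\bvtDder$, that size does not decrease as one moves downwards through $\bvtDder$. A derivation is a structure, a single instance of a rule of $\BVT$ fired in a context, or a composition of two shorter derivations; since $\geq$ is transitive, the composition and the degenerate cases are immediate, and everything reduces to two facts: (i) the congruence $\approx$ leaves $\Size{\cdot}$ invariant, so size is well defined on the $\approx$-classes that structures are taken up to; and (ii) for every rule of $\BVT$ with reduct (premise) and redex (conclusion), filling any context with the conclusion yields a structure of size at least that obtained by filling it with the premise.

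For (i) I would run through the generating clauses \eqref{align:negation-atom}--\eqref{align:alpha-symm} of $\sim$ and check that each preserves both the number of atom occurrences and the number of \emph{effectively binding} occurrences of \OpNameRen; contextual closure then lifts invariance to $\approx$. The \OpNameNot, symmetry, associativity and unit clauses only rearrange connectives or delete $\vlone$, which carries no atom, so they are routine. The delicate clauses are the $\alpha$-rules. In \eqref{align:alpha-intro}, $\vlfo{\atma}{\strR}\sim\strR$ holds exactly when $\atma\notin\strFN{\strR}$, that is, precisely when the erased occurrence of \OpNameRen does not effectively bind and hence contributes $0$ to the size; this is exactly why the qualifier ``effectively'' appears in the definition of $\Size{\cdot}$. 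In \eqref{align:alpha-varsub} the renaming $\strR\subst{\atma}{\atmb}$ is injective on names, so it preserves the atom count, and $\atma\in\strFN{\strR\subst{\atma}{\atmb}}$ iff $\atmb\in\strFN{\strR}$, so the renamed binder is effective iff the original one was; \eqref{align:alpha-symm} merely permutes two binders.

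The subtle point in (ii), and the step I expect to be the main obstacle, is that $\Size{\strS\vlscn{\strZ}}$ does not split into a contribution of the context plus one of $\strZ$ alone: an occurrence of \OpNameRen sitting in $\strS\vlhole$ with the hole in its scope may be effective or not according to the free names that $\strZ$ supplies at the hole. I would quarantine this in a monotonicity lemma: if $\Size{\strK}\geq\Size{\strZ}$ and $\strFN{\strK}\supseteq\strFN{\strZ}$, then $\Size{\strS\vlscn{\strK}}\geq\Size{\strS\vlscn{\strZ}}$ for every context $\strS\vlhole$. This holds because atom occurrences split additively between context and hole; the effective binders \emph{inside} the filler depend only on the filler, so they are governed by $\Size{\strK}\geq\Size{\strZ}$; and the binders of $\strS\vlhole$ whose scope meets the hole can only grow in number as the free names arriving from the hole grow, which is ensured by $\strFN{\strK}\supseteq\strFN{\strZ}$.

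It then suffices to check, for each rule with reduct $\strT'$ and redex $\strR'$, the two hypotheses $\Size{\strR'}\geq\Size{\strT'}$ and $\strFN{\strR'}\supseteq\strFN{\strT'}$, and to invoke the monotonicity lemma with filler $\strK=\strR'$ and $\strZ=\strT'$. For $\bvtswirulein$ and $\bvtseqdrulein$ redex and reduct contain the very same substructures, so size and free names coincide. For $\bvtatidrulein$ one passes from $\vlone$ to $\vlsbr[\atma;\natma]$, which adds two atom occurrences and the single free name $\atma$, so both hypotheses hold (strictly for the size). For $\bvtrdrulein$, the free-name clauses of \eqref{equation:BV2-free-names} give $\strFN{\vlsbr[\vlfo{\atma}{\strR};\vlfo{\atma}{\strU}]}=(\strFN{\strR}\cup\strFN{\strU})\setminus\Set{\atma}=\strFN{\vlfo{\atma}{\vlsbr[\strR;\strU]}}$, so free names are preserved, while a short case analysis on whether $\atma\in\strFN{\strR}$ and whether $\atma\in\strFN{\strU}$ shows that the conclusion carries an effective binder on each side on which $\atma$ occurs, whereas the premise carries a single binder that is effective iff $\atma$ occurs on at least one side; hence the conclusion's size is never smaller. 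With (i), (ii) and the monotonicity lemma established, the induction closes at once.
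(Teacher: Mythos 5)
Your proof is correct and follows the same route as the paper's, whose entire argument is the sentence ``By induction on $\Size{\bvtDder}$, proceeding by cases on its last rule'': you carry out exactly that induction, and the two ingredients you make explicit --- invariance of $\Size{\cdot}$ under $\approx$ (reading occurrences of $\vlone$ as contributing nothing, which is the only reading under which the unit laws preserve size) and the monotonicity of $\Size{\strS\vlscn{\cdot}}$ in both the size and the free names of the filler --- are precisely the details the paper leaves implicit. Nothing further is needed.
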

%%%%%
\begin{proof}
By induction on $\Size{\bvtDder}$, proceeding by cases on its last rule
$\bvtrhorule$.
\end{proof}
\section{Splitting for $\SBVT$}
\label{section:Splitting theorem of BVT}
We recall, and clean the proof of Splitting for $\SBVT$ in \cite{Roversi:2010-LLCexDI,Roversi:TLCA11}.
Splitting can be viewed as a generalization of cut-elimination for sequent calculus-like  systems. Proving Splitting of $\SBVT$ amounts to proving that $\SBVT$, and $\BVT$ are
equivalent, namely that every up-rule is admissible in $\BVT$, or,
equivalently, that we can eliminate every up-rule from any derivation of $\SBVT$.
Since $\bvtatiurulein$ is an up-rule, and it plays the role of the cut rule, proving Splitting means proving also cut-elimination for $ \SBVT $.
\par
The first part of this section traces how Splitting, and some other properties it relies on, works to eliminate $ \bvtrurulein $. The second part, Subsection~\ref{subsection:Details on Splitting}, is for technical eyes interested to the full formal details.
\par
Let us see how Splitting eliminates an occurrence $ (*) $ of $\bvtrurulein$ from a proof $ \bvtPder $ of $ \SBVT $, so focusing on the case that differentiates the proof of Splitting for $ \SBVT $ from the one for $ \SBV $. Let:
{\small
$$\vlderivation                   {
  \vlin{\bvtrurulein}{(*)}
       {\strS\vlsbr{\vlfo{\atma}{(\strR;\strT)}}}      {
  \vlpr{\bvtPder'}{}
       {\strS\vlsbr(\vlfo{\atma}{\strR};\vlfo{\atma}{\strT})}
 }}$$
}%\small
be $\bvtPder$ with $(*)$ the instance of $\bvtrurulein$ we want to eliminate. 
We are going to rewrite $\bvtPder$ to a proof of $\BVT$ with the same conclusion as 
$\bvtPder$, but without $(*)$.
The first step to get rid of $(*)$ is Splitting (Theorem~\ref{theorem:Splitting-ALT}). The instance of Splitting we need, up to some details we can ignore at this level, is:
{\small
$$\textrm{ If }
  \vlderivation                   {
  \vlpr{\bvtQder}{}
       {\strS\vlfo{\atma}{\vlsbr(\strR;\strT)}}
  }
  \textrm{, then }
   \exists \strK, \vec{\atmb} \textrm{ such that  }
   \forall \strV, \textrm{ both  }
   \vlderivation                          {
     \vlde{\bvtDder}{}
 	  {\strS\vlscn{\strV}}{
     \vlhy{\vlfo{\vec{\atmb}}{\vlsbr[\strV;\strK]}}}}
   \textrm{, and }
   \vlderivation                   {
   \vlpr{\bvtQder'}{}
        {\vlsbr[\vlsbr(\strR;\strT);\strK]}}
$$
}%\small
We remark that extracting $\strK$, hidden inside $\bvtDder$, might require many instances of \OpNameRen to emerge, as the outermost occurrence 
$ \vlfo{\vec{\atmb}}{\cdot} $ in the premise of $ \bvtDder $ shows.
We can apply Splitting by taking $\bvtPder$ --- beware, not $ \bvtPder'$ --- as $ \bvtQder$. 
Since $\strV$ in $\bvtDder$ can be any, we choose
$\vlstore{\vlsbr(\strR;\strT)}
 \strV\approx\vlfo{\atma}{\vlread}$, the conclusion of the instance of $\bvtrurulein$ we want to eliminate. From such an instance of $\bvtDder$ we get:
{\small
$$\vlderivation                 {
\vlde{\bvtDder'}{}
 	 {\strS\vlfo{\atma}{\vlsbr(\strR;\strT)}}{
\vlhy{\vlfo{\vec{\atmb}}{\vlsbr[(\strR;\strT);\strK]}}}}
$$
}%\small
Now we extract from $\strK$ the, usually called, killers of $ \strR $, and $ \strT $ inside $\vlsbr(\strR;\strT)$. Namely, we apply the following instance of Shallow splitting (Proposition~\ref{proposition:Shallow Splitting}) to the above $\bvtQder'$:
{\small
$$\textrm{ If }
  \vlderivation                   {
  \vlpr{\bvtPder''}{}
       {\vlsbr[(\strR;\strT);\strK]}}
  \textrm{, then }
  \exists \strK_1, \strK_2, \vec{\atmc}
  \textrm{ such that }
    \vlderivation                          {
    \vlde{\bvtEder}{}
	     {\strK}{
    \vlhy{\vlfo{\vec{\atmc}}
               {\vlsbr[\strK_1;\strK_2]}}}}
    \textrm{ and }
    \vlderivation                          {
    \vlpr{\bvtEder_1}{}
	     {\vlsbr[\strR;\strK_1]}}
    \textrm{ and }
    \vlderivation                          {
    \vlpr{\bvtEder_2}{}
	     {\vlsbr[\strT;\strK_2]}}
$$
}%\small
which, once more, may let instances of \OpNameRen to emerge. Composing $\bvtDder', \bvtEder, \bvtEder_1$, and $\bvtEder_2$, we get the $(*)$-free proof we are looking for:
{\small
$$\vlderivation                                       {
 \vlde{\bvtDder'}{}
      {\strS\vlfo{\atma}{\vlsbr(\strR;\strT)}}        {
 \vlde{\bvtEder}{}
      {\vlfo{\vec{\atmb}}
            {\vlsbr[(\strR;\strT);\strK]}}{
 \vldd{Proposition~\scriptsize{\ref{proposition:Context extrusion}}\ \ }
      {\BVT}
      {\vlfo{\vec{\atmb}}
            {\vlsbr[(\strR;\strT)
                   ;\vlfo{\vec{\atmc}}
                         {\vlsbr[\strK_1;\strK_2]}]}} {
 \vlin{\bvtswirule}{}
      {\vlfo{\vec{\atmc,\atmb}}
            {\vlsbr[(\strR;\strT)
                   ;\strK_1;\strK_2]}}                {
 \vlde{\bvtEder_1}{}
      {\vlfo{\vec{\atmc,\atmb}}
            {\vlsbr[([\strR;\strK_1]
                     ;\strT);\strK_2]}}               {
 \vlde{\bvtEder_2}{}
      {\vlfo{\vec{\atmc,\atmb}}
            {\vlsbr[\strT;\strK_2]}}                  {
 \vlhy{\vlfo{\vec{\atmc,\atmb}}{\vlone}
       \approx
       \vlone}}}}}}}}
$$
}%\small
It is a proof with the same conclusion as $\bvtPder$, without $ (*) $, 
but with, at least, a couple of new instances of both
$\bvtrdrulein$, and $\bvtswirulein$, the first one being ``inside'' 
Proposition~\ref{proposition:Context extrusion}
%%%%%%%%%%%%
\subsection{Details on Splitting}
\label{subsection:Details on Splitting}
%%%%%%%%%%%
\begin{proposition}[\textit{\textbf{Provability of structures in $\BVT$}}]
\label{proposition:Derivability of structures in BVT}
Let $\strR$, and $\strT$ be structures, and $\atma$ be a name, and 
$ \bvtPder, \bvtPder_1 $, and $ \bvtPder_2 $ be proofs of $ \BVT $.
%%%%%%%%%
\begin{enumerate}
\item\label{enum:Derivability of subformulas-seq}
$\vlstore{\vlsbr<\strR;\strT>}
 \bvtInfer{\bvtPder}
          {\ \bvtJudGen{\BVT}{} \vlread}$
\IFF\
$\bvtInfer{\bvtPder_1}
          {\ \bvtJudGen{\BVT}{} \strR}$ and
$\bvtInfer{\bvtPder_2}{\ \bvtJudGen{\BVT}{} \strT}$.

\item\label{enum:Derivability of subformulas-copar}
$\vlstore{\vlsbr(\strR;\strT)}
 \bvtInfer{\bvtPder}{\ \bvtJudGen{\BVT}{} \vlread}$
\IFF\
$\bvtInfer{\bvtPder_1}{\ \bvtJudGen{\BVT}{} \strR$ and
$\bvtInfer{\bvtPder_2}{\ \bvtJudGen{\BVT}{} \strT}}$.

\item\label{enum:Derivability of subformulas-fo}
$\bvtInfer{\bvtPder}{\ \bvtJudGen{\BVT}{} \vlfo{\atma}{\strR}}$
\IFF\
$\bvtInfer{\bvtPder'}{\ \bvtJudGen{\BVT}{} \strR\subst{\atmb}{\atma}}$,
for every variable $\atmb$.

\end{enumerate}
\end{proposition}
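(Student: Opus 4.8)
Each of the three biconditionals splits into an easy construction, assembling the compound structure from proofs of its parts, and a hard decomposition, reading proofs of the parts off a proof of the compound; the plan is to dispatch the two directions by different tools. The easy converses I would obtain by pure construction inside a context. For item~\ref{enum:Derivability of subformulas-copar}, since $\vlsbr(\vlone;\vlone)\approx\vlone$ by \eqref{align:unit-co}, running the proof of $\strR$ in the left \OpNameCop-slot and then the proof of $\strT$ in the right one rewrites $\vlone$ into $\vlsbr(\strR;\strT)$; item~\ref{enum:Derivability of subformulas-seq} is verbatim the same with $\vlsbr<\vlone;\vlone>\approx\vlone$ from \eqref{align:unit-seq}. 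For the converse of item~\ref{enum:Derivability of subformulas-fo} I pick $\atmb\notin\strFN{\strR}$, so that \eqref{align:alpha-varsub} gives $\vlfo{\atma}{\strR}\approx\vlfo{\atmb}{\strR\subst{\atmb}{\atma}}$ and \eqref{align:alpha-intro} gives $\vlfo{\atmb}{\vlone}\approx\vlone$; then the proof of $\strR\subst{\atmb}{\atma}$ granted by the hypothesis, run inside the scope of $\vlfo{\atmb}{\cdot}$, is already a proof of $\vlfo{\atma}{\strR}$.

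For the hard direction of items~\ref{enum:Derivability of subformulas-seq} and~\ref{enum:Derivability of subformulas-copar} I would apply Shallow Splitting (Proposition~\ref{proposition:Shallow Splitting}) with the trivial killer $\strK\approx\vlone$. Feeding it the proof of $\vlsbr(\strR;\strT)\approx\vlsbr[(\strR;\strT);\vlone]$ (using \eqref{align:unit-pa}) returns $\strK_1,\strK_2,\vec{\atmc}$ together with a derivation $\bvtEder$ whose conclusion is $\vlone$ and whose premise is $\vlfo{\vec{\atmc}}{\vlsbr[\strK_1;\strK_2]}$. Since $\BVT$ is affine (Proposition~\ref{proposition:BVT is affine}), $0=\Size{\vlone}\geq\Size{\vlfo{\vec{\atmc}}{\vlsbr[\strK_1;\strK_2]}}$, which forces $\strK_1\approx\strK_2\approx\vlone$; the two remaining outputs then read $\vlsbr[\strR;\strK_1]\approx\strR$ and $\vlsbr[\strT;\strK_2]\approx\strT$ by \eqref{align:unit-pa}, and these are exactly the sought proofs of $\strR$ and $\strT$. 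The \OpNameSeq\ case is identical, the only difference being that the killer produced by the \OpNameSeq-instance of Shallow Splitting is a \OpNameSeq\ structure, again annihilated by affineness.

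For the hard direction of item~\ref{enum:Derivability of subformulas-fo} I would use that \OpNameRen\ is a scoping operator (Fact~\ref{fact:OpNamRen is a scoping operator}). Reading $\vlone\approx\vlfo{\atma}{\vlone}$ through \eqref{align:alpha-intro}, part~2 of that Fact strips the binder off the proof of $\vlfo{\atma}{\strR}$ and yields a proof of $\strR$, i.e.\ the instance $\atmb\equiv\atma$; and for every $\atmb\notin\strFN{\strR}$ the same binder-stripping applied to $\vlfo{\atma}{\strR}\approx\vlfo{\atmb}{\strR\subst{\atmb}{\atma}}$ gives a proof of $\strR\subst{\atmb}{\atma}$ directly. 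The remaining, capturing, case $\atmb\in\strFN{\strR}$ I would settle by closure of $\BVT$-provability under the atom-for-atom substitution $\subst{\atmb}{\atma}$: after $\alpha$-renaming (via \eqref{align:alpha-varsub}) every bound name occurring in the proof of $\strR$ away from $\atmb$, applying $\subst{\atmb}{\atma}$ uniformly to each structure of the proof preserves every rule instance, because all rules of $\BVT$ are generic in names.

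I expect this last point to be the only genuine obstacle. The substitution $\subst{\atmb}{\atma}$ is not injective once $\atmb$ already occurs in $\strR$, so one must both forestall variable capture, by the $\alpha$-conversion above, and check that merging two distinct names cannot spoil an instance of $\bvtatidrule$, $\bvtswirule$, $\bvtseqdrule$, or $\bvtrdrule$; this is precisely what the quantification ``for every $\atmb$'' is recording, namely that a proof of $\vlfo{\atma}{\strR}$ is uniform in how $\atma$ is instantiated. By contrast, once Proposition~\ref{proposition:Shallow Splitting} and Proposition~\ref{proposition:BVT is affine} are in hand, the arguments for the first two items are routine.
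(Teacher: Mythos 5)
Your proposal is correct in substance, and its easy (construction) halves coincide with the paper's; but for the hard halves it takes a genuinely different route. The paper proves all three decompositions by one direct induction on $\Size{\bvtPder}$, using the observation that every redex of $\bvtatidrule$, $\bvtswirule$, $\bvtseqdrule$, $\bvtrdrule$ is a \OpNamePar\ structure, so the last rule of a proof of $\vlsbr<\strR;\strT>$, $\vlsbr(\strR;\strT)$ or $\vlfo{\atma}{\strR}$ must act inside $\strR$ or $\strT$, and the inductive hypothesis applies immediately. Your derivation of items~\ref{enum:Derivability of subformulas-seq} and~\ref{enum:Derivability of subformulas-copar} from Shallow Splitting with $\strP\approx\vlone$ plus Proposition~\ref{proposition:BVT is affine} is a clean shortcut (size $0$ does force $\strK_1\approx\strK_2\approx\vlone$), and your treatment of item~\ref{enum:Derivability of subformulas-fo} --- Fact~\ref{fact:OpNamRen is a scoping operator} for $\atmb\equiv\atma$ and for fresh $\atmb$, then closure of provability under the renaming $\subst{\atmb}{\atma}$ for capturing $\atmb$ --- is essentially the paper's induction repackaged: the substitution-closure step you single out is exactly the verification the paper performs inline when it pushes the last rule instance through the substitution, so nothing is saved there.

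The one point you must settle before the shortcut for items~\ref{enum:Derivability of subformulas-seq}--\ref{enum:Derivability of subformulas-copar} is legitimate is the order of dependencies. In this paper Shallow Splitting (Proposition~\ref{proposition:Shallow Splitting}) is proved \emph{after} the present statement, and its proof invokes item~\ref{enum:Derivability of subformulas-fo} of this very Proposition in every case whose last rule is $\bvtrdrulein$. Your argument is therefore non-circular only because (i) you establish item~\ref{enum:Derivability of subformulas-fo} independently of Shallow Splitting, and (ii) the proof of Shallow Splitting happens never to appeal to items~\ref{enum:Derivability of subformulas-seq} or~\ref{enum:Derivability of subformulas-copar}. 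Claim (ii) is true of the paper's proof but you do not check it, and it is essential; without it the plan collapses into a circle. Stated with that verification added, your route trades the (short) base-case analysis of the direct induction for a delicate reordering of the whole development, which is why the paper prefers the elementary argument here.
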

%%%%%%
\begin{proof}
\emph{``If implication''}. The proofs of \ref{enum:Derivability of subformulas-seq}
and \ref{enum:Derivability of subformulas-copar}, given in \cite{Gugl:06:A-System:kl}
by induction on
$\Size{\bvtPder}$ inside $\BV$, extend to the cases when the last rule
of $\bvtPder$ is $\bvtrdrulein$. Indeed, the redex of
$\bvtrdrulein$ can only be inside $\strR$ or $\strT$.
Concerning~\ref{enum:Derivability of subformulas-fo}, the assumption implies the
existence of $\bvtInfer{\bvtPder'}{\ \bvtJudGen{}{} \strR\subst{\atma}{\atma}}$,
namely of $\bvtInfer{\bvtPder'}{\,\bvtJ \strR}$. So, we can ``wrap'' $\bvtPder'$
with $\vlfo{\atma}{\cdot}$, exploiting \eqref{align:alpha-intro},
and apply every rule of $\bvtPder'$ deep in the proof $\bvtPder$ we are building.
\par
\emph{``Only if implication''}.
In all the three cases, the proof is by induction on $\Size{\bvtPder}$, proceeding by
cases on its last rule $\bvtrhorulein$. Concerning points~\ref{enum:Derivability of
subformulas-seq}, and~\ref{enum:Derivability of subformulas-copar} a redex can only
be inside $\strR$ or $\strT$. So, the application of the inductive hypothesis is
immediate. Instead, $\atma$ may not belong to $\strFN{\strR}$ in
Point~\ref{enum:Derivability of subformulas-fo}. If this is true, then
\eqref{align:alpha-intro} implies that every instance of $\bvtPder'$ with $\atmb$ in place of $\atma$ exists. The reason is that, by definition, the
substitution~\eqref{equation:BV2-structure-substitution}
distributes over structures, preserving the scope of every instance of \OpNameRen.
Otherwise, if $\atma\in\strFN{\strR}$, then the redex of $\bvtrhorulein$ can only be
inside $\strR$. So, we can conclude thanks to the inductive hypothesis.
\end{proof}
%%%%%%%%%%%%%%%
\begin{proposition}[\textit{\textbf{Shallow Splitting in $\BVT$}}]
\label{proposition:Shallow Splitting}
Let $\strR, \strT$, and $\strP$ be structures, and $\atma$ be a name, and  $ \bvtPder$ be a proof of $\BVT$.
\begin{enumerate}
\item\label{enum:Shallow-Splitting-seq}
If $\vlstore{\vlsbr[<\strR;\strT>;\strP]}
    \bvtInfer{\bvtPder}
             {\ \bvtJudGen{\BVT}{} \vlread}$, then there are
$\vlstore{\vlsbr<\strP_1;\strP_2>\bvtJudGen{\BVT}{} \strP}
\bvtInfer{\bvtDder}{\vlread}$, and
$\vlstore{\bvtJudGen{\BVT}{} {\vlsbr[\strR;\strP_1]}}
\bvtInfer{\bvtPder_1}{\ \vlread}$, and
$\vlstore{\bvtJudGen{\BVT}{} {\vlsbr[\strT;\strP_2]}}
 \bvtInfer{\bvtPder_2}{\ \vlread}$, for some
$\strP_1$, and $\strP_2$.

\item\label{enum:Shallow-Splitting-copar}
If $\vlstore{\vlsbr[(\strR;\strT);\strP]}
     \bvtInfer{\bvtPder}{\ \bvtJudGen{\BVT}{} \vlread}$,
then there are
$\vlstore{\vlsbr[\strP_1;\strP_2] \bvtJudGen{\BVT}{} \strP}
 \bvtInfer{\bvtDder}{\vlread}$, and
$\vlstore{\bvtJudGen{\BVT}{}{\vlsbr[\strR;\strP_1]}}
 \bvtInfer{\bvtPder_1}{\ \vlread}$, and
$\vlstore{\bvtJudGen{\BVT}{}{\vlsbr[\strT;\strP_2]}}
 \bvtInfer{\bvtPder_2}{\ \vlread}$, for some $\strP_1$, and $\strP_2$.

\item\label{enum:Shallow-Splitting-atom}
Let
$\vlstore{\vlsbr[\strR;\strP]}
 \bvtInfer{\bvtPder}{\ \bvtJudGen{\BVT}{} \vlread}$
with $\strR\approx\vlsbr[\atmLabL_1;\vldots;\atmLabL_m]$, such that
$ i\neq j $ implies $ \atmLabL_i \neq \vlne{\atmLabL_j} $, 
for every $ i,j \in\Set{1,\ldots,m}$, and $ m>0 $.
Then, for every structure $\strR_0$, and $\strR_1$, if
$\strR\approx\vlsbr[\strR_0;\strR_1]$,
there exists
$\vlstore{\vlne{\strR_1}
          \bvtJudGen{\BVT}{}
          \vlsbr[\strR_0;\strP]}
 \bvtInfer{\bvtDder}
          {\ \vlread}$.

\item\label{enum:Shallow-Splitting-fo}
If $\vlstore{\vlsbr[\vlfo{\atma}{\strR};\strP]}
    \bvtInfer{\bvtPder}{\ \bvtJudGen{}{} \vlread}$,
then there are
$\vlstore{\vlfo{\atma}{\strT} \bvtJudGen{\BVT}{} \strP}
 \bvtInfer{\bvtDder}{\vlread}$, and
$\vlstore{\bvtJudGen{\BVT}{} \vlsbr[\strR;\strT]}
 \bvtInfer{\bvtPder'}{\ \vlread}$, for some $\strT$.
\end{enumerate}
\end{proposition}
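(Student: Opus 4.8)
The plan is to prove all four points simultaneously by induction on $\Size{\bvtPder}$, proceeding by cases on the bottommost rule $\bvtrhorule$ of $\bvtPder$. The four statements must be established together, because the surgery performed in one case typically feeds one of the others: dismantling a \OpNameSeq or a \OpNameCop eventually bottoms out on the atomic point~\ref{enum:Shallow-Splitting-atom}, while threading the principal structure through a \OpNameRen invokes point~\ref{enum:Shallow-Splitting-fo}. In every point the generic, easy case is uniform: if the redex of $\bvtrhorule$ lies entirely inside $\strP$ (or entirely inside $\strR$, resp. $\strT$), then $\bvtrhorule$ does not touch the displayed principal connective. I apply the induction hypothesis to the premise of $\bvtrhorule$ — a strictly shorter proof of a structure of the same shape — obtain the splitting data there, and reattach $\bvtrhorule$ afterwards inside the appropriate component, using that all rules of $\BVT$ apply in arbitrary context.

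The work is in the cases where the redex of $\bvtrhorule$ overlaps the displayed context. For the \OpNameSeq point the critical last rules are $\bvtseqdrulein$ and $\bvtswirulein$: here I exploit the hierarchy \OpNamePar below \OpNameSeq below \OpNameCop, together with the mix rules derivable in~\eqref{equation:SBVT-core-set-derivable-rules}, to reorganise the context, reading off $\strP_1,\strP_2$ from the way the rule has split the surrounding \OpNamePar and assembling $\bvtDder$ of $\vlsbr<\strP_1;\strP_2>\bvtJudGen{\BVT}{}\strP$ from the residual steps. For the \OpNameCop point the decisive interaction is again $\bvtswirulein$, closed through point~\ref{enum:Shallow-Splitting-atom}, which resolves the eventual annihilation. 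Point~\ref{enum:Shallow-Splitting-atom} is the genuine base of the whole induction: its only productive last rule is $\bvtatidrulein$, contributing a complementary pair $\vlsbr[\atma;\natma]$; matching one of its atoms against some $\atmLabL_i$ of $\strR$ consumes a summand of $\strR_1$, makes the induction decrease, and pushes the dual atom into the premise $\vlne{\strR_1}$, exactly as the statement predicts.

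The heart of the proposal, and what is new with respect to $\SBV$, is point~\ref{enum:Shallow-Splitting-fo} together with its repercussion on the first two points, since the generic $\strR,\strT$ there may themselves be guarded by \OpNameRen. The pivotal last rule is $\bvtrdrulein$, whose conclusion $\vlsbr[\vlfo{\atma}{\strR};\vlex{\atma}{\strU}]$ matches the displayed shape with $\strP \approx \vlex{\atma}{\strU}$ and premise $\vlfo{\atma}{\vlsbr[\strR;\strU]}$. In this situation I set $\strT \approx \strU$, so that $\bvtDder$ is the trivial derivation $\vlfo{\atma}{\strU}\approx\strP$, and I recover the required proof of $\vlsbr[\strR;\strU]$ from the premise by Proposition~\ref{proposition:Derivability of structures in BVT}, Point~\ref{enum:Derivability of subformulas-fo}, possibly after an $\alpha$-renaming~\eqref{align:alpha-varsub} of $\atma$ to a name fresh for $\strR,\strT,\strP$. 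When instead $\bvtrdrulein$ or a switch acts below the displayed $\vlfo{\atma}{\strR}$ without being principal for it, I commute it past the binder — which is precisely what Fact~\ref{fact:OpNamRen is a scoping operator} licenses — and close by induction, collecting along the way the extra \OpNameRen prefixes that the earlier narrative denotes $\vlfo{\vec{\atmb}}{\cdot}$.

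I expect the main obstacle to be exactly this bookkeeping around the binder: guaranteeing that commuting a rule past a \OpNameRen neither captures a free name nor escapes its scope, that the freshness side-conditions of \eqref{align:alpha-intro} and \eqref{align:alpha-varsub} are met in the subcase $\atma\notin\strFN{\strR}$, and that the induction measure still strictly decreases once we are forced to let extra quantifiers ``emerge'' into the splitting data. Controlling these scope and freshness conditions uniformly across the mutually recursive cases — rather than any single rewriting step — is where the delicacy of the argument lies.
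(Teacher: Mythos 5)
Your overall architecture is close to the paper's, but your induction measure is wrong, and this is a genuine gap rather than a presentational difference. You propose to induct on $\Size{\bvtPder}$ alone. The critical cases, however --- the last rule being $\bvtseqdrulein$ or $\bvtswirulein$ with a redex overlapping the displayed $\vlsbr<\strR;\strT>$ or $\vlsbr(\strR;\strT)$ --- require \emph{two} successive appeals to the inductive hypothesis: a first one applied to the premise $\bvtPder'$ of the last rule, which yields splitting data including a new proof $\bvtQder$ of a structure such as $\vlsbr[[<\strR'';\strT>;\strP''];\strP_2]$, and then a \emph{second} one applied to that $\bvtQder$. Nothing bounds $\Size{\bvtQder}$ in terms of $\Size{\bvtPder}$: the proof $\bvtQder$ is manufactured by the first application of the inductive hypothesis and may well be longer than the proof you started from, so with your measure the second appeal is unjustified. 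This is exactly why the paper inducts on the lexicographic pair $(\Size{\strV},\Size{\bvtPder})$ with the size of the conclusion structure as the \emph{dominant} component: the second appeal is licensed because the structure strictly shrinks, even though the proof length is uncontrolled. The same pattern recurs in points~\ref{enum:Shallow-Splitting-atom} and~\ref{enum:Shallow-Splitting-fo}, where the paper first invokes points~\ref{enum:Shallow-Splitting-seq} and~\ref{enum:Shallow-Splitting-copar} on a rearranged proof and then recurses on one of the proofs so obtained, the decrease being $\Size{\strR_1}<\Size{\strR}$, respectively a decrease in the size of $\vlfo{\atma}{\strR}$.

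A second, smaller divergence: the paper does not prove all four points by one simultaneous induction. It proves points~\ref{enum:Shallow-Splitting-seq} and~\ref{enum:Shallow-Splitting-copar} simultaneously, and then establishes points~\ref{enum:Shallow-Splitting-atom} and~\ref{enum:Shallow-Splitting-fo} each by its own induction, using the first two as already-available lemmas; your ``all four at once'' plan is not obviously wrong but would need a single combined measure that you do not supply. Your description of the remaining surgery (the $\bvtrdrulein$ case of point~\ref{enum:Shallow-Splitting-fo} with $\strT\approx\strU$ and Proposition~\ref{proposition:Derivability of structures in BVT}, Point~\ref{enum:Derivability of subformulas-fo}; the use of \eqref{align:alpha-varsub} to avoid capture) matches the paper. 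But until you replace the induction on proof length by a measure in which the size of the conclusion structure dominates, the proof does not go through.
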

%%%
\begin{proof}
Following \cite{Gugl:06:A-System:kl}, both statements
\ref{enum:Shallow-Splitting-seq}, and \ref{enum:Shallow-Splitting-copar} must be
proved simultaneously. We reason by induction on the lexicographic order of the pair
$(\Size{\strV}, \Size{\bvtPder})$, where $\strV$ is one between
${\vlsbr[<\strR;\strT>;\strP]}$ or ${\vlsbr[(\strR;\strT);\strP]}$, proceeding by
cases on the last rule $\bvtrhorule$ of $\bvtPder$.
\par
Point~\ref{enum:Shallow-Splitting-atom} relies
on points~\ref{enum:Shallow-Splitting-seq}, \ref{enum:Shallow-Splitting-copar}.
It holds by induction on
$(\Size{\strR}, \Size{\bvtPder})$, proceeding by cases on the last rule of
$\bvtPder$.
Point~\ref{enum:Shallow-Splitting-fo} relies
on points~\ref{enum:Shallow-Splitting-seq}, \ref{enum:Shallow-Splitting-copar}.
It holds by induction on
$\vlstore{\vlfo{\atma}{\strR}}
 (\Size{\vlread}, \Size{\bvtPder})$, proceeding by cases on the last rule of
$\bvtPder$.
(Details in Appendix~\ref{section:Proof of proposition:Shallow Splitting}).
\end{proof}
%%%%%%%%%%%%%%%%%%%%%%%%%
\begin{remark}
The proviso ``$ i\neq j $ implies $ \atmLabL_i \neq \vlne{\atmLabL_j} $, 
for every $ i,j \in\Set{1,\ldots,m}$'' of Point~\eqref{enum:Shallow-Splitting-atom} in Proposition~\ref{proposition:Shallow Splitting} serves to let the killer of every $ \atmLabL $ be inside $ \vlsbr[\strR_0;\strP] $.
\end{remark}
\vspace{\baselineskip}\par\noindent
Proposition~\ref{proposition:Context Reduction ALTERNATIVE} here below
says that $\strS\vlhole$ supplies the ``context'' $\strU$, required for proving
$\strR$, no matter which structure fills the hole of $\strS\vlhole$.
%%%%%%%%%%%%%%%%%%%%%%%%%
\begin{proposition}[\textit{\textbf{Context Reduction in $\BVT$}}]
\label{proposition:Context Reduction ALTERNATIVE}
Let $\strR$ be a structure, and $\strS\vlhole$ be a context \ST\
$\bvtInfer{\bvtPder}{\ \bvtJudGen{\BVT}{} \strS\vlscn{\strR}}$.
There are a structure $\strU$, and, possibly, some variables $\vec{\atmb}$
\ST, for every $\strV$, if 
$\strFN{\strV}\cap\strBN{\strR}=\emptyset$, then both
$\vlstore{
 \bvtInfer{\bvtDder}
          {\vlfo{\vec{\atmb}}
                {\vlsbr[\strV;\strU]}
           \bvtJudGen{\BVT}{}
           \strS\vlscn{\strV}}}
 \vlread$,
and 
$\vlstore{\bvtJudGen{\BVT}{} {\vlsbr[\strR;\strU]}}
 \bvtInfer{\bvtQder}{\ \vlread}$.
\end{proposition}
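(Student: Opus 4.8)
The plan is to prove the statement by induction on the \emph{depth of the hole} in $\strS\vlhole$, \ie on the number of \OpNameSeq, \OpNameCop, \OpNamePar, and \OpNameRen nodes on the path from the root of $\strS\vlhole$ down to $\vlhole$, with Shallow Splitting (Proposition~\ref{proposition:Shallow Splitting}) as the workhorse. Working up to $\approx$, I first push every negation to the atoms (\eqref{align:negation-atom}--\eqref{align:negation-fo}) and flatten the outermost \OpNamePar\ (\eqref{align:unit-co} is not needed yet; rather \eqref{align:symm-pa} and associativity), so that $\strS\vlhole\approx\vlsbr[\strS'\vlhole;\strP]$ where the outermost connective of $\strS'\vlhole$ is \emph{not} \OpNamePar: it is $\vlhole$ itself, or a \OpNameCop, \OpNameSeq, or \OpNameRen\ whose immediate argument is a strictly shallower context $\strS''\vlhole$.

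For the base case $\strS'\vlhole\approx\vlhole$ we have a proof of $\strS\vlscn{\strR}\approx\vlsbr[\strR;\strP]$, so I take $\strU\equiv\strP$ and $\vec{\atmb}$ empty: for every $\strV$, $\vlsbr[\strV;\strU]\approx\strS\vlscn{\strV}$ is the trivial $\bvtDder$, and $\vlsbr[\strR;\strU]\equiv\vlsbr[\strR;\strP]$ is the given proof. In the inductive cases I peel the outermost connective of $\strS'\vlhole$ with the matching clause of Shallow Splitting. For $\strS'\vlhole\approx\vlsbr(\strS''\vlhole;\strQ)$, Proposition~\ref{proposition:Shallow Splitting}, point~\ref{enum:Shallow-Splitting-copar}, applied to the proof of $\vlsbr[(\strS''\vlscn{\strR};\strQ);\strP]$, yields $\strP_1,\strP_2$ with $\bvtInfer{\bvtDder_0}{\vlsbr[\strP_1;\strP_2]\bvtJudGen{\BVT}{}\strP}$ and proofs of $\vlsbr[\strS''\vlscn{\strR};\strP_1]$ and $\vlsbr[\strQ;\strP_2]$. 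Since $\vlsbr[\strS''\vlhole;\strP_1]$ is a strictly shallower context proving $\strS''\vlscn{\strR}$, the induction hypothesis returns $\strU,\vec{\atmb}$ with a proof of $\vlsbr[\strR;\strU]$ (the second required conclusion, kept verbatim) and, for every admissible $\strV$, a derivation $\vlfo{\vec{\atmb}}{\vlsbr[\strV;\strU]}\bvtJudGen{\BVT}{}\vlsbr[\strS''\vlscn{\strV};\strP_1]$. I then extend this derivation downward to $\strS\vlscn{\strV}$: seed a \OpNameCop\ unit $\vlsbr[\strS''\vlscn{\strV};\strP_1]\approx\vlsbr[(\vlone;\strS''\vlscn{\strV});\strP_1]$ by \eqref{align:unit-co}, replace $\vlone$ by the proof of $\vlsbr[\strQ;\strP_2]$, apply one $\bvtswirule$ to build $\vlsbr(\strS''\vlscn{\strV};\strQ)$, and finally run $\bvtDder_0$ in context to rewrite $\vlsbr[\strP_1;\strP_2]$ into $\strP$. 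The \OpNameSeq\ case is symmetric through Proposition~\ref{proposition:Shallow Splitting}, point~\ref{enum:Shallow-Splitting-seq}: one seeds a \OpNameSeq\ unit $\vlsbr<\strS''\vlscn{\strV};\vlone>$ via \eqref{align:unit-seq}, discharges the residue with one $\bvtseqdrule$, and then fires $\bvtDder_0$ (whose killer is now a \OpNameSeq\ $\vlsbr<\strP_1;\strP_2>$); the left/right position of the hole inside the seq fixes the order of the seed.

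The genuinely new case, and the one I expect to be the main obstacle, is $\strS'\vlhole\approx\vlfo{\atmp}{\strS''\vlhole}$; it is absent from $\BV$ and is the source of the extra binders $\vec{\atmb}$. Here Proposition~\ref{proposition:Shallow Splitting}, point~\ref{enum:Shallow-Splitting-fo}, applied to the proof of $\vlsbr[\vlfo{\atmp}{\strS''\vlscn{\strR}};\strP]$, supplies $\strT$ with $\bvtInfer{\bvtDder_0}{\vlfo{\atmp}{\strT}\bvtJudGen{\BVT}{}\strP}$ and a proof of $\vlsbr[\strS''\vlscn{\strR};\strT]$, and the induction hypothesis on $\vlsbr[\strS''\vlhole;\strT]$ returns $\strU$ together with binders $\vec{\atmb}$. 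To rebuild $\vlfo{\atmp}{\strS''\vlscn{\strV}}$ I must reintroduce the binder $\vlfo{\atmp}{\cdot}$ around $\strS''\vlscn{\strV}$ and pair it with $\strT$ through one $\bvtrdrule$, so that $\bvtDder_0$ can fire; this forces $\atmp$ (or a fresh $\alpha$-variant of it) to be prepended to $\vec{\atmb}$, and it is precisely here that the side-condition $\strFN{\strV}\cap\strBN{\strR}=\emptyset$ enters, via \eqref{align:alpha-intro} and \eqref{align:alpha-varsub}, letting me slide $\strV$ under the binder without capture. The delicate bookkeeping is twofold: (i) tracking whether $\atmp\in\strFN{\strT}$, which decides whether $\atmp$ is genuinely added to $\vec{\atmb}$ or is erased by \eqref{align:alpha-intro}; and (ii) verifying that the accumulated $\vec{\atmb}$ never captures a free name of $\strR$ in the surviving proof of $\vlsbr[\strR;\strU]$. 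Both are settled by $\alpha$-conversion together with the scoping behaviour recorded in Fact~\ref{fact:OpNamRen is a scoping operator}, but carrying this out cleanly — rather than the routine switch/seq gluing of the other cases — is where the work of the proof really lies.
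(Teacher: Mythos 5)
Your proof follows essentially the same route as the paper's: structural induction on the context, peeling the outermost connective with Shallow Splitting (Proposition~\ref{proposition:Shallow Splitting}) and gluing the pieces back with one instance of $\bvtswirulein$, $\bvtseqdrulein$, or $\bvtrdrulein$, with the \OpNameRen\ case being the one that genuinely accumulates the binders $\vec{\atmb}$ and invokes the freshness side-condition. The only differences are organizational --- the paper treats the contexts with no outermost \OpNamePar\ separately via Proposition~\ref{proposition:Derivability of structures in BVT}, adds a case for a \OpNameRen\ sitting on the non-hole sibling, and inducts on $\Size{\strS\vlhole}$ rather than on the depth of the hole --- and none of this changes the substance of the argument.
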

\begin{proof}
The proof is by induction on $\Size{\strS\vlhole}$, proceeding by cases on the
form of $\strS\vlhole$. (Details in Appendix~\ref{section:Proof of
proposition:Context Reduction ALTERNATIVE}).
\end{proof}
%%%%%%%%%%%%%%%%%%%
\begin{theorem}[{\textit{\textbf{Splitting in $\BVT$}}}]
\label{theorem:Splitting-ALT}
Let $\strR$, and $\strT$, be structures, and $\strS\vlhole$ be a context.
\begin{enumerate}
\item\label{enum:Splitting-seq}
If $\vlstore{\vlsbr<\strR;\strT>}
    \bvtInfer{\bvtPder}{\ \bvtJudGen{\BVT}{} \strS{\vlread}}$,
then there are structures $\strK_1, \strK_2$, and, possibly, some variables
$\vec{\atmb}$ \ST, for every $\strV$ with
$\vlstore{\strFN{\strV}\cap\strBN{\vlsbr<\strR;\strT>}=\emptyset}
 \vlread$,
there are
$\vlstore{\vlsbr[\strV;<\strK_1;\strK_2>]}
 \bvtInfer{\bvtDder}{\vlfo{\vec{\atmb}}{\vlread}
                     \bvtJudGen{\BVT}{} \strS\vlscn{\strV}}$, and
$\vlstore{\bvtJudGen{\BVT}{} {\vlsbr[\strR;\strK_1]}}
 \bvtInfer{\bvtPder_1}{\ \vlread}$, and
$\vlstore{\bvtJudGen{\BVT}{} {\vlsbr[\strT;\strK_2]}}
 \bvtInfer{\bvtPder_2}{\ \vlread}$.

\item\label{enum:Splitting-copar}
If $\vlstore{\vlsbr(\strR;\strT)}
    \bvtInfer{\bvtPder}{\ \bvtJudGen{\BVT}{} \strS{\vlread}}$,
then there are structures $\strK_1, \strK_2$, and, possibly, some variables
$\vec{\atmb}$ \ST, for every $\strV$ with
$\vlstore{\strFN{\strV}\cap\strBN{\vlsbr(\strR;\strT)}=\emptyset}
 \vlread$,
there are
$\vlstore{\vlsbr[\strV;\strK_1;\strK_2]}
 \bvtInfer{\bvtDder}{\vlfo{\vec{\atmb}}{\vlread}
                     \bvtJudGen{\BVT}{} \strS\vlscn{\strV}}$, and
$\vlstore{\bvtJudGen{\BVT}{} {\vlsbr[\strR;\strK_1]}}
 \bvtInfer{\bvtPder_1}{\ \vlread}$, and
$\vlstore{\bvtJudGen{\BVT}{} {\vlsbr[\strT;\strK_2]}}
 \bvtInfer{\bvtPder_2}{\ \vlread}$.

\item\label{enum:Splitting-fo-ex}
If $\vlstore{\vlfo{\atma}{\strR}}
    \bvtInfer{\bvtPder}{\ \bvtJudGen{\BVT}{} \strS{\vlread}}$,
then there are a structure $\strK$, and, possibly, some variables
$\vec{\atmb}$ \ST,
for every $\strV$ with
$\vlstore{\strFN{\strV}\cap\strBN{\vlfo{\atma}{\strR}}=\emptyset}
 \vlread$,
there exist
$\vlstore{\vlsbr[\strV;\strK]}
 \bvtInfer{\bvtDder}
          {\vlfo{\vec{\atmb}}{\vlread} 
           \bvtJudGen{\BVT}{} \strS\vlscn{\strV}}$, and
$\vlstore{\bvtJudGen{\BVT}{} {\vlsbr[\strR;\strK]}}
 \bvtInfer{\bvtPder'}{\ \vlread}$.
\end{enumerate}
\end{theorem}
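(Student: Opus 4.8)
The plan is to obtain the three statements uniformly, by feeding the output of Context Reduction (Proposition~\ref{proposition:Context Reduction ALTERNATIVE}) into the matching point of Shallow Splitting (Proposition~\ref{proposition:Shallow Splitting}). Since both of those auxiliaries have already been established by their own inductions, the present argument needs no further induction: it is a direct composition of derivations that happens to be structurally the same in all three cases. I spell out Point~\ref{enum:Splitting-seq}, the \OpNameSeq\ case, in full, and then record the variations.

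Assume $\bvtInfer{\bvtPder}{\ \bvtJudGen{\BVT}{} \strS\vlscn{\vlsbr<\strR;\strT>}}$. Applying Context Reduction with the whole of $\vlsbr<\strR;\strT>$ as the redex sitting in $\strS\vlhole$ yields a structure $\strU$ and variables $\vec{\atmb}$ such that, for every $\strV$ with $\strFN{\strV}\cap\strBN{\vlsbr<\strR;\strT>}=\emptyset$, there are $\bvtInfer{\bvtEder}{\vlfo{\vec{\atmb}}{\vlsbr[\strV;\strU]} \bvtJudGen{\BVT}{} \strS\vlscn{\strV}}$ and $\bvtInfer{\bvtQder}{\ \bvtJudGen{\BVT}{} \vlsbr[<\strR;\strT>;\strU]}$; note that the freshness hypothesis of Context Reduction is exactly the bound-name side condition demanded by the theorem, once the redex is taken to be $\vlsbr<\strR;\strT>$. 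Feeding $\bvtQder$ into Shallow Splitting, Point~\ref{enum:Shallow-Splitting-seq} (with $\strU$ in the role of the \OpNamePar-component), produces $\strK_1,\strK_2$ together with a derivation $\bvtInfer{\bvtDder_0}{\vlsbr<\strK_1;\strK_2> \bvtJudGen{\BVT}{} \strU}$ and the two proofs $\bvtInfer{\bvtPder_1}{\ \bvtJudGen{\BVT}{}\vlsbr[\strR;\strK_1]}$ and $\bvtInfer{\bvtPder_2}{\ \bvtJudGen{\BVT}{}\vlsbr[\strT;\strK_2]}$, which are precisely the two proofs the theorem asks for. It remains to assemble $\bvtDder$: since every rule, hence every derivation, applies in an arbitrary context, I lift $\bvtDder_0$ through $\vlfo{\vec{\atmb}}{\vlsbr[\strV;\vlhole]}$, getting a derivation from $\vlfo{\vec{\atmb}}{\vlsbr[\strV;<\strK_1;\strK_2>]}$ to $\vlfo{\vec{\atmb}}{\vlsbr[\strV;\strU]}$, and stack $\bvtEder$ beneath it; the result is the required $\bvtInfer{\bvtDder}{\vlfo{\vec{\atmb}}{\vlsbr[\strV;<\strK_1;\strK_2>]}\bvtJudGen{\BVT}{}\strS\vlscn{\strV}}$, with witnesses $\strK_1,\strK_2,\vec{\atmb}$. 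Point~\ref{enum:Splitting-copar} runs identically via Point~\ref{enum:Shallow-Splitting-copar}, which returns a \OpNamePar-split $\vlsbr[\strK_1;\strK_2]$, so that lifting and the associativity of \OpNamePar\ deliver the premise $\vlfo{\vec{\atmb}}{\vlsbr[\strV;\strK_1;\strK_2]}$ required there.

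The genuinely new case is Point~\ref{enum:Splitting-fo-ex}, and I expect it to be the only real difficulty, because Shallow Splitting makes a quantifier emerge that must be reconciled with the witness the theorem names. Context Reduction applied to $\vlfo{\atma}{\strR}$ gives $\strU,\vec{\atmb}$ and the proof $\bvtInfer{\bvtQder}{\ \bvtJudGen{\BVT}{}\vlsbr[\vlfo{\atma}{\strR};\strU]}$; Shallow Splitting, Point~\ref{enum:Shallow-Splitting-fo}, then returns some $\strT$ with $\bvtInfer{\bvtDder_0}{\vlfo{\atma}{\strT}\bvtJudGen{\BVT}{}\strU}$ and $\bvtInfer{\bvtPder'}{\ \bvtJudGen{\BVT}{}\vlsbr[\strR;\strT]}$. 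I take $\strK:=\strT$, so $\bvtPder'$ already proves $\vlsbr[\strR;\strK]$. For the derivation, lifting $\bvtDder_0$ gives access only to the premise $\vlfo{\vec{\atmb}}{\vlsbr[\strV;\vlfo{\atma}{\strT}]}$, whereas the theorem wants $\vlfo{\vec{\atmb}}{\vlsbr[\strV;\strK]}$. Here the side condition forces $\atma\in\strBN{\vlfo{\atma}{\strR}}$ to be fresh for $\strV$, hence $\strV\approx\vlfo{\atma}{\strV}$ by~\eqref{align:alpha-intro}, so a single instance of $\bvtrdrule$ rewrites $\vlfo{\atma}{\vlsbr[\strV;\strT]}$ into $\vlsbr[\vlfo{\atma}{\strV};\vlfo{\atma}{\strT}]\approx\vlsbr[\strV;\vlfo{\atma}{\strT}]$; applying it under $\vlfo{\vec{\atmb}}{\cdot}$ and reading $\vlfo{\vec{\atmb}}{\vlfo{\atma}{\cdot}}$ as $\vlfo{\vec{\atmb},\atma}{\cdot}$ absorbs the emerging quantifier into the binder prefix. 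Thus the witnesses are $\strK=\strT$ together with the enlarged prefix $\vec{\atmb},\atma$, and $\bvtDder$ carries exactly the extra $\bvtrdrule$ step anticipated in the informal discussion. The care needed to keep these emerging binders from capturing names of $\strV$, and to rename $\atma$ apart from $\vec{\atmb}$ when necessary, is licensed by the freshness condition and by the scoping behaviour of \OpNameRen\ recorded in Fact~\ref{fact:OpNamRen is a scoping operator}; this bookkeeping of surfacing binders, rather than any new proof-theoretic idea, is the main obstacle of the argument.
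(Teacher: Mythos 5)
Your proposal is correct and follows essentially the same route as the paper: Context Reduction applied to the whole redex, then the matching point of Shallow Splitting on the resulting proof $\bvtQder$, then composition of the two derivations in the context $\vlfo{\vec{\atmb}}{\vlsbr[\strV;\vlhole]}$. Your treatment of Point~\ref{enum:Splitting-fo-ex} --- introducing $\vlfo{\atma}{\cdot}$ around $\strV$ via \eqref{align:alpha-intro} (licensed by the freshness side condition), applying $\bvtrdrule$ once, and absorbing the emerging quantifier into the prefix $\atma,\vec{\atmb}$ --- is exactly the paper's construction.
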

\begin{proof}
We obtain the proof of the three statements by composing Context Reduction
(Proposition~\ref{proposition:Context Reduction ALTERNATIVE}), and Shallow Splitting
(Proposition~\ref{proposition:Shallow Splitting}) in this order. (Details in
Appendix~\ref{section:Proof of theorem:Splitting}).
\end{proof}
%%%%%%%%%%%%%
\begin{theorem}[\textit{\textbf{Admissibility of the up fragment for $\BVT$}}]
\label{theorem:Admissibility of the up fragment}
The set $\Set{\bvtatiurulein, \bvtsequrulein, \bvtrurulein}$ in $\SBVT$
is admissible for $\BVT$.
\end{theorem}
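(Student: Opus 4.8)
Since $\bvtatiurulein$ plays the role of cut, the real content of the statement is cut-elimination, which I would obtain by removing the up-rules one instance at a time from a proof, exactly as sketched at the opening of this section; a general derivation reduces to this case by the usual passage between derivations and provability in the calculus of structures. So let $\bvtPder$ be a proof of $\SBVT$, and argue by induction on the number of instances of $\Set{\bvtatiurulein,\bvtsequrulein,\bvtrurulein}$ in it. If this number is $0$, then $\bvtPder$ already lies in $\BVT$. Otherwise pick a \emph{topmost} such instance $(*)$, of an up-rule with redex $\strR$ and reduct $\strT$ inside a context $\strS\vlhole$, so that its conclusion is $\strS\vlscn{\strR}$ and its premise is $\strS\vlscn{\strT}$. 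Being topmost, the whole derivation above $\strS\vlscn{\strT}$ lies in $\BVT$, and since $\bvtPder$ is a proof it is in fact a $\BVT$-proof $\bvtQder$ of $\strS\vlscn{\strT}$. The heart of the matter is to turn $\bvtQder$ into a $\BVT$-proof of $\strS\vlscn{\strR}$: splicing such a proof in for $(*)$ and $\bvtQder$ removes one up-rule, and since Splitting (Theorem~\ref{theorem:Splitting-ALT}), Shallow Splitting (Proposition~\ref{proposition:Shallow Splitting}), Context Reduction (Proposition~\ref{proposition:Context Reduction ALTERNATIVE}) and Context extrusion (Proposition~\ref{proposition:Context extrusion}) all yield derivations that stay inside $\BVT$, no new up-rule appears, the count strictly decreases, and the induction closes.

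The three up-rules are discharged by a common recipe. For $\bvtatiurulein$ the conclusion is $\strS\vlscn{\vlone}$ and $\bvtQder$ proves $\strS\vlscn{\vlsbr(\atma;\natma)}$; Splitting in its copar form (Theorem~\ref{theorem:Splitting-ALT}, Point~\ref{enum:Splitting-copar}) returns a context derivation $\bvtDder$, killers $\strK_1,\strK_2$ and variables $\vec{\atmb}$ with $\vlfo{\vec{\atmb}}{\vlsbr[\strZ;\strK_1;\strK_2]}\bvtJudGen{\BVT}{}\strS\vlscn{\strZ}$ for every fresh $\strZ$, together with proofs of $\vlsbr[\atma;\strK_1]$ and $\vlsbr[\natma;\strK_2]$. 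Atomic Shallow Splitting (Point~\ref{enum:Shallow-Splitting-atom}) turns the latter two into derivations $\natma\bvtJudGen{\BVT}{}\strK_1$ and $\atma\bvtJudGen{\BVT}{}\strK_2$, which, applied deep inside the proof of $\vlsbr[\atma;\natma]$ furnished by general interaction down $\bvtintdrulein$, give a $\BVT$-proof of $\vlsbr[\strK_1;\strK_2]$; instantiating $\bvtDder$ at $\strZ\approx\vlone$ and restoring the binders $\vec{\atmb}$ through Proposition~\ref{proposition:Derivability of structures in BVT} composes into the desired proof of $\strS\vlscn{\vlone}$. The case $\bvtsequrulein$ is identical in shape: one applies Splitting on the copar of the premise $\strS\vlscn{\vlsbr(<\strR;\strT>;<\strU;\strV>)}$, then the seq case of Shallow Splitting (Point~\ref{enum:Shallow-Splitting-seq}) to the two resulting proofs, so as to split $\strK_1$ and $\strK_2$ into \OpNameSeq-structures, and finally reassembles, choosing $\strZ\approx\vlsbr<(\strR;\strU);(\strT;\strV)>$ and using $\bvtseqdrulein$ and $\bvtswirulein$, to reach $\strS\vlscn{\vlsbr<(\strR;\strU);(\strT;\strV)>}$.

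The genuinely new, and hardest, case is $\bvtrurulein$, where \OpNameRen interferes with Splitting. Here the conclusion is $\strS\vlscn{\vlfo{\atma}{\vlsbr(\strR;\strU)}}$ and $\bvtQder$ proves $\strS\vlscn{\vlsbr(\vlfo{\atma}{\strR};\vlfo{\atma}{\strU})}$. Splitting on this copar, followed by the \OpNameRen\ case of Shallow Splitting (Point~\ref{enum:Shallow-Splitting-fo}) on the proofs of $\vlsbr[\vlfo{\atma}{\strR};\strK_1]$ and $\vlsbr[\vlfo{\atma}{\strU};\strK_2]$, exposes structures $\strT_1,\strT_2$ with derivations $\vlfo{\atma}{\strT_1}\bvtJudGen{\BVT}{}\strK_1$ and $\vlfo{\atma}{\strT_2}\bvtJudGen{\BVT}{}\strK_2$ and proofs of $\vlsbr[\strR;\strT_1]$ and $\vlsbr[\strU;\strT_2]$. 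The delicate step is to recombine these under one scope: the last two proofs give, through Proposition~\ref{proposition:Derivability of structures in BVT} and $\bvtswirulein$, a proof of $\vlsbr[(\strR;\strU);\strT_1;\strT_2]$ and hence of $\vlfo{\atma}{\vlsbr[(\strR;\strU);\strT_1;\strT_2]}$; distributing the quantifier with $\bvtrdrulein$ and feeding the two derivations $\vlfo{\atma}{\strT_i}\bvtJudGen{\BVT}{}\strK_i$ then produces a proof of $\vlsbr[\vlfo{\atma}{(\strR;\strU)};\strK_1;\strK_2]$, which, wrapped in $\vec{\atmb}$ and fed to $\bvtDder$ at $\strZ\approx\vlfo{\atma}{\vlsbr(\strR;\strU)}$, is the conclusion; Fact~\ref{fact:OpNamRen is a scoping operator} is what licenses moving the proofs in and out of the scope of \OpNameRen. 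The main obstacle throughout is not the propositional content but the scope bookkeeping: the freshness provisos $\strFN{\strZ}\cap\strBN{\cdot}=\emptyset$ of Splitting and the emergent binders $\vec{\atmb}$ must be tracked at every step, and the chosen $\strZ$ must be $\alpha$-converted by the rules \eqref{align:alpha-intro}--\eqref{align:alpha-symm} whenever it threatens to capture a bound name. The complete case analysis is deferred to the appendix.
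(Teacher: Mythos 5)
Your argument is correct and runs on the same machinery as the paper's proof --- peel off one up-rule instance at a time, decompose with Splitting (Theorem~\ref{theorem:Splitting-ALT}) and Shallow Splitting (Proposition~\ref{proposition:Shallow Splitting}), and reassemble through the context derivation $\bvtDder$ evaluated at a suitably $\alpha$-converted instance of $\strV$ --- and your $\bvtatiurulein$ case coincides with the paper's. The genuine divergence is in the $\bvtsequrulein$ and $\bvtrurulein$ cases. You apply Splitting to the $\BVT$-proof of the \emph{premise} of the up-rule instance, which in both cases exhibits a \OpNameCop\ (so Point~\ref{enum:Splitting-copar} of Theorem~\ref{theorem:Splitting-ALT}), and then refine the killers $\strK_1,\strK_2$ with the \OpNameSeq\ and \OpNameRen\ cases of Shallow Splitting (Points~\ref{enum:Shallow-Splitting-seq} and~\ref{enum:Shallow-Splitting-fo}). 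The paper does the mirror image: it applies Splitting to the \emph{conclusion} $\strS\vlsbr<(\strR;\strT);(\strU;\strV)>$, respectively $\strS\vlfo{\atma}{\vlsbr(\strR;\strT)}$ --- warning explicitly ``beware, not $\bvtPder'$'' --- and then refines the killers through the \OpNameCop\ case of Shallow Splitting (Point~\ref{enum:Shallow-Splitting-copar}). The two decompositions are dual and lead to essentially the same reassembly, but yours buys something concrete: Splitting is only ever invoked on proofs already known to lie in $\BVT$, which is the only situation Theorem~\ref{theorem:Splitting-ALT} covers as stated, whereas the paper's phrasing applies it to a proof that still contains the very up-rule instance being eliminated. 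One cosmetic remark: admissibility of the up fragment can only be meant for proofs, not arbitrary derivations (the one-step derivation $\vlsbr(\atma;\natma)\bvtJudGen{}{}\vlone$ given by $\bvtatiurulein$ has no $\BVT$ counterpart, by Proposition~\ref{proposition:BVT is affine}), so there is no ``passage'' from general derivations to this case; but since both you and the paper in fact work with proofs throughout, nothing is lost.
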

\begin{proof}
Use Splitting (Theorem~\ref{theorem:Splitting-ALT}), and Shallow Splitting
(Proposition~\ref{proposition:Shallow Splitting}) (Details in
Appendix~\ref{section:Proof of theorem:Admissibility of the up fragment}.)
\end{proof}
\section{\llcxLinlamcalc\ mapped to $ \BVT $}
\label{section:From llcxlinlamcalc llcxwithexs to structures}
To show that \OpNameRen is not an extemporaneous logical operator we interpret it as binder that, together with \OpNameSeq, models the renaming mechanism of linear $ \beta $-reduction.
%%%%%%%%%%%%
\paragraph{\llcxLinlamcalc.}
We recall that \llcxlinlamcalc\ can be viewed as a pair
(\llcxlinlamterms,
linear operational semantics).
Let $\llcxVar$ be a countable set of
variable names we range over by $\llcxX,\llcxY,\llcxW,\llcxZ$. We call $\llcxVar$ the
\dfn{set of \llcxlamvars}.
The set of \dfn{\llcxlinlamterms} is
$\llcxSet{}  = \bigcup_{X\subset\llcxVar}\llcxSet{X}$ we range over by
$\llcxM,\llcxN,\llcxP,\llcxQ$. For every $X\subset\llcxVar$, the set
$\llcxSet{X}$ contains the
\dfn{\llcxlinlamterms\ whose free
variables are in $X$}, and which we define as follows:
(i)   $\llcxX\in\llcxSet{\Set{\llcxX}}$;
(ii)  $\llcxF{\llcxX}{\llcxM}\in\llcxSet{X}$
      if $\llcxM\in\llcxSet{X\cup\Set{\llcxX}}$;
(iii) $\llcxA{\llcxM}{\llcxN}\in\llcxSet{X\cup Y}$
      if  $\llcxM\in\llcxSet{X}$,
          $\llcxN\in\llcxSet{Y}$, and
          $X\cap Y=\emptyset$ ;
(iv)  $\llcxE{\llcxM}{\llcxX}{\llcxP}\in\llcxSet{X\cup Y}$
      if  $\llcxM\in\llcxSet{X\cup\Set{\llcxX}}$,
          $\llcxP\in\llcxSet{Y}$, and
          $X\cap Y=\emptyset$. 
The linear operational semantics that rewrites \llcxlinlamterms\ is the relation
$\llcxSOSJud{}{}\, \subseteq\!\llcxSet{}\!\times\!\llcxSet{}$ here below:
\begin{center}
{\small
  \fbox{
    \begin{minipage}{.974\linewidth}
      \begin{equation}
       \label{equation:LLCXS-rewriting-relation}
			\begin{aligned}
\vlinf{\llcxSOSrflrule}{}
       {\llcxSOSJud{\llcxM}{\llcxM}}
       {}
 &\qquad\quad&
%$\vlinf{\llcxSOSBrule}{}
%       {\llcxSOSJud{\llcxM}{\llcxN}}
%       {\llcxM \llcxBeta \llcxN}
%$
\vlinf{\llcxSOSBetarule}{}
       {\llcxSOSJud{\llcxA{\llcxF{\llcxX}{\llcxM}}{\llcxN}}
                   {\llcxE{\llcxM}{\llcxX}{\llcxN}}}
       {}
 &\qquad\quad&
\vlinf{\llcxSOStrarule}{}
       {\llcxSOSJud{\llcxM}{\llcxN}}
       {\llcxSOSJud{\llcxM}{\llcxP}
        \qquad
        \llcxSOSJud{\llcxP}{\llcxN}}
 \\
\vlinf{\llcxSOSfrule}{}
       {\llcxSOSJud{\llcxF{\llcxX}{\llcxM}}{\llcxF{\llcxX}{\llcxN}}}
       {\llcxSOSJud{\llcxM}{\llcxN}}
  &\qquad\quad&
\vlinf{\llcxSOSalrule}{}
       {\llcxSOSJud{\llcxA{\llcxM}{\llcxP}}{\llcxA{\llcxN}{\llcxP}}}
       {\llcxSOSJud{\llcxM}{\llcxN}}
 &\qquad\quad&
\vlinf{\llcxSOSarrule}{}
       {\llcxSOSJud{\llcxA{\llcxP}{\llcxM}}{\llcxA{\llcxP}{\llcxN}}}
       {\llcxSOSJud{\llcxM}{\llcxN}}
% \\\\
% \vlinf{\llcxSOSslrule}{}
%        {\llcxSOSJud{\llcxE{\llcxM}{\llcxX}{\llcxP}}
%                    {\llcxE{\llcxN}{\llcxX}{\llcxP}}}
%        {\llcxSOSJud{\llcxM}{\llcxN}}
% &\qquad\quad&
% \vlinf{\llcxSOSsrrule}{}
%        {\llcxSOSJud{\llcxE{\llcxP}{\llcxX}{\llcxM}}
%                    {\llcxE{\llcxP}{\llcxX}{\llcxN}}}
%        {\llcxSOSJud{\llcxM}{\llcxN}}
\end{aligned}

      \end{equation}
    \end{minipage}
  }%fbox
}%\small
\end{center}
where $ \llcxM\subst{\llcxN}{\llcxX} $ is the usual clash-free 
\emph{substitution}, that replaces $\llcxN$ for the forcefully single occurrence of $\llcxX$ in $\llcxM$. We remark that~\eqref{equation:LLCXS-rewriting-relation} is the reflexive, contextual, and transitive closure of linear $\beta$-reduction we find in rule $\llcxSOSBetarule$. Finally, $\Size{\llcxSOSJud{\llcxM}{\llcxN}}$ denotes the \dfn{number of instances of rules} in~\eqref{equation:LLCXS-rewriting-relation}, used to derive some given
$\llcxSOSJud{\llcxM}{\llcxN}$.
%%%%%%%%%%%%%%%%%%%%%%%%%%%%%
\paragraph{The map \mbox{$\mapLcToDi{\cdot}{\cdot}$}.}
We define it here below, to map terms of $\llcxSet{}$ into structures of $\BVT$.
%%%%%%%%%%%%%
\begin{center}
{\small
  \fbox{
    \begin{minipage}{.974\linewidth}
    \vspace{-.3cm}
		\begin{subequations}
			\vlstore{
\label{align:map-var-R}
\mapLcToDi{\llcxX}{\atmo} & =
{\vlsbr<\llcxX;\natmo>}
  \textrm{ with }  \natmo \textrm{ fresh }
\\
\label{align:map-fun-R}
\mapLcToDi{\llcxF{\llcxX}{\llcxM}}{\atmo} & =
\vlfo{\llcxX}
     {\mapLcToDi{\llcxM}{\atmo}}
\\
\label{align:map-app-R}
\mapLcToDi{\llcxA{\llcxM}{\llcxN}}{\atmo} & =
\vlex{\atmp}
     {\vlsbr[\mapLcToDi{\llcxM}{\atmp}
            ;\vlex{\atmq}{\mapLcToDi{\llcxN}{\atmq}}
            ;<\atmp;\vlne{\atmo}>]}
}%\vlstore
\begin{align}
\vlread
\end{align}
		\end{subequations}
    \end{minipage}
  }%fbox
}%\small
\end{center}
%%%%%%%%%%%%%%
For every \llcxlinlamterm\ $\llcxM$, the structure $\mapLcToDi{\llcxM}{\atmo}$ is such that
(i) $\atmo$ is a unique output channel, and
(ii) every free variable of $\llcxM$ is used as positive atom name that plays the
role of input channel.
Clause~\eqref{align:map-var-R} associates the input channel $\llcxX$ to the fresh
output channel $\atmo$. Intuitively, $\llcxX$ shall be eventually \emph{forwarded}
to $\atmo$, in accordance with terminology taken from \cite{HondaYoshida:TCS1995}.
Clause~\eqref{align:map-fun-R} uses \OpNameRen\ to abstract on the input channel
$\llcxX$. This means to let $\llcxX$ ready to merge with any output channel of a
\llcxlinlamterm\ that has to be substituted for $\llcxX$.
Such a channel comes from the argument of an application, as translated by~\eqref{align:map-app-R}. It wraps $\mapLcToDi{\llcxN}{\atmq}$, abstracting on
its output channel $\atmq$ thanks to \OpNameRen. So, thanks to \OpNameRen, linear
$\beta$-reduction, and its substitution mechanism,
become an identification of channel names inside $\BVT$, as follows:
%%%%%%%%%%%%%
\begin{center}
{\small
  \fbox{
    \begin{minipage}{.974\linewidth}
%    \vspace{-.3cm}
		\begin{equation}
		  \label{equation:LLCX-simulating-beta-proof}
				\vlderivation                                              {
	\vliq{\eqref{align:alpha-varsub}}{}
{\vlex{\llcxp}
{
 \mapLcToDi{\llcxA{\llcxF{\llcxX}{\llcxM}}{\llcxN}}{\llcxo}
 \equiv
 \vlsbr[\vlfo{\llcxX}
             {\mapLcToDi{\llcxM}{\llcxp}}
        ;\vlex{\llcxq}
              {\mapLcToDi{\llcxN}{\llcxq}}
        ;<\atmp;\vlne\atmo>]
}
}                                                     {
	\vlin{\bvtrdrule}{}
{\vlex{\llcxp}
{
 \vlsbr[\vlfo{\llcxX}
      {\mapLcToDi{\llcxM}{\llcxp}}
       ;\vlex{\llcxX}{\mapLcToDi{\llcxN}{\llcxX}}
       ;<\atmp;\vlne\atmo>]
}
}                                                     {
	\vlin{\bvtsubsrule}{}
{\vlex{\llcxp}
{
 \vlsbr[\vlfo{\llcxX}
      {[\mapLcToDi{\llcxM}{\llcxp}
              ;\mapLcToDi{\llcxN}{\llcxX}]}
       ;<\atmp;\vlne\atmo>]
}
}                                                     {
	\vliq{\eqref{align:alpha-intro}}{}
{\vlex{\llcxp}
{
 \vlsbr[\vlfo{\llcxX}
      {\mapLcToDi{\llcxM\subst{\llcxN}{\llcxX}}{\llcxp}}
       ;<\atmp;\vlne\atmo>]
}
}                                                     {
	\vlin{\bvttradrulep}{}
{\vlex{\llcxp}
{
 \vlsbr[\mapLcToDi{\llcxM\subst{\llcxN}{\llcxX}}{\llcxp}
       ;<\atmp;\vlne\atmo>]
}
}                                                     {
	\vliq{\eqref{align:alpha-intro}}{}
{\vlex{\llcxp}
{
 \mapLcToDi{\llcxM\subst{\llcxN}{\llcxX}}{\llcxo}
}
}                                                     {
	\vlhy{\mapLcToDi{\llcxM\subst{\llcxN}{\llcxX}}
           {\llcxo}
}}}}}}}}
		\end{equation}
    \end{minipage}
  }%fbox
}%\small
\end{center}
%%%%%%%%%%%%%
In~\eqref{equation:LLCX-simulating-beta-proof} here above
(i) \eqref{align:alpha-varsub} holds because we have that
$\vlex{\llcxq}
      {\mapLcToDi{\llcxN}{\llcxq}}
 \approx
 \vlex{\llcxX}
      {\mapLcToDi{\llcxN}{\llcxq}\subst{\llcxX}{\llcxq}}
 \approx
 \vlex{\llcxX}
      {\mapLcToDi{\llcxN}{\llcxX}}
      $ holds
thanks to the uniqueness of input, and output channels, and thanks to
\OpNameSeq\ which never confuses left, and right-hand sides of
$\vlsbr<\strR;\strT>$,
(ii) the instance of $\bvtrdrulein$
identifies the input channel $\llcxX$ of
$\vlfo{\llcxX}{\mapLcToDi{\llcxM}{\llcxo}}$
with the output channel $\llcxX$ of $\mapLcToDi{\llcxN}{\llcxX}$, after
its renaming by means of \eqref{align:alpha-varsub},
(iii) we are going to show that both $\bvtsubsrule$, and $\bvttradruleinp$ are derivable in $\BVT$, with the second one being a specialization of the transitivity $\bvttradrulein$, 
and
(v) the two occurrences of \eqref{align:alpha-intro} apply because $\llcxX$ and
$\llcxp$ disappear.
\section{Completeness of $\BVT$ \wrt \llcxLinlamcalc}
\label{section:Completeness of SBVT and BVT}
%%%%%%%%%%%%%
Completeness says that we can mimic every computation step of \llcxlinlamcalc\ as proof-reconstruction inside $\BVT$.
%%%%%%%%%
\begin{theorem}[\textbf{\textit{Completeness of $\BVT$}}]
\label{theorem:Completeness of SBVT}
For every $\llcxM$, and $\llcxN$, and $\atmo$, if $\llcxSOSJud{\llcxM}{\llcxN}$, then
$\bvtInfer{\bvtDder}
          { \mapLcToDi{\llcxN}{\atmo}
            \bvtJudGen{\BVT}{}
            \mapLcToDi{\llcxM}{\atmo}
 	  }$.
\end{theorem}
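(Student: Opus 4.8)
The plan is to argue by induction on the derivation witnessing $\llcxSOSJud{\llcxM}{\llcxN}$ --- equivalently on $\Size{\llcxSOSJud{\llcxM}{\llcxN}}$ --- proceeding by cases on the last rule of \eqref{equation:LLCXS-rewriting-relation}. Before starting the induction I would generalize the statement so that the output channel is universally quantified \emph{inside} the induction, i.e.\ prove that if $\llcxSOSJud{\llcxM}{\llcxN}$ then \emph{for every} $\atmo$ there is $\bvtDder$ with $\bvtInfer{\bvtDder}{\mapLcToDi{\llcxN}{\atmo} \bvtJudGen{\BVT}{} \mapLcToDi{\llcxM}{\atmo}}$. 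This strengthening is essential: in the two application cases the inductive hypothesis has to be invoked on an \emph{internal} channel rather than on the channel $\atmo$ carried by the conclusion.

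The reflexivity rule $\llcxSOSrflrule$ forces $\llcxM\equiv\llcxN$, so the required $\bvtDder$ is the structure $\mapLcToDi{\llcxM}{\atmo}$ itself, a derivation of length $0$. The $\beta$-rule $\llcxSOSBetarule$, in which $\llcxM\equiv\llcxA{\llcxF{\llcxX}{\llcxP}}{\llcxQ}$ reduces to $\llcxN\equiv\llcxP\subst{\llcxQ}{\llcxX}$, is settled verbatim by the derivation displayed in \eqref{equation:LLCX-simulating-beta-proof} (with $\llcxP$ and $\llcxQ$ playing the roles of the $\llcxM$ and $\llcxN$ occurring there): read from its premise down to its conclusion it transports $\mapLcToDi{\llcxP\subst{\llcxQ}{\llcxX}}{\atmo}$ to $\mapLcToDi{\llcxA{\llcxF{\llcxX}{\llcxP}}{\llcxQ}}{\atmo}$ using one instance each of $\bvtrdrule$, $\bvtsubsrule$ and $\bvttradruleinp$, interleaved with the structural congruences \eqref{align:alpha-intro} and \eqref{align:alpha-varsub}.

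For the three contextual rules I would use that every $\BVT$ rule fires in an arbitrary context, so the sub-derivation returned by the inductive hypothesis can be replayed unchanged inside the structure dictated by the relevant clause of $\mapLcToDi{\cdot}{\cdot}$. For $\llcxSOSfrule$, wrapping every rule instance of the derivation of $\mapLcToDi{\llcxM}{\atmo}$ from $\mapLcToDi{\llcxN}{\atmo}$ under $\vlfo{\llcxX}{\cdot}$ produces, by \eqref{align:map-fun-R}, a derivation of $\mapLcToDi{\llcxF{\llcxX}{\llcxM}}{\atmo}$ from $\mapLcToDi{\llcxF{\llcxX}{\llcxN}}{\atmo}$. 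For $\llcxSOSalrule$ I would instantiate the inductive hypothesis at the internal channel $\atmp$ of clause \eqref{align:map-app-R}, obtaining a derivation of $\mapLcToDi{\llcxM}{\atmp}$ from $\mapLcToDi{\llcxN}{\atmp}$, and replay it in the sub-position occupied by $\mapLcToDi{\llcxM}{\atmp}$ inside $\mapLcToDi{\llcxA{\llcxM}{\llcxP}}{\atmo}$; the case $\llcxSOSarrule$ is symmetric, the sub-derivation being replayed at the argument channel $\atmq$. Finally $\llcxSOStrarule$ follows by vertical composition: the inductive hypotheses for $\llcxSOSJud{\llcxM}{\llcxP}$ and $\llcxSOSJud{\llcxP}{\llcxN}$ give derivations of $\mapLcToDi{\llcxM}{\atmo}$ from $\mapLcToDi{\llcxP}{\atmo}$ and of $\mapLcToDi{\llcxP}{\atmo}$ from $\mapLcToDi{\llcxN}{\atmo}$, which stack into a derivation of $\mapLcToDi{\llcxM}{\atmo}$ from $\mapLcToDi{\llcxN}{\atmo}$, a derivation being closed under sequencing by definition.

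I expect the whole difficulty to be concentrated in the $\beta$-case, and more precisely in the two non-structural rules on which \eqref{equation:LLCX-simulating-beta-proof} rests. One has to check that $\bvttradruleinp$ is a legitimate specialization of the \OpNameSeq-transitivity $\bvttradrulein$ already shown derivable, and --- the genuinely delicate point --- that $\bvtsubsrule$ is derivable in $\BVT$, i.e.\ that the transition between $\vlfo{\llcxX}{\vlsbr[\mapLcToDi{\llcxM}{\atmp};\mapLcToDi{\llcxN}{\llcxX}]}$ and $\vlfo{\llcxX}{\mapLcToDi{\llcxM\subst{\llcxN}{\llcxX}}{\atmp}}$ is realizable inside $\BVT$. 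Here linearity is exactly what makes the step go through: since $\llcxX$ occurs exactly once in $\llcxM$, its image is a single forwarder $\vlsbr<\llcxX;\natmr>$, and deriving $\bvtsubsrule$ reduces to an induction on the shape of $\llcxM$ that fuses this forwarder with the output channel $\llcxX$ of $\mapLcToDi{\llcxN}{\llcxX}$ by means of $\bvtrdrule$ and \OpNameSeq-transitivity, with no duplication or erasure to repair. Once $\bvtsubsrule$ and $\bvttradruleinp$ are in hand, everything else is the context-and-$\alpha$-renaming bookkeeping sketched above.
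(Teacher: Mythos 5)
Your proposal is correct and follows essentially the same route as the paper: induction on $\Size{\llcxSOSJud{\llcxM}{\llcxN}}$ with the $\beta$-case discharged by the derivation \eqref{equation:LLCX-simulating-beta-proof} (Lemma~\ref{lemma:Simulating llcxBeta}), which in turn rests on the derivability in $\BVT$ of $\bvtsubsrule$ and $\bvttradruleinp$ (Lemma~\ref{lemma:Deriving substitution}, proved by induction on the shape of $\llcxM$ exactly as you sketch), and the contextual and transitivity cases handled by deep replay and vertical composition. Your explicit remark that the inductive hypothesis must be available at the internal channels $\atmp$, $\atmq$ in the application cases is a point the paper leaves implicit in the universal quantification over $\atmo$, but it is the same argument.
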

%%%%%%%%%%
The proof relies on some technical lemma that we detail out in the coming lines.
%%%%%%%%%%
\begin{lemma}[\textbf{\textit{Output names are linear}}]
\label{lemma:Output names are linear}
For every $\llcxM$, and $\llcxo$, the output name $\llcxo$ of
$\mapLcToDi{\llcxM}{o}$ occurs once.
\end{lemma}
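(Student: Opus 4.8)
The plan is to argue by structural induction on the \llcxlinlamterm\ $\llcxM$, following the three defining clauses \eqref{align:map-var-R}, \eqref{align:map-fun-R}, and \eqref{align:map-app-R} of $\mapLcToDi{\cdot}{\cdot}$. Throughout I would keep in mind that an output name $\llcxo$ enters a translation only through the \emph{negative} atom $\natmo$, so ``$\llcxo$ occurs once'' is read as ``$\natmo$ appears exactly once and $\atmo$ does not appear positively at all''. In fact I would prove this slightly sharper statement, since the positive-free part is precisely what makes the application step go through: the discipline is that output names are never reused as input (positive) names.

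The base case is clause~\eqref{align:map-var-R}, where $\mapLcToDi{\llcxX}{\atmo}$ is an \OpNameSeq\ structure whose only atoms are the positive input $\llcxX$ and the negative output $\natmo$; the freshness proviso on $\natmo$ keeps it distinct from $\llcxX$, so $\llcxo$ occurs exactly once, as needed. The abstraction case~\eqref{align:map-fun-R} is immediate by the inductive hypothesis: $\mapLcToDi{\llcxF{\llcxX}{\llcxM}}{\atmo}$ merely prefixes $\mapLcToDi{\llcxM}{\atmo}$ with the \OpNameRen\ binder $\vlfo{\llcxX}{\cdot}$, carrying the \emph{same} output $\atmo$. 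Because the bound name $\llcxX$ is an input variable, hence different from the output $\atmo$, wrapping with \OpNameRen\ adds no occurrence of $\atmo$, and the single occurrence is inherited from $\llcxM$.

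The only delicate case is application, clause~\eqref{align:map-app-R}, where $\mapLcToDi{\llcxA{\llcxM}{\llcxN}}{\atmo}$ is assembled from $\mapLcToDi{\llcxM}{\atmp}$, the $\vlfo{\atmq}{\cdot}$-bound $\mapLcToDi{\llcxN}{\atmq}$, and the explicit \OpNameSeq\ pairing of $\atmp$ with $\natmo$, with $\atmp,\atmq$ fresh. Here $\natmo$ occurs once in that explicit pairing, and the whole task is to show it occurs nowhere else. This is where I expect the main obstacle to sit: I must rule out that $\atmo$ slips into the sub-translations $\mapLcToDi{\llcxM}{\atmp}$ or $\mapLcToDi{\llcxN}{\atmq}$. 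By the strengthened inductive hypothesis the output names occurring in those sub-translations are exactly $\natmp$ and $\natmq$ (each once, and no output name appears positively); and by the freshness convention governing the internal channels $\atmp,\atmq$, the name $\atmo$ is neither one of these fresh outputs nor a free input variable of $\llcxM$ or $\llcxN$. Consequently $\atmo$ appears in neither sub-translation, so the unique occurrence of $\natmo$ is the displayed one. Making precise that output names and input/internal names live in disjoint name spaces is the only real bookkeeping; once that convention is pinned down, all three cases close routinely.
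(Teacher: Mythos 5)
Your proof is correct and follows essentially the same route as the paper, which simply states that the lemma holds by induction on the definition of $\mapLcToDi{\cdot}{\cdot}$, proceeding by cases on the form of $\llcxM$. Your strengthening of the inductive hypothesis (the output name occurs exactly once, negatively, and never positively) together with the appeal to freshness of the internal channels $\atmp,\atmq$ in clause~\eqref{align:map-app-R} is exactly the bookkeeping needed to make that one-line induction go through.
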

\begin{proof}
By induction on the definition of $\mapLcToDi{\cdot}{\cdot}$,
proceeding by cases on the form of $\llcxM$.
\end{proof}
%%%%%%%%%%%%
\begin{lemma}[\textbf{\textit{Substitution in $\BVT$}}]
\label{lemma:Deriving substitution}
For every $\llcxM, \llcxN, \atmp, \atmo$, and $\llcxX$, such that
$\llcxX\in\strFN{\mapLcToDi{\llcxM}{\atmo}}$, in $\BVT$, we can derive:
{\small
{\small
\begin{center}
\begin{tabular}{cccc}
$\vlinf{\bvttradrulep}{}
       {\vlsbr[\mapLcToDi{\llcxM}{\atmp}
              ;<\llcxp;\vlne\atmo>]}
       {\mapLcToDi{\llcxM}{\atmo}}$
&\qquad\qquad\qquad&
$\vlinf{\bvtsubsrule}{}
       {\vlsbr[\mapLcToDi{\llcxM}{\atmo}
              ;\mapLcToDi{\llcxN}{\llcxX}]}
       {\mapLcToDi{\llcxM\subst{\llcxN}{\llcxX}}
                  {\atmo}}$
\end{tabular}
\end{center}
}
} %\small
\end{lemma}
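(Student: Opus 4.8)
The plan is to establish the two derivable rules in turn: first $\bvttradruleinp$, and then $\bvtsubsrule$ by induction on $\llcxM$, using the former as the base case. The two statements meet exactly at the variable clause \eqref{align:map-var-R} of $\mapLcToDi{\cdot}{\cdot}$, where the degenerate instance of $\bvtsubsrule$ collapses onto an instance of $\bvttradruleinp$; this is why it is natural to dispatch $\bvttradruleinp$ first.

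For $\bvttradruleinp$ I would argue that it is the general \OpNameSeq-transitivity $\bvttradrulein$ read off at the output occurrence. By Lemma~\ref{lemma:Output names are linear} the name $\natmo$ occurs exactly once in $\mapLcToDi{\llcxM}{\atmo}$, and inspection of \eqref{align:map-var-R}--\eqref{align:map-app-R} shows this occurrence is always the right-hand argument of a \OpNameSeq, sitting under some nesting of \OpNameRen-binders, \OpNamePar\ and \OpNameCop\ none of which binds $\atmo$. Writing $\mapLcToDi{\llcxM}{\atmo}$ as $\strS\vlscn{\strR}$ with $\strR$ that \OpNameSeq\ and $\strS\vlhole$ the surrounding context, the instance of $\bvttradrulein$ that cuts the junction inside $\strR$ with a fresh pair leaves $\natmp$ in place of $\natmo$ and floats the emerging forwarder to the top \OpNamePar\ level --- exactly as the \OpNameCop\ and \OpNameRen\ cases of the derivation of $\bvttradrulein$ do. Since replacing the unique output $\atmo$ by a fresh $\atmp$ is by definition $\mapLcToDi{\llcxM}{\atmp}$, the reduct is $\vlsbr[\mapLcToDi{\llcxM}{\atmp};<\atmp;\natmo>]$, and the proviso of $\bvttradrulein$ holds because $\atmp$ is fresh and $\atmo$ is free in $\strS\vlhole$. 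This is precisely $\bvttradruleinp$.

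For $\bvtsubsrule$ I would induct on $\llcxM$, which, as $\mapLcToDi{\cdot}{\cdot}$ is defined only on variables, abstractions and applications, is a complete case split, and I would use the hypothesis $\llcxX\in\strFN{\mapLcToDi{\llcxM}{\atmo}}$ throughout. If $\llcxM\equiv\llcxX$, then $\llcxM\subst{\llcxN}{\llcxX}\equiv\llcxN$ and $\mapLcToDi{\llcxM}{\atmo}\approx{\vlsbr<\llcxX;\natmo>}$, so the required derivation is the instance of $\bvttradruleinp$ for $\llcxN$ with the fresh output taken to be $\llcxX$ --- legitimate since clash-freeness gives $\llcxX\notin\strFN{\llcxN}$ --- followed by \OpNamePar-symmetry \eqref{align:symm-pa}. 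If $\llcxM\equiv\llcxF{\llcxY}{\llcxM'}$ then $\llcxY\not\equiv\llcxX$, and the induction hypothesis for $\llcxM'$, applied inside the context $\vlfo{\llcxY}{\vlhole}$ (rules act in any context), rewrites $\mapLcToDi{\llcxM\subst{\llcxN}{\llcxX}}{\atmo}$ to $\vlfo{\llcxY}{\vlsbr[\mapLcToDi{\llcxM'}{\atmo};\mapLcToDi{\llcxN}{\llcxX}]}$; Context extrusion (Proposition~\ref{proposition:Context extrusion}) then floats $\mapLcToDi{\llcxN}{\llcxX}$ out of the scope of $\llcxY$, leaving $\vlfo{\llcxY}{\mapLcToDi{\llcxM'}{\atmo}}\equiv\mapLcToDi{\llcxM}{\atmo}$ in \OpNamePar\ with it. If $\llcxM\equiv\llcxA{\llcxM_1}{\llcxM_2}$, linearity puts $\llcxX$ free in exactly one argument, say $\llcxM_1$; the induction hypothesis applied to the translation $\mapLcToDi{\llcxM_1}{\atmp'}$ of $\llcxM_1$ inside \eqref{align:map-app-R} (its bound output channel being alpha-renamed to a fresh $\atmp'$) introduces $\mapLcToDi{\llcxN}{\llcxX}$ at the top \OpNamePar\ level under the binder $\vlex{\atmp'}{\vlhole}$, and a single use of Context extrusion past that binder produces $\vlsbr[\mapLcToDi{\llcxM}{\atmo};\mapLcToDi{\llcxN}{\llcxX}]$; the subcase $\llcxX$ free in $\llcxM_2$ is identical up to \OpNamePar-commutativity.

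The rule applications above are routine deep-inference rewriting; the delicate part, which I expect to be the main obstacle, is the non-capture bookkeeping that licenses each extrusion. At every step one must check that the names crossed --- the fresh channels $\atmp,\atmp',\atmq$ created by $\mapLcToDi{\cdot}{\cdot}$ and the \OpNameRen-bound image $\llcxY$ of a $\lambda$-binder --- lie outside $\strFN{\mapLcToDi{\llcxN}{\llcxX}}=\strFN{\llcxN}\cup\Set{\llcxX}$. This is exactly where linearity of \llcxlinlamcalc\ and clash-freeness of $\subst{\llcxN}{\llcxX}$ are used: they force $\llcxX,\llcxY\notin\strFN{\llcxN}$ and confine $\llcxX$ to a single argument of each application, so that the proviso of $\bvttradrulein$ and the hypothesis of Context extrusion are met at each appeal.
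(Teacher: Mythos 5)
Your proposal is correct and follows essentially the same route as the paper: $\bvttradruleinp$ first (the paper derives it by an explicit induction on $\llcxM$ whose cases amount to your single appeal to the general $\bvttradrulein$ at the unique occurrence of $\natmo$), then $\bvtsubsrule$ by induction on $\llcxM$, with the variable case bottoming out in $\bvttradruleinp$ and the binder cases discharged by extruding $\mapLcToDi{\llcxN}{\llcxX}$ past \OpNameRen\ via $\bvtrdrulein$ and \eqref{align:alpha-intro}, exactly as in Proposition~\ref{proposition:Context extrusion}. Your instantiation of the fresh name of $\bvttradruleinp$ to $\llcxX$ in the base case neatly collapses the paper's three-way sub-case analysis on the form of $\llcxN$ into a single step.
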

\begin{proof}
Concerning $\bvttradruleinp$, we reason inductively on the size
of $\mapLcToDi{\cdot}{\cdot}$, proceeding by cases on $\llcxM$.
(Details in Appendix~\ref{section:lemma:Simulating transition prime}.)
Concerning $\bvtsubsrule$, we reason inductively on the size of
$\vlsbr[\mapLcToDi{\llcxM}{\atmo};\mapLcToDi{\llcxN}{\llcxX}]$, exploiting
$\bvttradruleinp$.
(Details in Appendix~\ref{section:Proof of lemma:Deriving substitution}.)
\end{proof}
%%%%%%%%%%%%
\begin{lemma}[\textbf{\textit{Linear $\beta$ reduction in $\BVT$}}]
\label{lemma:Simulating llcxBeta}
For every $\llcxM, \llcxN, \atmo$, and $\llcxX$, in $\BVT$, we can derive:
{\small
{\small
\begin{center}
\begin{tabular}{cccc}
% &\qquad\qquad\qquad&
$\vlinf{\bvtsinrule}{}
       {\mapLcToDi{\llcxA{\llcxF{\llcxX}{\llcxM}}{\llcxN}}{\atmo}}
       {\mapLcToDi{\llcxE{\llcxM}{\llcxX}{\llcxN}}{\atmo}}
$
% \\\\
% $\vlinf{\bvtsvarule}{}
%        {\mapLcToDi{\llcxE{\llcxX}{\llcxX}{\llcxP}}{\atmo}}
%        {\mapLcToDi{\llcxP}{\atmo}}$
% &\quad&
% $\vlinf{\bvtslrule}{}
%        {\mapLcToDi{\llcxE{\llcxF{\llcxY}{\llcxM}}{\llcxX}{\llcxP}}{\atmo}}
%        {\mapLcToDi{\llcxF{\llcxY}{\llcxE{\llcxM}{\llcxX}{\llcxP}}}{\atmo}}
% $
%  \\\\
% $\vlinf{\bvtsalrule}{}
%        {\mapLcToDi{\llcxE{\llcxA{\llcxM}{\llcxN}}{\llcxX}{\llcxP}}{\atmo}}
%        {\mapLcToDi{\llcxA{\llcxE{\llcxM}{\llcxX}{\llcxP}}{\llcxN}}{\atmo}
%         \quad \llcxX\in\llcxFV{\llcxM}}
% $
% &\quad&
% $\vlinf{\bvtsarrule}{}
%        {\mapLcToDi{\llcxE{\llcxA{\llcxM}{\llcxN}}{\llcxX}{\llcxP}}{\atmo}}
%        {\mapLcToDi{\llcxA{\llcxM}{\llcxE{\llcxN}{\llcxX}{\llcxP}}}{\atmo}
%         \quad
%         \llcxX\in\llcxFV{\llcxN}}
% $
\end{tabular}
\end{center}
}
} %\small
\end{lemma}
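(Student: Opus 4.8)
The plan is to read the rule $\bvtsinrule$ directly off the derivation in~\eqref{equation:LLCX-simulating-beta-proof}. That derivation already has, as its conclusion (bottom), the encoding $\mapLcToDi{\llcxA{\llcxF{\llcxX}{\llcxM}}{\llcxN}}{\atmo}$ of the redex and, as its premise (top), the encoding $\mapLcToDi{\llcxE{\llcxM}{\llcxX}{\llcxN}}{\atmo}$ of the contractum, so it is precisely a $\BVT$ derivation witnessing the derivability of $\bvtsinrule$. Hence I would not invent a new argument: the entire content of the lemma is that every step of~\eqref{equation:LLCX-simulating-beta-proof} is a legitimate step of $\BVT$ (a rule instance, or a congruence among \eqref{align:negation-atom}--\eqref{align:alpha-symm}), which in turn rests on the two derivable rules $\bvtsubsrule$ and $\bvttradrulep$ already supplied by Lemma~\ref{lemma:Deriving substitution}.

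Reading~\eqref{equation:LLCX-simulating-beta-proof} from the conclusion upward, I would proceed as follows. First, unfold $\mapLcToDi{\llcxA{\llcxF{\llcxX}{\llcxM}}{\llcxN}}{\atmo}$ by clauses~\eqref{align:map-fun-R} and~\eqref{align:map-app-R}, exposing the two scopes $\vlfo{\llcxX}{\mapLcToDi{\llcxM}{\atmp}}$ and $\vlex{\atmq}{\mapLcToDi{\llcxN}{\atmq}}$ side by side under the application's $\vlex{\atmp}{\cdot}$. Then (i) rename the bound output channel $\atmq$ of $\mapLcToDi{\llcxN}{\atmq}$ to $\llcxX$ via the $\alpha$-rule~\eqref{align:alpha-varsub}; (ii) apply $\bvtrdrule$ to collapse $\vlfo{\llcxX}{\cdot}$ and $\vlex{\llcxX}{\cdot}$ into the single scope $\vlfo{\llcxX}{\vlsbr[\mapLcToDi{\llcxM}{\atmp};\mapLcToDi{\llcxN}{\llcxX}]}$; (iii) apply the derivable rule $\bvtsubsrule$ of Lemma~\ref{lemma:Deriving substitution} to rewrite this body to $\mapLcToDi{\llcxM\subst{\llcxN}{\llcxX}}{\atmp}$; (iv) erase $\vlfo{\llcxX}{\cdot}$ by~\eqref{align:alpha-intro}; (v) apply the derivable rule $\bvttradrulep$ of Lemma~\ref{lemma:Deriving substitution} to forward the intermediate channel $\atmp$ onto the output $\atmo$; and (vi) erase the outermost $\vlex{\atmp}{\cdot}$ by~\eqref{align:alpha-intro}, reaching the premise $\mapLcToDi{\llcxM\subst{\llcxN}{\llcxX}}{\atmo}$.

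The only genuine work is discharging the side-conditions of the congruences and of the two derivable rules, all of which follow from linearity together with Lemma~\ref{lemma:Output names are linear}. For step (i), the rewriting $\vlex{\atmq}{\mapLcToDi{\llcxN}{\atmq}}\approx\vlex{\llcxX}{\mapLcToDi{\llcxN}{\llcxX}}$ is licensed by~\eqref{align:alpha-varsub} precisely because $\atmq$ is the \emph{unique} output channel of $\mapLcToDi{\llcxN}{\atmq}$ (Lemma~\ref{lemma:Output names are linear}), so that $\mapLcToDi{\llcxN}{\atmq}\subst{\llcxX}{\atmq}=\mapLcToDi{\llcxN}{\llcxX}$, and because $\llcxX\notin\strFN{\mapLcToDi{\llcxN}{\atmq}}$; the latter holds since hygiene and the disjointness clause in the formation of $\llcxA{\llcxF{\llcxX}{\llcxM}}{\llcxN}$ give $\llcxX\notin\llcxFV{\llcxN}$. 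For step (iii), the hypothesis $\llcxX\in\strFN{\mapLcToDi{\llcxM}{\atmp}}$ required by Lemma~\ref{lemma:Deriving substitution} is met because linearity forces $\llcxX$ to occur (exactly once) in $\llcxM$. For step (iv) I would use that, after substitution, $\llcxX\notin\strFN{\mapLcToDi{\llcxM\subst{\llcxN}{\llcxX}}{\atmp}}$, its single occurrence having been replaced and $\llcxX\notin\llcxFV{\llcxN}$; and for step (vi) that the fresh internal channel $\atmp$ no longer occurs once it has been forwarded onto $\atmo$.

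I expect the main obstacle to be exactly this bookkeeping: keeping the freshness and uniqueness invariants of the internal channels $\atmp,\atmq$ and of the bound variable $\llcxX$ straight across the $\alpha$-steps, so that $\bvtsubsrule$ and $\bvttradrulep$ apply with the intended channels. Once these invariants are fixed via Lemma~\ref{lemma:Output names are linear} and the disjointness conditions built into $\llcxSet{}$, the derivation~\eqref{equation:LLCX-simulating-beta-proof} goes through unchanged and establishes that $\bvtsinrule$ is derivable in $\BVT$.
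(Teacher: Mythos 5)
Your proposal is correct and coincides with the paper's own argument: the paper's proof of this lemma is precisely the observation that the derivation~\eqref{equation:LLCX-simulating-beta-proof} exhibits $\bvtsinrule$ as derivable, with the $\alpha$-steps justified by the uniqueness of output channels and the linearity constraints, and with $\bvtsubsrule$ and $\bvttradrulep$ supplied by Lemma~\ref{lemma:Deriving substitution}. Your additional bookkeeping of the side-conditions matches the remarks (i)--(v) the paper makes immediately after that display.
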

\begin{proof}
The rule $\bvtsinrule$ is derived in~\eqref{equation:LLCX-simulating-beta-proof} exploiting the definition of $\mapLcToDi{\cdot}{\cdot}$, and Lemma~\ref{lemma:Deriving substitution}.
\end{proof}
%%%%%%%%%%%%%%
\paragraph{Proof of Theorem~\ref{theorem:Completeness of SBVT}.}
By induction on $\Size{\llcxSOSJud{\llcxM}{\llcxN}}$, proceeding by cases on the
last rule in~\eqref{equation:LLCXS-rewriting-relation} used 
for proving $\llcxSOSJud{\llcxM}{\llcxN}$. If the last rule is $\bvtsinrule$, then
Lemma~\ref{lemma:Simulating llcxBeta} implies the thesis.
Let the last rule be ${\llcxSOStrarulein}$. The inductive hypothesis
implies the existence of $\bvtDder_0$, and $\bvtDder_1$:
{\small
\[
\vlderivation                                                   {
\vlde{\bvtDder_0}{}
     {\mapLcToDi{\llcxM}{\atmo}
     }                                                           {
\vlde{\bvtDder_1}{}
     {\mapLcToDi{\llcxP}{\atmo}
     }                                                           {
\vlhy{\mapLcToDi{\llcxN}{\atmo}
     }                                                            }}}
\]
} %\small
\noindent
In all the remaining cases we proceed as here above, exploiting that $\BVT$
is a \DI\ system, so we can apply deeply, namely in any context, every of its rules.
%%%%%%%%%%%%%%%%%
\begin{remark}
\label{corollary:Completeness of BVT}
As a corollary, under the same assumption as Theorem~\ref{theorem:Completeness of SBVT}, we have 
$\bvtJudGen{\BVT}{}\vlsbr[\mapLcToDi{\llcxM}{\atmo}
                         ;\vlne{\mapLcToDi{\llcxN}{\atmo}}]$ because we can derive 
$\bvtintdrulein$ in $\BVT$, and we can plug it on top of $\bvtDder$.
\end{remark}
\section{Conclusions and future work}
\label{section:Conclusions and future work}
On the computational interpretation side of proof-search inside $ \BVT $, this work makes no reference to soundness of $ \BVT $ \wrt\ \llcxlinlamcalc.
Soundness is the reverse of completeness. For every $\llcxM, \llcxN$, and $\atmo$, if 
$\vlderivation{
 \vlde{\bvtDder}{\BVT}
      {\mapLcToDi{\llcxM}{\atmo}}{
 \vlhy{\mapLcToDi{\llcxN}{\atmo}}}
} $, then $\llcxSOSJud{\llcxM}{\llcxN}$. A counter example to it is:
%%%%%%%%
{\small
\begin{eqnarray*}
\vlstore{
       \mapLcToDi{\llcxA{\llcxA{\llcxF{\llcxX}
 			               {\llcxM}}
			        {\llcxP}}
		         {\llcxQ}
                  }{\atmo}
        }
\vlread
 & = &
\vlstore{\vlex{\atmr}
              {\vlsbr[\vlex{\atms}
                            {[\vlfo{\llcxX}
                                   {\mapLcToDi{\llcxM}{\atms}}
                              ;\vlex{\llcxp}
                                    {\mapLcToDi{\llcxP}
                                               {\llcxp}
                                    }
                             ;<\atms;\llcxnr>
                            ]}
                      ;\vlex{\llcxq}
                            {\mapLcToDi{\llcxQ}
                                       {\llcxq}
                            }
                      ;<\atmr;\llcxno>
                      ]}
        }
\vlread
\\ & \approx &
\vlstore{\vlex{\atmr}
              {\vlsbr[\vlex{\atms}
                            {[\vlfo{\llcxX}
                                   {\mapLcToDi{\llcxM}{\atms}}
                              ;\vlex{\llcxp}
                                    {\mapLcToDi{\llcxP}
                                               {\llcxp}
                                    }
                             ;<\atms;\llcxnr>
                            ]}
                      ;\vlex{\atms}
                            {\vlex{\llcxq}
                                  {\mapLcToDi{\llcxQ}
                                             {\llcxq}
                                  }}
                      ;<\atmr;\llcxno>
                      ]}
        }
\vlread
\\ & \approx &
\vlstore{\vlex{\atmr}
              {\vlsbr[\vlex{\atms}
                            {[\vlfo{\llcxX}
                                   {\mapLcToDi{\llcxM}{\atms}}
                             ;\vlex{\llcxq}
                                   {\mapLcToDi{\llcxQ}
                                              {\llcxq}}
                             ;\vlex{\llcxp}
                                   {\mapLcToDi{\llcxP}
                                              {\llcxp}
                                   }
                             ;<\atms;\llcxnr>
                            ]}
                      ;<\atmr;\llcxno>
                      ]}
        }
\vlread
\end{eqnarray*}
}%\small
where we would erroneously substitute (the mapping of) $ \llcxQ $ for 
(the mapping of) $ \llcxX $ in (the mapping of) $ \llcxM $.
We think essentially two ways exist to react to the lack of soundness of $ \BVT $ \wrt\ \llcxlinlamcalc. The first is in \cite{Roversi:2010-LLCexDI,Roversi:TLCA11} which proves a weak, and not so interesting form of soundness.
The second way is replacing the target language \llcxlinlamcalc, so moving towards the programme that \cite{Brus:02:A-Purely:wd} begins. It suggests that the natural computational paradigm \wrt\ which $\BVT$ can be sound, is some extension of $ \CCSR $, the fragment of Milner $\CCS$ with sequential and parallel composition only. This is coming work, indeed.
%%%%%%%%%
\par
On the proof-theoretical side, whose concern is the minimal, and incremental extension of $\SBV$, an example of which is $\SBVT $, we plan to keep investigating self-dual operators. By By means of a self-dual operator, and in accordance with the proof-search-as-computation paradigm, we plan to model non deterministic choice.
Candidate rules that model a self-dual non-deterministic choice are\footnote{The conjecture about the existence of the two rules $\bvtpludrulein$,
and $\bvtpluurulein$, that model non-deterministic choice, results from
discussions with Alessio Guglielmi.}:
{\small
\[
\vlinf{\bvtpludrule}{}
        {{\vlsbr[[\strR\vlpl\strU];\strT]}}
        {{\vlsbr[[\strR;\strT]\vlpl[\strU;\strT]]}}
\qquad\qquad
\vlinf{\bvtpluurule}{}
        {{\vlsbr[(\strR;\strT)\vlpl(\strU;\strT)]}}
        {{\vlsbr([\strR\vlpl\strU];\strT)}}
\]}
We think they are interesting because they would internalize the non deterministic choice that we apply at the meta-level when searching for proofs, or derivations, inside $ \SBVT $ or $ \SBV $.
%%%%%%%%%%%%%%%%%%%%%%%%%%%%%%%%%%%%%%%%%%%%%%%%%%%%%%%%%%%%%%%%%%%%%%%%%%
\bibliographystyle{plain}
\bibliography{Roversi-SBVQ-proof-theory}
%%%%%%%%%%%%%%%%%%%%%%%%%%%%%%%%%%%%%%%%%%%%%%%%%%%%%%%%%%%%%%%%%%%%%%%%%%
\appendix
\section{Proof of \textit{\textbf{Context extrusion}}
(Proposition~\ref{proposition:Context extrusion},
page~\pageref{proposition:Context extrusion})}
\label{section:Proof of proposition:Context extrusion}
By induction on
$\Size{\strS\vlhole}$, proceeding by cases on the form of
$\strS\vlhole$. The base is with $\strS\vlhole\equiv\vlhole$. The statement
holds simply because (i)
$\strS\vlsbr[\strR;\strT]
 \equiv\vlsbr[\strS\vlscn{\strR};\strT]
 \equiv\vlsbr[\strR;\strT]$,
and (ii) $\vlsbr[\strR;\strT]$ is a structure, so, by definition, a derivation.
%%%% first case
\par
As a \emph{first case}, let $\strS\vlhole\equiv {\vlsbr<\strS'\vlhole;\strU>}$.
Then:
{\small
\[
\vlderivation                                            {
\vliq{\bvtseqdrule
     ,\eqref{align:unit-seq}}{}
     {\vlsbr[\strS\vlscn{\strR};\strT]
      \equiv\vlsbr[<\strS'\vlscn{\strR};\strU>;\strT]} {
\vlde{\bvtDder}{}
     {\vlsbr<[\strS'\vlscn{\strR};\strT];\strU>}      {
\vlhy{\vlsbr<\strS'[\strR;\strT];\strU>
      \equiv\strS\vlsbr[\strR;\strT]                  }}}}
\]
} %\small
where $\bvtDder$ exists by inductive hypothesis which holds thanks to
$\Size{\strS'\vlhole}<\Size{\strS\vlhole}$.
If, instead $\strS\vlhole\equiv {\vlsbr(\strS'\vlhole;\strU)}$,
we can proceed as here above, using ${\bvtswirulein}$ in place of
${\bvtseqdrulein}$.
%%%% second case
\par
As a \emph{second case}, let $\strS\vlhole\equiv\vlfo{\atma}{\strS'\vlhole}$.
Without loss of generality, thanks to \eqref{align:alpha-varsub}, we can assume
$\atma\not\in\strFN{\strT}$.
Then:
{\small
\[
\vlderivation                                                 {
\vliq{\bvtrdrule
     ,\eqref{align:alpha-intro}}{}
     {\vlsbr[\strS\vlscn{\strR};\strT]
      \equiv\vlsbr[\vlfo{\atma}{\strS'\vlscn{\strR}};\strT]
      \equiv\vlsbr[\vlfo{\atma}{\strS'\vlscn{\strR}};
                   \vlfo{\atma}{\strT}]
      } {
\vlde{\bvtDder}{}
     {\vlfo{\atma}{\vlsbr[\strS'\vlscn{\strR};\strT]}}  {
\vlhy{\vlfo{\atma}{\strS'\vlsbr[\strR;\strT]         }
      \equiv\strS\vlsbr[\strR;\strT]                   }}}}
\]
} %\small
where $\bvtDder$ exists by inductive hypothesis which holds thanks to
$\Size{\strS'\vlhole}<\Size{\strS\vlhole}$.
\section{Proof of \textit{\textbf{Shallow
Splitting}} (Proposition~\ref{proposition:Shallow Splitting},
page~\pageref{proposition:Shallow Splitting})}
\label{section:Proof of proposition:Shallow Splitting}
\begin{description}
%%%%%%%%%%%%%% FIRST AND SECOND STATEMENT
\item[Proof of Points~\ref{enum:Shallow-Splitting-seq}
and~\ref{enum:Shallow-Splitting-copar}.]
We prove the two statements simultaneously, by induction on the
lexicographic order
$(\Size{\strU},\Size{\bvtPder})$, where $\strU$ is one among
$\vlsbr[<\strR;\strT>;\strP]$, and $\vlsbr[(\strR;\strT);\strP]$,
proceeding by cases on the last rule $\bvtrhorule$ of $\bvtPder$.
%%% first case
\par
As a \emph{first case} for both points~\ref{enum:Shallow-Splitting-seq}
and~\ref{enum:Shallow-Splitting-copar} we assume the redex of $\bvtrhorule$ is
inside one among $\strR, \strT$ or $\strP$. So, $\bvtPder$ is one between:
{\small
\[
\vlderivation                         {
\vlin{\bvtrhorule}{}
     {\vlsbr[<\strR;\strT>;\strP]}   {
\vlpr{\bvtPder'}{}
     {\vlsbr[<\strR';\strT'>;\strP']}}}
\qquad\qquad
\vlderivation                         {
\vlin{\bvtrhorule}{}
     {\vlsbr[(\strR;\strT);\strP]}   {
\vlpr{\bvtPder''}{}
     {\vlsbr[(\strR';\strT');\strP']}}}
\]
} %\small
where only one among $\strR', \strT', \strP'$ is the reduct of $\bvtrhorule$. We can
conclude by applying the inductive hypothesis on $\bvtPder'$, or $\bvtPder''$, and
$\bvtrhorule$ in the obvious way.
%%% second case
\par
As a \emph{second case} of Point~\ref{enum:Shallow-Splitting-seq}
let $\bvtrhorule$ be $\bvtseqdrulein$ with
$\vlsbr[<<\strR';\strR''>;\strT>;[<\strP';\strP''>;\strP''']]$ as its redex. So,
$\bvtPder$ can be:
{\small
\[
\vlderivation                                               {
\vliq{\eqref{align:assoc-se},\eqref{align:assoc-pa}}{}
     {\vlsbr[<<\strR';\strR''>;\strT>;[<\strP';\strP''>;\strP''']]}   {
\vlin{\bvtseqdrule}{}
     {\vlsbr[[<\strR';<\strR'';\strT>>;<\strP';\strP''>];\strP''']}   {
\vlpr{\bvtPder'}{}
     {\vlsbr[<[\strR';\strP'];[<\strR'';\strT>;\strP'']>;\strP''']}  }}}
\]
} %\small
Thanks to
$\vlstore{\vlsbr[<<\strR';\strR''>;\strT>;[<\strP';\strP''>;\strP''']]}
 \Size{\vlread}=
 \vlstore{\vlsbr[<[\strR';\strP'];[<\strR'';\strT>;\strP'']>;\strP''']}
 \Size{\vlread}$ and
$\Size{\bvtPder'}$ $<$ $\Size{\bvtPder}$ the inductive hypothesis
holds on $\bvtPder'$ which implies
$\vlstore{\vlsbr<\strP_1;\strP_2>\bvtJ \strP'''}
 \bvtInfer{\bvtEder}{\vlread}$, and
$\vlstore{\ \bvtJ \vlsbr[[\strR';\strP'];\strP_1]}
 \bvtInfer{\bvtPder''}{\vlread}$, and
$\vlstore{\ \bvtJ \vlsbr[[<\strR'';\strT>;\strP''];\strP_2]}
 \bvtInfer{\bvtQder}{\vlread}$.
\par
Thanks to
$\vlstore{\vlsbr[[<\strR'';\strT>;\strP''];\strP_2]}
 \Size{\vlread}$ $<$
 $\vlstore{\vlsbr[<[\strR';\strP'];[<\strR'';\strT>;\strP'']>;\strP''']}
 \Size{\vlread}$ 
the inductive hypothesis
holds on $\bvtQder$ which implies
$\vlstore{\vlsbr<\strU_1;\strU_2> \bvtJ \vlsbr[\strP'';\strP_2]}
 \bvtInfer{\bvtEder'}{\ \vlread}$, and
$\vlstore{\ \bvtJ \vlsbr[\strR'';\strU_1]}
 \bvtInfer{\bvtQder' }{\vlread}$, and
$\vlstore{\ \bvtJ \vlsbr[\strT;\strU_2]}
 \bvtInfer{\bvtQder''}{\vlread}$.
\par
The first derivation and the first proof of $\BVT$ in the statement we have to prove
are:
{\small
\[
\vlderivation                                       {
\vlde{\bvtEder}{}
     {\vlsbr[<\strP';\strP''>;\strP''']}           {
\vlin{\bvtseqdrule}{}
     {\vlsbr[<\strP';\strP''>;<\strP_1;\strP_2>]} {
\vlde{\bvtEder'}{}
     {\vlsbr<[\strP';\strP_1];[\strP'';\strP_2]>}{
\vliq{\eqref{align:assoc-se}}{}
     {\vlsbr<[\strP';\strP_1];<\strU_1;\strU_2>>}{
\vlhy{\vlsbr<<[\strP';\strP_1];\strU_1>;\strU_2>}}}}}}
\qquad\qquad
\vlderivation                                                  {
\vlin{\bvtseqdrule}{}
     {\vlsbr[<\strR';\strR''>;<[\strP';\strP_1];\strU_1>]}    {
\vliq{\eqref{align:assoc-pa}}{}
     {\vlsbr<[\strR';[\strP';\strP_1]];[\strR'';\strU_1]>}   {
\vlde{\bvtQder'}{}
     {\vlsbr<[[\strR';\strP'];\strP_1];[\strR'';\strU_1]>}  {
\vliq{\eqref{align:unit-seq}}{}
     {\vlsbr<[[\strR';\strP'];\strP_1];\vlone>}            {
\vlpr{\bvtPder''}{}
     {\vlsbr[[\strR';\strP'];\strP_1]}                     }}}}}
\]
} %\small
The second proof of $\BVT$ in the statement we have to prove is $\bvtQder''$.
\par
The situation with  $\bvtrhorule\equiv\bvtseqdrulein$ and
$\vlsbr[<\strR;<\strT';\strT''>>;[<\strP';\strP''>;\strP''']]$ its redex is
analogous to one one just developed.
\par
As a \emph{third case} of Point~\ref{enum:Shallow-Splitting-seq} let $\bvtrhorule$
be $\bvtseqdrulein$ with
$\vlsbr[<\strR;\strT>;[<\strP';\strP''>;[\strU';\strU'']]]$ as its redex. So,
$\bvtPder$ can be:
{\small
\[
\vlderivation                                                               {
\vliq{\eqref{align:assoc-pa},\eqref{align:unit-seq}}{}
     {\vlsbr[<\strR;\strT>;[<\strP';\strP''>;[\strU';\strU'']]]}           {
\vliq{\bvtseqdrule,\eqref{align:unit-seq}}{}
     {\vlsbr[[<\vlone;<\strR;\strT>>;<\strP';\strP''>];[\strU';\strU'']]} {
\vlpr{\bvtPder'}{}
     {\vlsbr[<\strP';[<\strR;\strT>;\strP'']>;[\strU';\strU'']]}          }}}
\]
} %\small
Thanks to
$\vlstore{\vlsbr[<\strR;\strT>;[<\strP';\strP''>;[\strU';\strU'']]]}
 \Size{\vlread}=
 \vlstore{\vlsbr[<\strP';[<\strR;\strT>;\strP'']>;[\strU';\strU'']]}
 \Size{\vlread}$ and
$\Size{\bvtPder'}$ $<$ $\Size{\bvtPder}$ the inductive hypothesis
holds on $\bvtPder'$ yielding
$\vlstore{\vlsbr<\strP_1;\strP_2>\bvtJ [\strU';\strU'']}
 \bvtInfer{\bvtEder}{\vlread}$, and
$\vlstore{\ \bvtJ \vlsbr[\strP';\strP_1]}
 \bvtInfer{\bvtPder''}{\vlread}$, and
$\vlstore{\ \bvtJ \vlsbr[[<\strR;\strT>;\strP''];\strP_2]}
 \bvtInfer{\bvtQder}{\vlread}$.
\par
Thanks to
$\vlstore{\vlsbr[[<\strR;\strT>;\strP''];\strP_2]}
 \Size{\vlread}$ $<$
$\vlstore{\vlsbr[<\strR;\strT>;[<\strP';\strP''>;[\strU';\strU'']]]}
 \Size{\vlread}$ and
$\Size{\bvtPder'}$ $<$ $\Size{\bvtPder}$ the inductive hypothesis
holds on $\bvtQder$ yielding
$\vlstore{\vlsbr<\strU_1;\strU_2>\bvtJ [\strP'';\strP_2]}
 \bvtInfer{\bvtEder'}{\vlread}$, and
$\vlstore{\ \bvtJ \vlsbr[\strR;\strU_1]}
 \bvtInfer{\bvtQder'}{\vlread}$, and
$\vlstore{\ \bvtJ \vlsbr[\strT;\strU_2]}
 \bvtInfer{\bvtQder''}{\vlread}$.
\par
Both $\bvtQder'$, and $\bvtQder''$ are the two proofs of $\BVT$ of the statement we
have to prove. The derivation of $\BVT$ is:
{\small
\[
\vlderivation                                       {
\vlde{\bvtEder}{}
     {\vlsbr[<\strP';\strP''>;[\strU';\strU'']]}{
\vlin{\bvtseqdrule}{}
     {\vlsbr[<\strP';\strP''>;<\strP_1;\strP_2>]} {
\vlde{\bvtEder'}{}
     {\vlsbr<[\strP';\strP_1];[\strP'';\strP_2]>}{
\vlde{\bvtPder''}{}
     {\vlsbr<[\strP';\strP_1];<\strU_1;\strU_2>>}{
\vliq{\eqref{align:unit-seq}}{}
     {\vlsbr<\vlone;<\strU_1;\strU_2>>}{
\vlhy{\vlsbr<\strU_1;\strU_2>}}}}}}}
\]
} %\small
%%% fourth case
\par
As a \emph{fourth case} of Point~\ref{enum:Shallow-Splitting-seq} let $\bvtrhorule$
be
$\bvtswirulein$ with $\vlsbr[<\strR;\strT>;[(\strP';\strP'');\strP''']]$ as its
redex. So, $\bvtPder$ can be:
{\small
\[
\vlderivation                                            {
\vliq{\eqref{align:assoc-pa},\eqref{align:symm-pa}}{}
     {\vlsbr[<\strR;\strT>;[(\strP';\strP'');\strP''']]}  {
\vlin{\bvtswirule}{}
     {\vlsbr[[(\strP';\strP'');<\strR;\strT>];\strP''']}  {
\vliq{\eqref{align:symm-pa}}{}
     {\vlsbr[([\strP';<\strR;\strT>];\strP'');\strP''']}  {
\vlpr{\bvtPder'}{}
     {\vlsbr[([<\strR;\strT>;\strP'];\strP'');\strP''']}}}}}
\]
} %\small
Thanks to
$\vlstore{\vlsbr[<\strR;\strT>;[(\strP';\strP'');\strP''']]}
 \Size{\vlread} =
 \vlstore{\vlsbr[([<\strR;\strT>;\strP'];\strP'');\strP''']}
 \Size{\vlread}$ and
$\Size{\bvtPder'}$ $<$ $\Size{\bvtPder}$, by the inductive hypothesis,
Point~\ref{enum:Derivability of subformulas-copar} applies to $\bvtPder'$.
This means there exist
$\vlstore{\vlsbr[\strP_1;\strP_2]\bvtJ \strP'''}
 \bvtInfer{\bvtEder}{\vlread}$, and
$\vlstore{\ \bvtJ \vlsbr[[<\strR;\strT>;\strP'];\strP_1]}
 \bvtInfer{\bvtPder''}{\vlread}$, and
$\vlstore{\ \bvtJ \vlsbr[\strP'';\strP_2]}
 \bvtInfer{\bvtQder}{\vlread}$.
\par
Thanks to
$\vlstore{\vlsbr[<\strR;\strT>;\strP']}
 \Size{\vlread}$ $<$
$\vlstore{\vlsbr[([<\strR;\strT>;\strP'];\strP'');\strP''']}
 \Size{\vlread}$ the
inductive hypothesis holds on $\bvtPder''$ which implies
$\vlstore{\vlsbr<\strU_1;\strU_2>\bvtJ \vlsbr[\strP';\strP_1]}
 \bvtInfer{\bvtEder'}{\vlread}$, and
$\vlstore{\ \bvtJ \vlsbr[\strR;\strU_1]}
 \bvtInfer{\bvtQder_1}{\vlread}$, and
$\vlstore{\ \bvtJ \vlsbr[\strT;\strU_2]}
 \bvtInfer{\bvtQder_2}{\vlread}$.
Both $\bvtQder_1$, and $\bvtQder_2$ are the two proofs of $\BVT$ in the statement we
have to prove. The derivation is:
{\small
\[
\vlderivation                                        {
\vlde{\bvtEder}{}
     {\vlsbr[(\strP';\strP'');\strP''']}            {
\vliq{\eqref{align:symm-co},\eqref{align:assoc-pa},\eqref{align:symm-pa}}{}
     {\vlsbr[(\strP';\strP'');[\strP_1;\strP_2]]}  {
\vlin{\bvtswirule}{}
     {\vlsbr[[(\strP'';\strP');\strP_2];\strP_1]}  {
\vlde{\bvtQder}{}
     {\vlsbr[([\strP'';\strP_2];\strP');\strP_1]} {
\vliq{\eqref{align:unit-co}}{}
     {\vlsbr[(\vlone;\strP');\strP_1]} {
\vlde{\bvtEder'}{}
     {\vlsbr[\strP';\strP_1]} {
\vlhy{\vlsbr<\strU_1;\strU_2>}                   }}}}}}}
\]
} %\small
%%% fifth case
\par
As a \emph{fifth case} of Point~\ref{enum:Shallow-Splitting-seq} let $\bvtrhorule$ be
$\bvtrdrulein$ with $\vlsbr[<\strR;\strT>;\strP]$ as its redex. This means
$\strP\approx\vlfo{\atma}{\strU}$, for some $\strU$ and $\atma$, that, without loss
of generality, thanks to \eqref{align:alpha-varsub}, we can assume such that 
$\atma\in\strFN{\strU}$, and $\vlstore{\vlsbr<\strR;\strT>}\atma\not\in\strFN{\vlread}$. So, by \eqref{align:alpha-intro}, $\vlstore{\vlsbr<\strR;\strT>}\vlread
     \approx
     \vlstore{\vlfo{\atma}{\vlsbr<\strR;\strT>}}
     \vlread$, the derivation is:
{\small
\[
\vlderivation                                                     {
\vlin{\bvtrdrule}{}
     {\vlsbr[\vlfo{\atma}{<\strR;\strT>};\vlfo{\atma}{\strU}]}{
\vlpr{\bvtPder'}{}
     {\vlfo{\atma}{\vlsbr[<\strR;\strT>;\strU]}}}}
\]
} %\small
Point~\ref{enum:Derivability of subformulas-fo} of
Proposition~\ref{proposition:Derivability of structures in BVT}, applied on
$\bvtPder'$, implies:
{\small
\[
\vlderivation                      {
\vlpr{\bvtPder''}{}
     {\vlsbr[<\strR;\strT>;\strU]}}
\]
} %\small
Thanks to
$\vlstore{\vlsbr[<\strR;\strT>;\strU]}
 \Size{\vlread}$ $<$
 $\vlstore{\vlsbr[\vlfo{\atma}{<\strR;\strT>};\vlfo{\atma}{\strU}]}\Size{\vlread}$
the inductive hypothesis holds on $\bvtPder''$ which implies
$\vlstore{\vlsbr<\strP_1;\strP_2>}
 \bvtInfer{\bvtEder}{\ \vlread \bvtJ \strU}$, and
$\vlstore{\vlsbr[\strR;\strP_1]}
 \bvtInfer{\bvtQder_1}{\ \bvtJ \vlread}$, and
$\vlstore{\vlsbr[\strT;\strP_2]}
 \bvtInfer{\bvtQder_2}{\ \bvtJ \vlread}$.
Both $\bvtQder_1$, and $\bvtQder_2$ are the two poofs of $\BVT$ in the stetement
we have to prove. The derivation is
$\vlstore{\vlfo{\atma}{\vlsbr<\strP_1;\strP_2>}}
 \vlread \bvtJ \vlfo{\atma}{\strU}$, we obtain from $\bvtEder$ thanks to
Fact~\ref{fact:OpNamRen is a scoping operator}.

%%%%%%%%%%%%%% SECOND STATEMENT DA METTERE COME CONTINUAZIONE DEL PRIMO
\par
We have exhausted the interesting cases relative to
Point~\ref{enum:Shallow-Splitting-seq}.
\par
Recall that we prove Point~\ref{enum:Shallow-Splitting-seq}, and
Point~\ref{enum:Shallow-Splitting-copar} simultaneously, by induction on the
lexicographic order $(\Size{\strU},\Size{\bvtPder})$, where $\strU$ is one among
$\vlsbr[<\strR;\strT>;\strP]$, and $\vlsbr[(\strR;\strT);\strP]$, proceeding by cases
on the last rule $\bvtrhorule$ of $\bvtPder$. Now we explore the cases relative to
Point~\ref{enum:Shallow-Splitting-copar}.
\par
%%% first case Point~\ref{enum:Shallow-Splitting-copar}
As a \emph{first case} of Point~\ref{enum:Shallow-Splitting-copar} let $\bvtrhorule$
be $\bvtseqdrulein$ with $\vlsbr[(\strR;\strT);[<\strP';\strP''>;[\strU';\strU'']]]$
as its redex. So,
$\bvtPder$ can be:
{\small
\[
\vlderivation                                                               {
\vliq{\eqref{align:assoc-pa}
     ,\eqref{align:symm-pa}
     ,\eqref{align:unit-seq}}{}
     {\vlsbr[(\strR;\strT);[<\strP';\strP''>;[\strU';\strU'']]]} {
\vliq{\bvtseqdrule}{}
     {\vlsbr[[<[(\strR;\strT);\strU'];\vlone>;<\strP';\strP''>];\strU'']} {
\vliq{\eqref{align:unit-seq}
     ,\eqref{align:symm-pa}}{}
     {\vlsbr[<[[(\strR;\strT);\strU'];\strP'];[\vlone;\strP'']>;\strU'']} {
\vlpr{\bvtPder'}{}
     {\vlsbr[<[[(\strR;\strT);\strP'];\strU'];\strP''>;\strU'']}}}}}
\]
} %\small
Thanks to
$\vlstore{\vlsbr[(\strR;\strT);[<\strP';\strP''>;[\strU';\strU'']]]}
 \Size{\vlread} =
 \vlstore{\vlsbr[<[[(\strR;\strT);\strP'];\strU'];\strP''>;\strU'']}
 \Size{\vlread}$ and
$\Size{\bvtPder'}$ $<$ $\Size{\bvtPder}$, by the inductive hypothesis,
Point~\ref{enum:Derivability of subformulas-seq} applies to $\bvtPder'$.
There exist
$\vlstore{\vlsbr<\strU_1;\strU_2>\bvtJ \strU''}
 \bvtInfer{\bvtEder}{\vlread}$, and
$\vlstore{\ \bvtJ \vlsbr[[[(\strR;\strT);\strP'];\strU'];\strU_1]}
 \bvtInfer{\bvtPder''}{\vlread}$, and
$\vlstore{\ \bvtJ \vlsbr[\strP'';\strU_2]}
 \bvtInfer{\bvtQder}{\vlread}$.
\par
Thanks to
$\vlstore{\vlsbr[[[(\strR;\strT);\strP'];\strU'];\strU_1]}
 \Size{\vlread}$ $<$
$\vlstore{\vlsbr[<[[(\strR;\strT);\strP'];\strU'];\strP''>;\strU'']}
 \Size{\vlread}$ the inductive hypothesis
holds on $\bvtPder''$ which implies
$\vlstore{\vlsbr[\strP_1;\strP_2]\bvtJ \vlsbr[[\strP';\strU'];\strU_1]}
 \bvtInfer{\bvtEder'}{\vlread}$, and
$\vlstore{\ \bvtJ \vlsbr[\strR;\strP_1]}
 \bvtInfer{\bvtQder_1}{\vlread}$, and
$\vlstore{\ \bvtJ \vlsbr[\strT;\strP_2]}
 \bvtInfer{\bvtQder_2}{\vlread}$.
Both $\bvtQder_1$, and $\bvtQder_2$ are the two proofs of $\BVT$ in the
statement we have to prove. The derivation of $\BVT$ in the statement we have to
prove is:
{\small
\[
\vlderivation                                        {
\vlde{\bvtEder}{}
     {\vlsbr[<\strP';\strP''>;[\strU';\strU'']]}   {
\vliq{\eqref{align:symm-pa}}{}
     {\vlsbr[<\strP';\strP''>;[\strU';<\strU_1;\strU_2>]]}  {
\vlin{\bvtseqdrule}{}
     {\vlsbr[[<\strP';\strP''>;<\strU_1;\strU_2>];\strU']}  {
\vlde{\bvtQder}{}
     {\vlsbr[<[\strP';\strU_1];[\strP'';\strU_2]>;\strU']} {
\vliq{\eqref{align:unit-seq}
     ,\eqref{align:symm-pa}
     ,\eqref{align:assoc-pa}}{}
     {\vlsbr[<[\strP';\strU_1];\vlone>;\strU']} {
\vlde{\bvtEder'}{}
     {\vlsbr[[\strP';\strU'];\strU_1]} {
\vlhy{\vlsbr[\strP_1;\strP_2]}                   }}}}}}}
\]
} %\small
%%% second case Point~\ref{enum:Shallow-Splitting-copar}
\par
As a \emph{second case} of Point~\ref{enum:Shallow-Splitting-copar} let $\bvtrhorule$
be $\bvtswirulein$ with
$\vlsbr[((\strR';\strR'');(\strT';\strT''));[\strP';\strP'']]$ as its
redex. So, $\bvtPder$ can be:
{\small
\[
\vlderivation                                                          {
\vliq{\eqref{align:assoc-pa},\eqref{align:symm-pa}
     ,\eqref{align:assoc-co},\eqref{align:symm-co}}{}
     {\vlsbr[((\strR';\strR'');(\strT';\strT''));[\strP';\strP'']]}   {
\vlin{\bvtswirule}{}
     {\vlsbr[[((\strR';\strT');(\strR'';\strT''));\strP'];\strP'']}  {
\vlpr{\bvtPder'}{}
     {\vlsbr[([(\strR';\strT');\strP'];(\strR'';\strT''));\strP'']}  }}}
\]
} %\small
Both 
$\vlstore{\vlsbr[((\strR';\strR'');(\strT';\strT''));[\strP';\strP'']]}
 \Size{\vlread} =
 \vlstore{\vlsbr[([(\strR';\strT');\strP'];(\strR'';\strT''));\strP'']}
 \Size{\vlread}$, and
$\Size{\bvtPder'}$ $<$ $\Size{\bvtPder}$ imply that the inductive hypothesis applies to
$\bvtPder'$. There exist
$\vlstore{\vlsbr[\strP_1;\strP_2]\bvtJ \strP''}
 \bvtInfer{\bvtEder}{\vlread}$, and
$\vlstore{\ \bvtJ \vlsbr[[(\strR';\strT');\strP'];\strP_1]}
 \bvtInfer{\bvtPder''}{\vlread}$, and
$\vlstore{\ \bvtJ \vlsbr[(\strR'';\strT'');\strP_2]}
 \bvtInfer{\bvtQder}{\vlread}$.
\par
Both
$\vlstore{\vlsbr[[(\strR';\strT');\strP'];\strP_1]}
 \Size{\vlread}$ $<$
$\vlstore{\vlsbr[([(\strR';\strT');\strP'];(\strR'';\strT''));\strP'']}
 \Size{\vlread}$ the
inductive hypothesis holds on $\bvtPder''$ which implies
$\vlstore{\vlsbr[\strU_1';\strU_2']\bvtJ \vlsbr[\strP';\strP_1]}
 \bvtInfer{\bvtEder'}{\vlread}$, and
$\vlstore{\ \bvtJ \vlsbr[\strR';\strU_1']}
 \bvtInfer{\bvtQder'_1}{\vlread}$, and
$\vlstore{\ \bvtJ \vlsbr[\strT';\strU_2']}
 \bvtInfer{\bvtQder'_2}{\vlread}$.
\par
Thanks to
$\vlstore{\vlsbr[(\strR'';\strT'');\strP_2]}
 \Size{\vlread}$ $<$
$\vlstore{\vlsbr[([(\strR';\strT');\strP'];(\strR'';\strT''));\strP'']}
 \Size{\vlread}$ the
inductive hypothesis holds on $\bvtQder$ which implies
$\vlstore{\vlsbr[\strU_1'';\strU_2'']\bvtJ \strP_2}
 \bvtInfer{\bvtEder''}{\vlread}$, and
$\vlstore{\ \bvtJ \vlsbr[\strR'';\strU_1'']}
 \bvtInfer{\bvtQder_1''}{\vlread}$, and
$\vlstore{\ \bvtJ \vlsbr[\strT'';\strU_2'']}
 \bvtInfer{\bvtQder_2''}{\vlread}$.
\par
The derivation and the two proofs of $\BVT$ in the statement we have to prove are:
{\small
\[
\vlderivation                                                {
\vlde{\bvtEder}{}
     {\vlsbr[\strP';\strP'']}                               {
\vliq{\eqref{align:assoc-pa}}{}
     {\vlsbr[\strP';[\strP_1;\strP_2]]}                    {
\vlde{\bvtEder'}{}
     {\vlsbr[[\strP';\strP_1];\strP_2]}                   {
\vlde{\bvtEder''}{}
     {\vlsbr[[\strU_1';\strU_2'];\strP_2]}               {
\vliq{\eqref{align:assoc-pa},\eqref{align:symm-pa}}{}
     {\vlsbr[[\strU_1';\strU_2'];[\strU_1'';\strU_2'']]}               {
\vlhy{\vlsbr[[\strU_1';\strU_1''];[\strU_2';\strU_2'']]} }}}}}}
\]
} %\small
{\small
\[
\vlderivation                                             {
\vliq{\eqref{align:assoc-pa}, \bvtswirule}{}
     {\vlsbr[(\strR';\strR'');[\strU_1';\strU_1'']]}     {
\vlde{\bvtQder_1'}{}
     {\vlsbr[([\strR';\strU_1'];\strR'');\strU_1'']]}   {
\vliq{\eqref{align:unit-co}}{}
     {\vlsbr[(\vlone;\strR'');\strU_1'']]}             {
\vlpr{\bvtQder_1''}{}
     {\vlsbr[\strR'';\strU_1'']}                      }}}}
\qquad\qquad
\vlderivation                                             {
\vliq{\eqref{align:assoc-pa}, \bvtswirule}{}
     {\vlsbr[(\strT';\strT'');[\strU_2';\strU_2'']]}     {
\vlde{\bvtQder_2'}{}
     {\vlsbr[([\strT';\strU_2'];\strT'');\strU_2'']]}   {
\vliq{\eqref{align:unit-co}}{}
     {\vlsbr[(\vlone;\strT'');\strU_2'']]}             {
\vlpr{\bvtQder_2''}{}
     {\vlsbr[\strT'';\strU_2'']}                      }}}}
\]
} %\small
%%% third case Point~\ref{enum:Shallow-Splitting-copar}
\par
As a \emph{third case} of Point~\ref{enum:Shallow-Splitting-copar} let $\bvtrhorule$
be $\bvtswirulein$ with
$\vlsbr[(\strR;\strT);[(\strP';\strP'');[\strU';\strU'']]]$ as its redex. So,
$\bvtPder$ can be:
{\small
\[
\vlderivation                                                               {
\vliq{\eqref{align:assoc-pa},\eqref{align:symm-pa}}{}
     {\vlsbr[(\strR;\strT);[(\strP';\strP'');[\strU';\strU'']]]}           {
\vliq{\bvtswirule}{}
     {\vlsbr[[(\strP';\strP'') ;[(\strR;\strT);\strU']];\strU'']} {
\vlpr{\bvtPder'}{}
     {\vlsbr[([\strP';[(\strR;\strT);\strU']];\strP'');\strU'']}          }}}
\]
} %\small
Both
$\vlstore{\vlsbr[([\strP';[(\strR;\strT);\strU']];\strP'');\strU'']}
 \Size{\vlread} =
 \vlstore{\vlsbr[(\strR;\strT);[(\strP';\strP'');[\strU';\strU'']]]}
 \Size{\vlread}$, and
$\Size{\bvtPder'}$ $<$ $\Size{\bvtPder}$ imply that the inductive hypothesis
holds on $\bvtPder'$. So, we have
$\vlstore{\vlsbr[\strP_1;\strP_2] \bvtJ \strU''}
 \bvtInfer{\bvtEder}{\vlread}$, and
$\vlstore{\ \bvtJ \vlsbr[\strP'';\strP_2]}
 \bvtInfer{\bvtPder''}{\vlread}$, and
$\vlstore{\ \bvtJ \vlsbr{[[\strP';[(\strR;\strT);\strU']];\strP_1]}}
 \bvtInfer{\bvtQder}{\vlread}$.
\par
Both
$\vlstore{\vlsbr{[[\strP';[(\strR;\strT);\strU']];\strP_1]}}
 \Size{\vlread}$ $<$
$\vlstore{\vlsbr[([\strP';[(\strR;\strT);\strU']];\strP'');\strU'']}
 \Size{\vlread}$, and
$\Size{\bvtPder'}$ $<$ $\Size{\bvtPder}$ imply that the inductive hypothesis
holds on $\bvtQder$. SO, we have
$\vlstore{\vlsbr[\strU_1;\strU_2]\bvtJ [\strP';[\strU';\strP_1]]}
 \bvtInfer{\bvtEder'}{\vlread}$, and
$\vlstore{\ \bvtJ \vlsbr[\strR;\strU_1]}
 \bvtInfer{\bvtQder'}{\vlread}$, and
$\vlstore{\ \bvtJ \vlsbr[\strT;\strU_2]}
 \bvtInfer{\bvtQder''}{\vlread}$.
\par
Both $\bvtQder'$, and $\bvtQder''$ are the two proofs of $\BVT$ of the statement we
have to prove. The derivation of $\BVT$ is:
{\small
\[
\vlderivation                                       {
\vlde{\bvtEder}{}
     {\vlsbr[(\strP';\strP'');[\strU';\strU'']]}{
\vlin{\eqref{align:assoc-pa}
     ,\eqref{align:symm-pa}
     ,\eqref{align:symm-co}}{}
     {\vlsbr[(\strP';\strP'');[\strU';[\strP_1;\strP_2]]]} {
\vlin{\bvtswirule}{}
     {\vlsbr[[(\strP'';\strP');\strP_2];[\strU';\strP_1]]} {
\vlde{\bvtPder''}{}
     {\vlsbr[([\strP'';\strP_2];\strP');[\strU';\strP_1]]}{
\vliq{\eqref{align:unit-co}}{}
     {\vlsbr[(\vlone;\strP');[\strU';\strP_1]]}{
\vlde{\bvtEder'}{}
     {\vlsbr[\strP';[\strU';\strP_1]]}{
\vlhy{\vlsbr[\strU_1;\strU_2]}}}}}}}}
\]
} %\small
%%% fourth case Point~\ref{enum:Shallow-Splitting-copar}
\par
As a \emph{fourth case} of Point~\ref{enum:Shallow-Splitting-copar} let $\bvtrhorule$
be $\bvtrdrulein$ with $\vlsbr[(\strR;\strT);\strP]$ as its redex. This means
$\strP\approx\vlfo{\atma}{\strU}$, for some $\strU$ and $\atma$, that, without loss
of generality, thanks to \eqref{align:alpha-varsub}, we can assume such that
$\vlstore{\vlsbr(\strR;\strT)}\atma\not\in\strFN{\vlread}$. So, by
\eqref{align:alpha-intro},
$\vlstore{\vlsbr(\strR;\strT)}\vlread
  \approx
  \vlstore{\vlfo{\atma}{\vlsbr(\strR;\strT)}}
  \vlread$, and
$\bvtPder$ is:
{\small
\[
\vlderivation                                                     {
\vlin{\bvtrdrule}{}
     {\vlsbr[\vlfo{\atma}{(\strR;\strT)};\vlfo{\atma}{\strU}]}{
\vlpr{\bvtPder'}{}
     {\vlfo{\atma}{\vlsbr[(\strR;\strT);\strU]}}}}
\]
} %\small
Point~\ref{enum:Derivability of subformulas-fo} of
Proposition~\ref{proposition:Derivability of structures in BVT}, applied on
$\bvtPder'$, implies:
{\small
\[
\vlderivation                      {
\vlpr{\bvtPder''}{}
     {\vlsbr[(\strR;\strT);\strU]}}
\]
} %\small
Thanks to
$\vlstore{\vlsbr[(\strR;\strT);\strU]}
 \Size{\vlread}$ $<$
 $\vlstore{\vlsbr[\vlfo{\atma}{(\strR;\strT)};\vlfo{\atma}{\strU}]}\Size{\vlread}$
the inductive hypothesis holds on $\bvtPder''$ which implies
$\vlstore{\vlsbr[\strP_1;\strP_2]}
 \bvtInfer{\bvtEder}{\ \vlread \bvtJ \strU}$, and
$\vlstore{\vlsbr[\strR;\strP_1]}
 \bvtInfer{\bvtQder_1}{\ \bvtJ \vlread}$, and
$\vlstore{\vlsbr[\strT;\strP_2]}
 \bvtInfer{\bvtQder_2}{\ \bvtJ \vlread}$.
Both $\bvtQder_1$, and $\bvtQder_2$ are the two poofs of $\BVT$ in the stetement
we have to prove. The derivation is
$\vlstore{\vlfo{\atma}{\vlsbr(\strP_1;\strP_2)}}
 \vlread \bvtJ \vlfo{\atma}{\strU}$, we obtain from $\bvtEder$ thanks to
Fact~\ref{fact:OpNamRen is a scoping operator}.

%%%%%%%%%%%%%% THIRD STATEMENT
\item[Proof of Point~\ref{enum:Shallow-Splitting-atom}.]
It holds by induction on
$(\Size{\strR},\Size{\bvtPder})$, proceeding by cases on the last rule $\bvtrhorule$
of $\bvtPder$.
%%% first case
\par
As a \emph{first case} let the redex of $\bvtrhorule$ be inside $\strP$.
So, $\bvtPder$ is:
\[
\vlderivation                         {
\vlin{\bvtrhorule}{}
     {\vlsbr[\strR;\strP]}   {
\vlpr{\bvtPder'}{}
     {\vlsbr[\strR;\strP']}}}
\]
We can conclude by applying the inductive hypothesis on $\bvtPder'$, and
$\bvtrhorule$ in the obvious way.
%%% second case
\par
As a \emph{second case}, let $\bvtrhorule$ be $\bvtseqdrulein$ with
$\strP\approx\vlsbr[<\strP';\strP''>;\strP''']$.
Also, let $\strR_0$, and $\strR_1$ such that $\strR\approx\vlsbr[\strR_0;\strR_1]$.
The proof $\bvtPder$ can be:
{\small
\[
\vlderivation                                                       {
\vliq{\eqref{align:assoc-pa},\eqref{align:unit-seq}}{}
     {\vlsbr[[\strR_0;\strR_1];[<\strP';\strP''>;\strP''']]}       {
\vliq{\bvtseqdrule,\eqref{align:assoc-pa}
     ,\eqref{align:unit-seq},\eqref{align:symm-pa}}{}
     {\vlsbr[\strR_0;[<\strR_1;\vlone>;<\strP';\strP''>];\strP''']}{
\vlpr{\bvtPder'}{}
     {\vlsbr[<[\strR_1;\strP'];\strP''>;[\strR_0;\strP''']]}}}}
\]
} %\small
Point~\ref{enum:Shallow-Splitting-seq} applies to
$\bvtPder'$.
There are structures $\strP_1, \strP_2$, \ST\
$\vlstore{
 \vlsbr<\strP_1;\strP_2>
 \bvtJudGen{}{}
 \vlsbr[\strR_0;\strP''']}
 \bvtInfer{\bvtEder_0}{\vlread}
 $, and
$\vlstore{\bvtJudGen{}{} {\vlsbr[[\strR_1;\strP'];\strP_1]
                          \approx
                          \vlsbr[\strR_1;[\strP';\strP_1]]}}
 \bvtInfer{\bvtQder_0}{\ \vlread}$, and
$\vlstore{\bvtJudGen{}{} {\vlsbr[\strP'';\strP_2]}}
 \bvtInfer{\bvtQder_1}{\ \vlread}$.
\par
We observe that
$\vlstore{\strR_1}
 \Size{\vlread} <
 \vlstore{\vlsbr[\strR_0;\strR_1]}
 \Size{\vlread}$.
So, the inductive hypothesis holds on $\bvtQder_0$. It implies that,
for every $\strR_0^1, \strR_1^1$,
if $\strR_1\approx\vlsbr[\strR_0^1;\strR_1^1]$,
then
$\vlstore{\vlne{\strR_1^1}
          \bvtJudGen{}{}
          \vlsbr[\strR_0^1;[\strP';\strP_1]]}
  \bvtInfer{\bvtEder_1}{\ \vlread}$.
In particular, it holds
$\vlstore{\vlne{\strR_1}
          \bvtJudGen{}{}
          \vlsbr[\vlone;[\strP';\strP_1]]
          \approx
          \vlsbr[\strP';\strP_1]
         }
  \bvtInfer{\bvtEder_1'}{\ \vlread}$ by
taking $\strR_1\approx\strR_1^1$, and $\vlone\approx\strR_0^1$.
\par
We can conclude as follows:
{\small
\[
\vlderivation                                     {
\vliq{\eqref{align:assoc-pa},\eqref{align:symm-pa}}{}
     {\vlsbr[\strR_0;[<\strP';\strP''>;\strP''']]}    {
\vlde{\bvtEder_0}{}
     {\vlsbr[<\strP';\strP''>;[\strR_0;\strP''']]}  {
\vlin{\bvtseqdrule}{}
     {\vlsbr[<\strP';\strP''>
            ;<\strP_1;\strP_2>]} {
\vlde{\bvtEder_1'}{}
     {\vlsbr<[\strP';\strP_1]
             ;[\strP'';\strP_2]>} {
\vlde{\bvtQder_1}{}
     {\vlsbr<\vlne{\strR_1}
            ;[\strP'';\strP_2]>} {
\vliq{\eqref{align:unit-seq}}{}
     {\vlsbr<\vlne{\strR_1};\vlone>} {
\vlhy{\vlne{\strR_1}}}}}}}}}
\]
} %\small
%%% third case
\par
As a \emph{third case} let $\bvtrhorule$ be $\bvtseqdrulein$ with
$\strP\approx\vlsbr[<\strP';\strP''>;[\strR';\strR'']]$. So, $\bvtPder$ can be:
{\small
\[
\vlderivation                                                                   {
\vliq{\eqref{align:assoc-pa},\eqref{align:unit-seq}}{}
     {\vlsbr[[\strR_0;\strR_1];[<\strP';\strP''>;[\strR';\strR'']]]}           {
\vliq{\bvtseqdrule,\eqref{align:unit-seq}}{}
     {\vlsbr[[<\vlone;[\strR_0;\strR_1]>;<\strP';\strP''>];[\strR';\strR'']]} {
\vlpr{\bvtPder'}{}
     {\vlsbr[<\strP';[[\strR_0;\strR_1];\strP'']>;[\strR';\strR'']]}          }}}
\]
} %\small
Point~\ref{enum:Shallow-Splitting-seq} applies to
$\bvtPder'$.
There are structures $\strP_1, \strP_2$ \ST\
there exist
$\vlstore{
 \vlsbr<\strP_1;\strP_2>
 \bvtJudGen{}{}
 \vlsbr[\strR';\strR'']}
 \bvtInfer{\bvtEder_0}{\vlread}
 $, and
$\vlstore{\bvtJudGen{}{} {\vlsbr[\strP';\strP_1]}}
 \bvtInfer{\bvtQder_0}{\ \vlread}$, and\\
$\vlstore{\bvtJudGen{}{}
         {\vlsbr[[[\strR_0;\strR_1];\strP''];\strP_2]
          \approx
          \vlsbr[\strR_1;[\strR_0;[\strP'';\strP_2]]]
          }}
 \bvtInfer{\bvtQder_1}{\ \vlread}$.
\par
We observe that
$\vlstore{\strR_1}
 \Size{\vlread} <
 \vlstore{\vlsbr[\strR_0;\strR_1]}
 \Size{\vlread}$.
So, the inductive hypothesis holds on $\bvtQder_1$.
It implies that, for every $\strR_0^1, \strR_1^1$,
if $\strR_1\approx\vlsbr[\strR_0^1;\strR_1^1]$, then
$\vlstore{\vlne{\strR_1^1}
          \bvtJudGen{}{}
          \vlsbr[\strR_0^1;[\strR_0;[\strP'';\strP_2]]]}
  \bvtInfer{\bvtEder_1}{\ \vlread}$.
In particular, it holds
$\vlstore{\vlne{\strR_1}
          \bvtJudGen{}{}
          \vlsbr[\vlone;[\strR_0;[\strP'';\strP_2]]]
          \approx
          \vlsbr[\strR_0;[\strP'';\strP_2]]
         }
  \bvtInfer{\bvtEder_1'}{\ \vlread}$,
by taking $\strR_1\approx\strR_1^1$, and $\vlone\approx\strR_0^1$.
\par
We can conclude as follows:
{\small
\[
\vlderivation                                                      {
\vliq{\eqref{align:unit-seq},\eqref{align:assoc-pa}}{}
     {\vlsbr[\strR_0;[<\strP';\strP''>;[\strR';\strR'']]]}         {
\vlin{\bvtseqdrule}{}
     {\vlsbr[[<\vlone;\strR_0>;<\strP';\strP''>];[\strR';\strR'']]}{
\vliq{\eqref{align:unit-pa}}{}
     {\vlsbr[<[\vlone;\strP'];[\strR_0;\strP'']>;[\strR';\strR'']]}{
\vlde{\bvtEder_0}{}
     {\vlsbr[<\strP';[\strR_0;\strP'']>;[\strR';\strR'']]}         {
\vlin{\bvtseqdrule}{}
     {\vlsbr[<\strP';[\strR_0;\strP'']>
            ;<\strP_1;\strP_2>]}                                   {
\vliq{\eqref{align:assoc-pa}}{}
     {\vlsbr<[\strP';\strP_1]
                  ;[[\strR_0;\strP''];\strP_2]>}                   {
\vlde{\bvtEder_1'}{}
     {\vlsbr<[\strP';\strP_1]
                  ;[\strR_0;[\strP'';\strP_2]]>}                   {
\vlde{\bvtQder_0}{}
     {\vlsbr<[\strP';\strP_1]
             ;\vlne{\strR_1}>}                                     {
\vliq{\eqref{align:unit-seq}}{}
     {\vlsbr<\vlone;\vlne{\strR_1}>}                               {
\vlhy{\vlne{\strR_1}}}                                     }}}}}}}}}
\]
} %\small
%%% fourth case
\par
As a \emph{fourth case}  let $\bvtrhorule$ be
$\bvtswirulein$ with
$\strP\approx\vlsbr[(\strP';\strP'');\strP''']$.
Also, let $\strR_0$, and $\strR_1$ such that $\strR\approx\vlsbr[\strR_0;\strR_1]$.
The proof $\bvtPder$ can be:
{\small
\[
\vlderivation                                                       {
\vliq{\eqref{align:assoc-pa},\eqref{align:symm-pa}}{}
     {\vlsbr[[\strR_0;\strR_1];[(\strP';\strP'');\strP''']]}       {
\vlin{\bvtswirule}{}
     {\vlsbr[[\strR_1;(\strP';\strP'')];[\strR_0;\strP''']]}{
\vlpr{\bvtPder'}{}
     {\vlsbr[([\strR_1;\strP'];\strP'');[\strR_0;\strP''']]}}}}
\]
} %\small
Point~\ref{enum:Shallow-Splitting-copar} applies to
$\bvtPder'$.
There are structures $\strP_1, \strP_2$, \ST\
there exist
$\vlstore{
 \vlsbr[\strP_1;\strP_2]
 \bvtJudGen{}{}
 \vlsbr[\strR_0;\strP''']}
 \bvtInfer{\bvtEder_0}{\vlread}
 $, and
$\vlstore{\bvtJudGen{}{} {\vlsbr[[\strR_1;\strP'];\strP_1]}}
 \bvtInfer{\bvtQder_0}{\ \vlread}$, and
$\vlstore{\bvtJudGen{}{} {\vlsbr[\strP'';\strP_2]}}
 \bvtInfer{\bvtQder_1}{\ \vlread}$.
\par
We observe that
$\vlstore{\strR_1}
 \Size{\vlread} <
 \vlstore{\vlsbr[\strR_0;\strR_1]}
 \Size{\vlread}$.
So, the inductive hypothesis holds on $\bvtQder_0$.
It implies that, for every $\strR_0^1, \strR_1^1$, if
$\strR_1\approx\vlsbr[\strR_0^1;\strR_1^1]$,
then
$\vlstore{\vlne{\strR_1^1}
          \bvtJudGen{}{}
          \vlsbr[\strR_0^1;[\strP';\strP_1]]}
          \bvtInfer{\bvtEder_1}{\ \vlread}$.
In particular it holds
$\vlstore{\vlne{\strR_1}
          \bvtJudGen{}{}
          \vlsbr[\vlone;[\strP';\strP_1]]
          \approx
          \vlsbr[\strP';\strP_1]
         }
  \bvtInfer{\bvtEder_1'}{\ \vlread}$ by
taking $\strR_1\approx\strR_1^1$, and $\vlone\approx\strR_0^1$.
We can conclude as follows:
{\small
\[
\vlderivation                                           {
\vliq{\eqref{align:assoc-pa},\eqref{align:symm-pa}}{}
     {\vlsbr[\strR_0;[(\strP';\strP'');\strP''']]}    {
\vlde{\bvtEder_0}{}
     {\vlsbr[(\strP';\strP'');[\strR_0;\strP''']]}  {
\vliq{\eqref{align:assoc-pa}
     ,\bvtswirule
     ,\eqref{align:symm-co}}{}
     {\vlsbr[(\strP';\strP'')
                  ;[\strP_1;\strP_2]]}     {
\vlin{\bvtswirule}{}
     {\vlsbr[(\strP'';[\strP';\strP_1])
                    ;\strP_2]}             {
\vlde{\bvtEder_1'}{}
     {\vlsbr([\strP'';\strP_2]
                  ;[\strP';\strP_1])}     {
\vlde{\bvtQder_1}{}
     {\vlsbr([\strP'';\strP_2]
                  ;\vlne{\strR_1})}     {
\vliq{\eqref{align:unit-co}}{}
     {\vlsbr(\vlone;\vlne{\strR_1})} {
\vlhy{\vlne{\strR_1}}}}}}}}}}
\]
} %\small
%%% fifth case
\par
As a \emph{fifth case} let $\bvtrhorule$ be
$\bvtrdrulein$ with $\strP\approx\vlfo{\atma}{\strP'}$.
The proof $\bvtPder$ can be:
{\small
\[
\vlderivation                                                     {
\vliq{\eqref{align:alpha-intro}}{}
     {\vlsbr[[\strR_0;\strR_1];\vlfo{\atma}{\strP'}]}{
\vlin{\bvtrdrule}{}
     {\vlsbr[\vlfo{\atma}{[\strR_0;\strR_1]};\vlfo{\atma}{\strP'}]}{
\vlpr{\bvtPder'}{}
     {\vlfo{\atma}{\vlsbr[[\strR_0;\strR_1];\strP']}}}}}
\]
} %\small
because, thanks to \eqref{align:alpha-varsub}, we can always assume $\strP'$ is
such
that $\vlstore{\vlsbr[\strR_0;\strR_1]}\atma\not\in\strFN{\vlread}$.
Point~\ref{enum:Derivability of subformulas-fo} of
Proposition~\ref{proposition:Derivability of structures in BVT}, applied on
$\bvtPder'$, implies:
{\small
\[
\vlderivation                      {
\vlpr{\bvtPder''}{}
     {\vlsbr[[\strR_0;\strR_1];\strP']}}
\]
} %\small
We observe that
$\vlstore{\bvtPder''}
 \Size{\vlread}
 <
 \vlstore{\bvtPder}
 \Size{\vlread}$.
So the inductive hypothesis holds on $\bvtPder''$.
It implies that,
for every $\strR_0^1, \strR_1^1$, if
$\vlsbr[\strR_0;\strR_1]\approx\vlsbr[\strR_0^1;\strR_1^1]$,
there are
$\vlstore{\vlne{\strR_1^1}
          \bvtJ
          \vlsbr[\strR_0^1;\strP']}
 \bvtInfer{\bvtEder}{\ \vlread }$.
In particular it holds
$\vlstore{\vlne{\strR_1}
          \bvtJudGen{}{}
          \vlsbr[\strR_0;\strP']
         }
  \bvtInfer{\bvtEder_1'}{\ \vlread}$ by
taking $\strR_1\approx\strR_1^1$, and $\strR_0\approx\strR_0^1$.
We can conclude as follows:
{\small
\[
\vlderivation                                 {
\vliq{\eqref{align:alpha-intro}}{}
     {\vlsbr[\strR_0;\vlfo{\atma}{\strP'}]}   {
\vlin{\bvtrdrule}{}
     {\vlsbr[\vlfo{\atma}{\strR_0}
            ;\vlfo{\atma}{\strP'}]}           {
\vlde{\bvtEder_1'}{}
     {\vlfo{\atma}{\vlsbr[\strR_0;\strP']}}   {
\vliq{\eqref{align:alpha-intro}}{}
     {\vlfo{\atma}{\vlne{\strR_1}}}           {
\vlhy{\vlne{\strR_1}}}                     }}}}
\]
} %\small
The topmost instance of \eqref{align:alpha-intro} is legal thanks to
$\vlstore{\vlsbr[\strR_0;\strR_1]}\atma\not\in\strFN{\vlread}$.

%%%%%%%%%%%%%% FOURTH STATEMENT
\item[Proof of Point~\ref{enum:Shallow-Splitting-fo}.]
The proof is by induction on $\Size{\bvtPder}$, proceeding by cases on the last rule
$\bvtrhorule$ of $\bvtPder$.
%%% first case
\par
As a \emph{first case} let the last rule of $\bvtPder$ be $\bvtseqdrulein$
with $\vlsbr[\vlfo{\atma}{\strR};[<\strP';\strP''>;[\strU';\strU'']]]$ as its redex.
So, $\bvtPder$ can be:
{\small
\[
\vlderivation                                                           {
\vliq{\eqref{align:assoc-pa},\eqref{align:unit-seq}
     ,\bvtseqdrule,\eqref{align:unit-seq}          }{}
     {\vlsbr[\vlfo{\atma}{\strR};[<\strP';\strP''>;[\strU';\strU'']]]} {
\vlpr{\bvtPder'}{}
     {\vlsbr[<\strP';[\vlfo{\atma}{\strR};\strP'']>;[\strU';\strU'']]} }}
\]
} %\small
Point~\ref{enum:Shallow-Splitting-seq} applies to $\bvtPder'$.
There exist
$\vlstore{\vlsbr<\strP_1;\strP_2>\bvtJ [\strU';\strU'']}
 \bvtInfer{\bvtEder}{\vlread}$, and
$\vlstore{\ \bvtJ \vlsbr[\strP';\strP_1]}
 \bvtInfer{\bvtPder''}{\vlread}$, and
$\vlstore{\ \bvtJ \vlsbr[[\vlfo{\atma}{\strR};\strP''];\strP_2]}
 \bvtInfer{\bvtQder}{\vlread}$.
The inductive hypothesis holds on $\bvtQder$. Thanks to
$\vlstore{\vlsbr[[\vlfo{\atma}{\strR};\strP''];\strP_2]}
 \Size{\vlread}$ $<$
$\vlstore{\vlsbr[\vlfo{\atma}{\strR};[<\strP';\strP''>;[\strU';\strU'']]]}
 \Size{\vlread}$  we get
$\vlstore{\vlfo{\atma}{\strU}\bvtJ \vlsbr[\strP'';\strP_2]}
 \bvtInfer{\bvtEder'}{\vlread}$, and
$\vlstore{\ \bvtJ \vlsbr[\strR;\strU]}
 \bvtInfer{\bvtQder'}{\vlread}$.
The proof of $\BVT$ in the statement we have to prove is $\bvtQder'$.
The derivation of $\BVT$ in the statement we have to prove is:
{\small
\[
\vlderivation                                       {
\vlde{\bvtEder}{}
     {\vlsbr[<\strP';\strP''>;[\strU';\strU'']]}{
\vlin{\bvtseqdrule}{}
     {\vlsbr[<\strP';\strP''>;<\strP_1;\strP_2>]} {
\vlde{\bvtEder'}{}
     {\vlsbr<[\strP';\strP_1];[\strP'';\strP_2]>}{
\vlde{\bvtPder''}{}
     {\vlsbr<[\strP';\strP_1];\vlfo{\atma}{\strU}>}{
\vliq{\eqref{align:unit-seq}}{}
     {\vlsbr<\vlone;\vlfo{\atma}{\strU}>}{
\vlhy{\vlfo{\atma}{\strU}}}}}}}}
\]
} %\small
%let the last rule of $\bvtPder$ be $\bvtrdrulein$ with its
%redex falling inside $\strR$ or $\strP$. This means $\bvtPder$ is:
%{\small
%\[
%\vlderivation                                              {
%\vlin{\bvtrdrule}{}
%     {\vlsbr[\vlfo{\atma}{\strR};\strP]}{
%\vlpr{\bvtPder'}{}
%     {\vlsbr[\vlfo{\atma}{\strR'};\strP']}                }}
%\]
%} %\small
%where only one between $\strR'$, and  $\strP'$ differs from
%$\strR$, and  $\strP$ respectively.
%The inductive hypothesis applies to $\bvtPder'$ as
%$\Size{\bvtPder'}<\Size{\bvtPder}$. So,
%$\vlstore{\vlfo{\atma}{\strT}}\bvtInfer{\bvtEder}{\ \vlread \bvtJ \strP'}$, and
%$\vlstore{\vlsbr[\strR';\strT]}\bvtInfer{\bvtQder}{\ \bvtJ {\vlread}}$, for some
%$\strT$. We conclude by applying $\bvtrdrulein$ to $\strR'$ in $\bvtQder$ or
%$\strP'$ in $\bvtEder$.
%%% second case
\par
As a \emph{second case} let the last rule of $\bvtPder$ be
$\bvtseqdrulein$ with $\vlsbr[\vlfo{\atma}{\strR};[<\strP';\strP''>;\strP''']]$ as
its redex. So, $\bvtPder$ can be:
{\small
\[
\vlderivation                                                   {
\vliq{\eqref{align:assoc-pa},\eqref{align:unit-seq}
     ,\bvtseqdrule,\eqref{align:unit-pa}}{}
     {\vlsbr[\vlfo{\atma}{\strR};[<\strP';\strP''>;\strP''']]} {
\vlpr{\bvtPder'}{}
     {\vlsbr[<\strP';[\vlfo{\atma}{\strR};\strP'']>;\strP''']} }}
\]
} %\small
Point~\ref{enum:Shallow-Splitting-seq} applies to $\bvtPder'$.
There exist
$\vlstore{\vlsbr<\strP_1;\strP_2>\bvtJ \strP'''}
 \bvtInfer{\bvtEder}{\vlread}$, and
$\vlstore{\ \bvtJ \vlsbr[\strP';\strP_1]}
 \bvtInfer{\bvtPder''}{\vlread}$, and
$\vlstore{\ \bvtJ \vlsbr[\vlfo{\atma}{\strR};[\strP'';\strP_2]]}
 \bvtInfer{\bvtQder}{\vlread}$.
Thanks to
$\vlstore{\vlsbr[\vlfo{\atma}{\strR};[\strP'';\strP_2]]}
 \Size{\vlread}$ $<$
 $\vlstore{\vlsbr[\vlfo{\atma}{\strR};[<\strP';\strP''>;\strP''']]}
 \Size{\vlread}$
the inductive hypothesis holds on $\bvtQder$ which implies
$\vlstore{\vlfo{\atma}{\strU} \bvtJ \vlsbr[\strP'';\strP_2]}
 \bvtInfer{\bvtEder'}{\ \vlread}$, and
$\vlstore{\ \bvtJ \vlsbr[\strR;\strU]}
 \bvtInfer{\bvtQder'}{\vlread}$. The proof of $\BVT$ in the statement we have to
prove is $\bvtQder'$. The derivation is:
{\small
\[
\vlderivation                                       {
\vlde{\bvtEder}{}
     {\vlsbr[<\strP';\strP''>;\strP''']}           {
\vlin{\bvtseqdrule}{}
     {\vlsbr[<\strP';\strP''>;<\strP_1;\strP_2>]} {
\vlde{\bvtPder''}{}
     {\vlsbr<[\strP';\strP_1];[\strP'';\strP_2]>}{
\vliq{\eqref{align:unit-seq}}{}
     {\vlsbr<\vlone;[\strP'';\strP_2]>}         {
\vlde{\bvtEder'}{}
     {\vlsbr[\strP'';\strP_2]}                 {
\vlhy{\vlfo{\atma}{\strU}             }        }}}}}}
\]
} %\small
%%%% third case
%\par
%As a \emph{third case} let the last rule of $\bvtPder$ be
%$\bvtseqdrulein$ with
%$\vlsbr[\vlfo{\atma}{\strR};[<\strP';\strP''>;[\strU';\strU'']]]$ as its redex. So,
%$\bvtPder$ can be:
%{\small
%\[
%\vlderivation                                                           {
%\vliq{\eqref{align:assoc-pa},\eqref{align:unit-seq}
%     ,\bvtseqdrule,\eqref{align:unit-seq},\eqref{align:assoc-pa}}{}
%     {\vlsbr[\vlfo{\atma}{\strR};[<\strP';\strP''>;[\strU';\strU'']]]} {
%\vlpr{\bvtPder'}{}
%     {\vlsbr[\vlfo{\atma}{\strR};[<[\strP';\strU'];\strP''>;\strU'']]}}}
%\]
%} %\small
%Thanks to
%$\vlstore{\vlsbr[\vlfo{\atma}{\strR};[<\strP';\strP''>;[\strU';\strU'']]]}
% \Size{\vlread}=
% \vlstore{\vlsbr[\vlfo{\atma}{\strR};[<[\strP';\strU'];\strP''>;\strU'']]}
% \Size{\vlread}$ and
%$\Size{\bvtPder'}$ $<$ $\Size{\bvtPder}$ the inductive hypothesis
%holds on $\bvtPder'$ yielding
%$\vlstore{\vlfo{\atma}{\strU} \bvtJ \vlsbr[<[\strP';\strU'];\strP''>;\strU'']}
% \bvtInfer{\bvtEder}{\vlread}$, and
%$\vlstore{\ \bvtJ \vlsbr[\strR;\strU]}
% \bvtInfer{\bvtPder''}{\vlread}$. The proof of $\BVT$ in the statement we have to
%prove is $\bvtPder''$. The derivation of $\BVT$ is:
%{\small
%\[
%\vlderivation                                               {
%\vliq{\eqref{align:assoc-pa},\eqref{align:unit-seq}
%     ,\bvtseqdrule,\eqref{align:unit-seq},\eqref{align:assoc-pa}}{}
%     {\vlsbr[<\strP';\strP''>;[\strU';\strU'']]}          {
%\vlde{\bvtEder}{}
%     {\vlsbr[<[\strP';\strU'];\strP''>;\strU'']}         {
%\vlhy{\vlfo{\atma}{\strU}}                               }}}
%\]
%} %\small
%%% fourth case
\par
As a \emph{third case} let the last rule of $\bvtPder$ be
$\bvtswirulein$ with
$\vlsbr[\vlfo{\atma}{\strR};[(\strP';\strP'');\strP''']]$ as its redex.
So, $\bvtPder$ can be:
{\small
\[
\vlderivation                                                   {
\vlin{\eqref{align:assoc-pa},\eqref{align:symm-pa}
     ,\bvtswirule,\eqref{align:symm-pa}}{}
     {\vlsbr[\vlfo{\atma}{\strR};[(\strP';\strP'');\strP''']]} {
\vlpr{\bvtPder'}{}
     {\vlsbr[([\strP';\vlfo{\atma}{\strR}];\strP'');\strP''']}}}
\]
} %\small
Point~\ref{enum:Derivability of subformulas-copar} applies to $\bvtPder'$.
There exist
$\vlstore{\vlsbr[\strP_1;\strP_2]\bvtJ \strP'''}
 \bvtInfer{\bvtEder}{\vlread}$, and
$\vlstore{\ \bvtJ \vlsbr[[\vlfo{\atma}{\strR};\strP'];\strP_1]}
 \bvtInfer{\bvtPder''}{\vlread}$, and
$\vlstore{\ \bvtJ \vlsbr[\strP'';\strP_2]}
 \bvtInfer{\bvtQder}{\vlread}$.
Thanks to
$\vlstore{
 \vlsbr[\vlfo{\atma}{\strR};[\strP';\strP_1]]}
 \Size{\vlread}$ $<$
$\vlstore{\vlsbr[\vlfo{\atma}{\strR}
                ;[(\strP';\strP'');\strP''']
                ]
         }
 \Size{\vlread}$ the
inductive hypothesis holds on $\bvtPder''$ which implies
$\vlstore{\vlfo{\atma}{\strU}\bvtJ \vlsbr[\strP';\strP_1]}
 \bvtInfer{\bvtEder'}{\vlread}$, and
$\vlstore{\ \bvtJ \vlsbr[\strR;\strU]}
 \bvtInfer{\bvtQder'}{\vlread}$.
The proof of $\BVT$ in the statement we have to prove is $\bvtQder'$.
The derivation of $\BVT$ in the statement we have to prove is:
{\small
\[
\vlderivation                                        {
\vlde{\bvtEder}{}
     {\vlsbr[(\strP';\strP'');\strP''']}            {
\vlin{\eqref{align:symm-co},\eqref{align:assoc-pa}
     ,\eqref{align:symm-pa},\bvtswirule} { }
     {\vlsbr[(\strP';\strP'');[\strP_1;\strP_2]]}  {
\vlde{\bvtQder}{}
     {\vlsbr[([\strP'';\strP_2];\strP');\strP_1]} {
\vliq{\eqref{align:unit-co}}{}
     {\vlsbr[(\vlone;\strP');\strP_1]} {
\vlde{\bvtEder'}{}
     {\vlsbr[\strP';\strP_1]} {
\vlhy{\vlfo{\atma}{\strU}}                   }}}}}}
\]
} %\small
%%% fifth case
\par
As a \emph{fourth case} let the last rule of $\bvtPder$ be $\bvtrdrulein$ with
$\vlsbr[\vlfo{\atma}{\strR};\strP]$ as its redex. This means
$\strP\approx\vlfo{\atma}{\strU}$. So, $\bvtPder$ is:
{\small
\[
\vlderivation                                             {
\vliq{}{}
     {\vlsbr[\vlfo{\atma}{\strR};\strP]}                 {
\vlin{\bvtrdrule}{}
     {\vlsbr[\vlfo{\atma}{\strR};\vlfo{\atma}{\strU}]}  {
\vlpr{\bvtPder'}{}
     {\vlfo{\atma}{\vlsbr[\strR;\strU]}                }}}}
\]
} %\small
Point~\ref{enum:Derivability of subformulas-fo} of
Proposition~\ref{proposition:Derivability of structures in BVT}, applied on
$\bvtPder'$, implies the existence of 
$\vlstore{\vlsbr[\strR;\strU]}
 \bvtInfer{\bvtPder''}{\ \bvtJudGen{}{} \vlread}$, which is the proof of $\BVT$ in
the statement we have to prove. The derivation is
$\vlfo{\atma}{\strU} \bvtJudGen{}{} \vlfo{\atma}{\strU}$.
\end{description}
\section{Proof of \textit{Context Reduction} (Proposition~\ref{proposition:Context Reduction ALTERNATIVE}, page~\pageref{proposition:Context Reduction ALTERNATIVE})}
\label{section:Proof of proposition:Context Reduction ALTERNATIVE}
The proof is by induction on $\Size{\strS\vlhole}$, proceeding by cases on the
form of $\strS\vlhole$.
%%%% first case of context reduction
\par
As a \emph{first case}, let
$\strS\vlhole
 \approx \vlsbr<\strS'\vlhole;\strP>$.
So, the assumption is
$\vlstore{\vlsbr<\strS'\vlscn\strR;\strP>}
 \bvtInfer{\bvtPder}{\ \bvtJ{\vlread}}$.
Point~\ref{enum:Derivability of subformulas-seq} of
Proposition~\ref{proposition:Derivability of structures in BVT} implies
$\bvtInfer{\bvtPder'}{\ \bvtJ \strS'\,\vlscn{\strR}}$, and
$\bvtInfer{\bvtPder''}{\ \bvtJ \strP}$.
Thanks to
$\Size{\strS'\,\vlscn{\strR}} <
 \vlstore{\vlsbr<\strS'\,\vlscn{\strR};\strP>}
 \Size{\vlread}$
the inductive hypothesis holds on $\bvtPder'$.
There are $\strU$, and $\vec{\atmb}$ \ST, for every
$\strV$ with $\strFN{\strV}\cap\strBN{\strR}=\emptyset$, both
$\vlstore{
 \bvtInfer{\bvtDder}
          {\vlfo{\vec{\atmb}}{\vlsbr[\strV;\strU]}
           \bvtJ
           \strS'\,\vlscn{\strV}}}
 \vlread$,
and $\vlstore{\vlsbr[\strR;\strU]}
 \bvtInfer{\bvtPder'''}{\ \bvtJ \vlread}$.
The proof $\bvtPder'''$ is the one we are looking for.
To get the derivation we are looking for, we fix $\strV$ such that
$\strFN{\strV}\cap\strBN{\strR}=\emptyset$. This allows to use $\bvtDder$
as follows:
{\small \[
\vlderivation                                    {
\vlde{\bvtPder''}{}
     {\vlsbr<\strS'\vlscn{\strV};\strP>}        {
\vliq{\eqref{align:unit-seq}}{}
     {\vlsbr<\strS'\vlscn{\strV};\vlone>}      {
\vlde{\bvtDder}{}
     {\strS'\vlscn{\strV}}                    {
\vlhy{\vlfo{\vec{\atmb}}
           {\vlsbr[\strV;\strU]}}             }}}}
\] }%\small
%%%% second case of context reduction
\par 
As a \emph{second case}, let 
$\strS\vlhole\approx\vlsbr(\strS'\vlhole;\strP)$.
So, the assumption is
$\vlstore{\vlsbr(\strS'\,\vlscn\strR;\strP)}
 \bvtInfer{\bvtPder}{\ \bvtJ{\vlread}}$.
Point~\ref{enum:Shallow-Splitting-copar} of
Proposition~\ref{proposition:Derivability of structures in BVT}
implies $\bvtInfer{\bvtPder'}{\ \bvtJ \strS'\,\vlscn{\strR}}$, and
$\bvtInfer{\bvtPder''}{\ \bvtJ \strP}$.
Thanks to
$\Size{\strS'\,\vlscn{\strR}} <
 \vlstore{\vlsbr(\strS'\,\vlscn{\strR};\strP)}
 \Size{\vlread}$
the inductive hypothesis holds on $\bvtPder'$.
There are $\strU$, and $\vec{\atmb}$ \ST, for every $\strV$ with $\strFN{\strV}\cap\strBN{\strR}=\emptyset$, both
$\vlstore{
 \bvtInfer{\bvtDder}
          {\vlfo{\vec{\atmb}}
                {\vlsbr[\strV;\strU]}
           \bvtJ
           \strS'\,\vlscn{\strV}}}
 \vlread$,
and
$\vlstore{\vlsbr[\strR;\strU]}
 \bvtInfer{\bvtPder'''}{\ \bvtJ \vlread}$. 
The proof $\bvtPder'''$ is the one we are looking for. 
To get the derivation we are looking for, we fix $\strV$ such that
$\strFN{\strV}\cap\strBN{\strR}=\emptyset$. This allows to use $\bvtDder$
as follows:
{\small \[
\vlderivation                                {
\vlde{\bvtPder''}{}
     {\vlsbr(\strS'\vlscn{\strV};\strP)}    {
\vliq{\eqref{align:unit-co}}{}
     {\vlsbr(\strS'\vlscn{\strV};\vlone)}  {
\vlde{\bvtDder}{}
     {\strS'\vlscn{\strV}}                {
\vlhy{\vlfo{\vec{\atmb}}
           {\vlsbr[\strV;\strU]}}         }}}}
\] }%\small
%%%% third case of context reduction
\par
As a \emph{third case}, 
let $\strS\vlhole\approx\vlfo{\atmb}{\strS'\vlhole}$
with $\atmb\in\strFN{\strS'\vlhole}$. Otherwise it would be meaningless assuming to have
$\strS\vlhole$ with such a form.
So, the assumption is 
$\bvtInfer{\bvtPder}{\ \bvtJ \vlfo{\atmb}{\strS'\,\vlscn{\strR}}}$. 
Point~\ref{enum:Derivability of subformulas-fo}
of Proposition~\ref{proposition:Derivability of structures in BVT} implies
$\bvtInfer{\bvtPder'}{\ \bvtJ \strS'\,\vlscn{\strR}}$.
So, $\Size{\strS'\,\vlscn{\strR}} < 
     \Size{\vlfo{\atmb}{\strS'\,\vlscn{\strR}}}$
implies the inductive hypothesis holds on $\bvtPder'$.
There are $\strU$, and $\vec{\atmb}$ \ST, for every
$\strV$ with $\strFN{\strV}\cap\strBN{\strR}=\emptyset$, both
$\vlstore{
 \bvtInfer{\bvtDder}
          {\vlfo{\vec{\atmb}}
                {\vlsbr[\strV;\strU]}
           \bvtJudGen{}{}
           \strS'\,\vlscn{\strV}}}
 \vlread$,
and 
$\vlstore{\vlsbr[\strR;\strU]}
 \bvtInfer{\bvtPder'''}{\ \bvtJ \vlread}$.
The proof $\bvtPder'''$ is the one we are looking for.
To get the derivation we are looking for, we fix $\strV$ such that
$\strFN{\strV}\cap\strBN{\strR}=\emptyset$. This allows to use $\bvtDder$
as follows:

{\small \[
\vlderivation                     {
\vlde{\bvtDder}{}
     {\vlfo{\atmb}
           {\strS'\vlscn{\strV}}}{
\vlhy{\vlfo{\vec{\atmb}}
           {\vlsbr[\strV;\strU]}}}}
\] }%\small
%%%% fourth case of context reduction
\par
As a \emph{fourth case}, let
$\strS\vlhole\approx\vlsbr[<\strS'\vlhole;\strP'>;\strP]$.
The assumption is
$\vlstore{\vlsbr[<\strS'\,\vlscn\strR;\strP'>;\strP]}
 \bvtInfer{\bvtPder}{\ \bvtJ{\vlread}}$.
Shallow splitting implies the existence of
$\strP_1, \strP_2$ \ST\
$\vlstore{\vlsbr<\strP_1;\strP_2>}
 \bvtInfer{\bvtDder}{\vlread \bvtJ \strP}$, and
$\vlstore{\vlsbr[\strS'\,\vlscn\strR;\strP_1]}
 \bvtInfer{\bvtPder_1}{\ \bvtJ {\vlread}}$, and
$\vlstore{\vlsbr[\strP';\strP_2]}
 \bvtInfer{\bvtPder_2}{\ \bvtJ {\vlread}}$.
The relation
$\vlstore{\vlsbr[\strS'\,\vlscn\strR;\strP_1]}
 \Size{{\vlread}} <
 \vlstore{\vlsbr[<\strS'\,\vlscn\strR;\strP'>;\strP]}
 \Size{{\vlread}}$, which holds also
thanks to
$\Size{\strP_1}<\Size{\strP}$,
implies the inductive hypothesis holds on $\bvtPder_1$.
There are $\strU$, and  $\vec{\atmb}$ \ST, for every
$\strV$
with $\strFN{\strV}\cap\strBN{\strR}=\emptyset$, both
$\vlstore{\vlfo{\vec{\atmb}}
               {\vlsbr[\strV;\strU]}
          \bvtJ
          \vlsbr[\strS'\,\vlscn{\strV};\strP_1]
          }
 \bvtInfer{\bvtDder'}{\vlread}$, and
$\vlstore{\vlsbr[\strR;\strU]}
 \bvtInfer{\bvtPder'''}{\ \bvtJ \vlread}$. 
The proof $ \bvtPder''' $ is the one we are looking for. 
To get the derivation we are looking for, we fix $\strV$ such that
$\strFN{\strV}\cap\strBN{\strR}=\emptyset$. This allows to use $ \bvtDder' $
as follows:
{\small \[
\vlderivation                                                {
\vlde{\bvtDder}{}
     {\vlsbr[<\strS'\vlscn{\strV};\strP'>;\strP]}            {
\vlin{\bvtseqdrule}{}
     {\vlsbr[<\strS'\vlscn{\strV};\strP'>;<\strP_1;\strP_2>]}{
\vlde{\bvtPder_2}{}
     {\vlsbr<[\strS'\vlscn{\strV};\strP_1];[\strP';\strP_2]>}{
\vliq{\eqref{align:unit-seq}}{}
     {\vlsbr<[\strS'\vlscn{\strV};\strP_1];\vlone>}          {
\vlde{\bvtDder'}{}
     {\vlsbr[\strS'\vlscn{\strV};\strP_1]}                   {
\vlhy{\vlfo{\vec{\atmb}}
           {\vlsbr[\strV;\strU]}}                       }}}}}}
\]
}%\small
%%%% fifth case of third of context reduction
\par
As a \emph{fifth case}, let
$\strS\vlhole\approx\vlsbr[(\strS'\,\vlscn\strR;\strP');\strP]$.
The assumption is
$\vlstore{\vlsbr[(\strS'\,\vlscn\strR;\strP');\strP]}
\bvtInfer{\bvtPder}{\ \bvtJ{\vlread}}$.
Shallow splitting implies the existence of
$\strP_1, \strP_2$ \ST\
$\vlstore{\vlsbr[\strP_1;\strP_2]}
 \bvtInfer{\bvtDder}{{\vlread} \bvtJ \strP}$, and
$\vlstore{\vlsbr[\strS'\,\vlscn\strR;\strP_1]}
 \bvtInfer{\bvtPder_1}{\ \bvtJ {\vlread}}$, and
$\vlstore{\vlsbr[\strP';\strP_2]}
 \bvtInfer{\bvtPder_2}{\ \bvtJ {\vlread}}$.
The relation
$\vlstore{\vlsbr[\strS'\,\vlscn\strR;\strP_1]}
 \Size{{\vlread}} <
 \vlstore{\vlsbr[(\strS'\,\vlscn\strR;\strP');\strP]}
 \Size{{\vlread}}$, which holds also thanks to
$\Size{\strP_1}<\Size{\strP}$,
implies the inductive hypothesis holds on $\bvtPder_1$.
There are $\strU$, and $\vec{\atmb}$ \ST, for every
$\strV$ with $\strFN{\strV}\cap\strBN{\strR}=\emptyset$, we have
$\vlstore{\vlfo{\vec{\atmb}}
               {\vlsbr[\strV;\strU]}
           \bvtJ {\vlsbr[\strS'\vlscn{\strV};\strP_1]}}
 \bvtInfer{\bvtDder'}{\vlread}$,
and 
$\vlstore{\vlsbr[\strR;\strU]}
 \bvtInfer{\bvtPder'''}{\ \bvtJ \vlread}$. The proof $\bvtPder'''$ is the one we are looking for.
To get the derivation we are looking for, we fix $\strV$ such that
$\strFN{\strV}\cap\strBN{\strR}=\emptyset$. This allows to use $ \bvtDder' $
as follows:
{\small \[
\vlderivation                                            {
\vlde{\bvtDder}{}
     {\vlsbr[(\strS'\vlscn{\strV};\strP');\strP]}             {
\vliq{\eqref{align:assoc-pa}}{}
     {\vlsbr[(\strS'\vlscn{\strV};\strP');[\strP_1;\strP_2]]} {
\vlin{\bvtswirule}{}
     {\vlsbr[[(\strS'\vlscn{\strV};\strP');\strP_1];\strP_2]} {
\vliq{\eqref{align:symm-co}
     ,\bvtswirule}{}
     {\vlsbr[([\strS'\vlscn{\strV};\strP_1];\strP');\strP_2]} {
\vlde{\bvtPder_2}{}
     {\vlsbr([\strP';\strP_2];[\strS'\vlscn{\strV};\strP_1])} {
\vliq{\eqref{align:unit-co}}{}
     {\vlsbr(\vlone;[\strS'\vlscn{\strV};\strP_1])}           {
\vlde{\bvtDder'}{}
     {\vlsbr[\strS'\vlscn{\strV};\strP_1]}                    {
\vlhy{\vlfo{\vec{\atmb}}
           {\vlsbr[\strV;\strU]}}                      }}}}}}}}
\] }%\small
%%%% sixth case of third of context reduction
\par
As a \emph{sixth case}, let
$\vlstore{\vlsbr[\vlfo{\atma}{\strS'\vlhole};\strP]}
 \strS\vlhole\approx\vlread$ with $\atma\in\strBN{\strS'\,\vlscn{\strR}}$.
Otherwise, it would be meaningless to assume $ \strS\vlhole $ as such.
The assumption is
$\vlstore{\vlsbr[\vlfo{\atma}{\strS'\vlscn\strR};\strP]}
 \bvtInfer{\bvtPder}{\ \bvtJ
 {\vlread}}$.
Shallow splitting implies the existence of
$\strP'$ \ST\
$\bvtInfer{\bvtDder}{\vlex{\atma}{\strP'} \bvtJ \strP}$, and
$\vlstore{\vlsbr[\strS'\,\vlscn\strR;\strP']}
 \bvtInfer{\bvtPder'}{\ \bvtJ {\vlread}}$.
The relation
$\vlstore{\vlsbr[\strS'\,\vlscn\strR;\strP']}
 \Size{{\vlread}} <
 \vlstore{\vlsbr[\vlfo{\atma}{\strS'\,\vlscn\strR};\strP]}
 \Size{{\vlread}}$, which holds also because $\atma\in\strFN{\strS'\,\vlscn{\strR}}$,
implies that the inductive hypothesis on $\bvtPder'$ is true.
There are $\strU$, and $\vec{\atmb}$ \ST, for every $\strV$ with $\strFN{\strV}\cap\strBN{\strR}=\emptyset$, both
$\vlstore{\vlfo{\vec{\atmb}}
               {\vlsbr[\strV;\strU]}
          \bvtJ {\vlsbr[\strS'\,\vlscn{\strV};\strP']}}
 \bvtInfer{\bvtDder'}{\vlread}$,
and 
$\vlstore{\vlsbr[\strR;\strU]}
 \bvtInfer{\bvtPder''}{\ \bvtJ \vlread}$. 
The proof $\bvtPder''$ is the one we are looking for.
To get the derivation we are looking for, we fix $\strV$ such that
$\strFN{\strV}\cap\strBN{\strR}=\emptyset$. This allows to use $ \bvtDder' $
as follows:
{\small \[
\vlderivation                                                            {
\vlde{\bvtDder}{}
     {\vlsbr[\vlfo{\atma}{\strS'\vlscn{\strV}};\strP]}                  {
\vlin{\bvtrdrule}{}
     {\vlsbr[\vlfo{\atma}{\strS'\vlscn{\strV}};\vlfo{\atma}{\strP'}]}  {
\vlde{\bvtDder'}{}
     {\vlfo{\atma}{\vlsbr[\strS'\vlscn{\strV};\strP']}}               {
\vlhy{\vlfo{\atmb_1, \ldots, \atmb_n,\atma}
           {\vlsbr[\strV;\strU]}}                                     }}}}
\] }%\small

\par
As a \emph{seventh case}, let
$\vlstore{\vlsbr[\strS'\vlhole;\vlfo{\atma}{\strP}]}
 \strS\vlhole\approx\vlread$ with $\atma\in\strBN{\vlfo{\atma}{\strP}}$.
Also, without loss of generality, can always choose $\atma$ \ST 
$ \atma\not\in\strFN {\strS'\,\vlscn{\strR}} $.
The assumption is
$\vlstore{\vlsbr[\strS'\,\vlscn{\strR};\vlfo{\atma}{\strP}]}
 \bvtInfer{\bvtPder}{\ \bvtJ
 {\vlread}}$.
Shallow splitting implies the existence of
$\strP'$ \ST\
$\bvtInfer{\bvtDder}{\vlex{\atma}{\strP'} \bvtJ \strP}$, and
$\vlstore{\vlsbr[\strS'\,\vlscn\strR;\strP']}
 \bvtInfer{\bvtPder'}{\ \bvtJ {\vlread}}$.
The relation
$\vlstore{\vlsbr[\strS'\,\vlscn\strR;\strP']}
 \Size{{\vlread}} <
 \vlstore{\vlsbr[\strS'\,\vlscn\strR;\vlfo{\atma}{\strP]}}
 \Size{{\vlread}}$, which holds also because $\atma\in\strBN{\vlfo{\atma}{\strP}}$,
implies that the inductive hypothesis on $\bvtPder'$ is true.
There are $\strU$, and $\vec{\atmb}$ \ST, for every $\strV$ with $\strFN{\strV}\cap\strBN{\strR}=\emptyset$, both
$\vlstore{\vlfo{\vec{\atmb}}
               {\vlsbr[\strV;\strU]}
          \bvtJ {\vlsbr[\strS'\,\vlscn{\strV};\strP']}}
 \bvtInfer{\bvtDder'}{\vlread}$,
and 
$\vlstore{\vlsbr[\strR;\strU]}
 \bvtInfer{\bvtPder''}{\ \bvtJ \vlread}$. 
The proof $\bvtPder''$ is the one we are looking for.
To get the derivation we are looking for, we fix $\strV$ such that
$\strFN{\strV}\cap\strBN{\strR}=\emptyset$. This allows to use $ \bvtDder' $
as follows:
{\small \[
\vlderivation                                                          {
\vlde{\bvtDder}{}
     {\vlsbr[\strS'\,\vlscn{\strV};\vlfo{\atma}{\strP}]}   {
\vliq{\eqref{align:alpha-intro}}{}
     {\vlsbr[\strS'\,\vlscn{\strV};\vlfo{\atma}{\strP'}]}  {
\vlin{\bvtrdrule}{}
     {\vlsbr[\vlfo{\atma}{\strS'\,\vlscn{\strV}};\vlfo{\atma}{\strP'}]}  {
\vlde{\bvtDder'}{}
     {\vlfo{\atma}{\vlsbr[\strS'\,\vlscn{\strV};\strP']}}                {
\vlhy{\vlfo{\atmb_1, \ldots, \atmb_n,\atma}
           {\vlsbr[\strV;\strU]}}                                      }}}}}
\] }%\small
We remark that $\eqref{align:alpha-intro}$ applies thanks to 
$ \atma\not\in\strFN{\strS'\,\vlscn{\strR}}$.
\section{Proof of \textit{Splitting} (Theorem~\ref{theorem:Splitting-ALT},
page~\pageref{theorem:Splitting-ALT})}
\label{section:Proof of theorem:Splitting}
We obtain the proof of the three statements by composing Context Reduction
(Proposition~\ref{proposition:Context Reduction ALTERNATIVE}), and Shallow Splitting
(Proposition~\ref{proposition:Shallow Splitting}) in this order.
We develop the details of Points~\ref{enum:Splitting-seq},
and~\ref{enum:Splitting-fo-ex}. The proof of Point~\ref{enum:Splitting-copar} is analogous
to the one of~\ref{enum:Splitting-seq}.
\begin{description}
%%% first case
\item[Point~\ref{enum:Splitting-seq}.]
Context Reduction (Proposition~\ref{proposition:Context Reduction ALTERNATIVE})
applies to $\bvtPder$. So, there are $\strU$, and $\vec{\atmb}$ \ST, for every $\strV$, with
$\vlstore{\strFN{\strV}\cap\strBN{\vlsbr<\strR;\strT>}=\emptyset}
 \vlread$, there exist
$\vlstore{\vlsbr[\strV;\strU]}
 \bvtInfer{\bvtDder}
          {\vlfo{\vec{\atmb}}{\vlread} \bvtJ \strS\vlscn{\strV}}$,
and $\vlstore{\vlsbr[<\strR;\strT>;\strU]}
     \bvtInfer{\bvtQder}{\ \bvtJ {\vlread}}$. Shallow
Splitting (Proposition~\ref{proposition:Shallow Splitting}) applies to
$\bvtQder$. So,
$\vlstore{\vlsbr<\strK_1;\strK_2>}
 \bvtInfer{\bvtEder}{{\vlread} \bvtJ \strU}$, and
$\vlstore{\vlsbr[\strR;\strK_1]}
 \bvtInfer{\bvtQder_1}{\ \bvtJ {\vlread}}$, and
$\vlstore{\vlsbr[\strT;\strK_2]}
 \bvtInfer{\bvtQder_2}{\ \bvtJ {\vlread}}$, for some
$\strK_1, \strK_2$. Both $\bvtQder_1$, and $\bvtQder_2$ are the two proofs we are
looking for. The derivation is:
{\small \[
\vlderivation                                 {
\vlde{\bvtDder}{}
     {\strS\vlscn{\strV}                   } {
\vlde{\bvtEder}{}
     {\vlfo{\vec{\atmb}}
           {\vlsbr[\strV;\strU]}           }{
\vlhy{\vlfo{\vec{\atmb}}
           {\vlsbr[\strV;<\strK_1;\strK_2>]}}}}}
\]}%\small
%%% second case
\item[Point~\ref{enum:Splitting-fo-ex}.]
Context Reduction (Proposition~\ref{proposition:Context Reduction ALTERNATIVE})
applies to $\bvtPder$. So, there are $\strU$, and $\vec{\atmb}$ \ST, for every $\strV$ with
$\vlstore{\strFN{\strV}\cap\strBN{\vlfo{\atma}{\strR}}=\emptyset}
 \vlread$, there exist
$\vlstore{\vlsbr[\strV;\strU]}
 \bvtInfer{\bvtDder}{\vlfo{\vec{\atmb}}
                          {\vlread} \bvtJ \strS\vlscn{\strV}}$,
and $\vlstore{\vlsbr[\vlfo{\atma}{\strR};\strU]}
     \bvtInfer{\bvtQder}{\ \bvtJ {\vlread}}$.
Shallow Splitting (Proposition~\ref{proposition:Shallow Splitting}) applies to
$\bvtQder$.
So, $\bvtInfer{\bvtEder}{\vlfo{\atma}{\strK} \bvtJ \strU}$, and
$\vlstore{\vlsbr[\strR;\strK]}
 \bvtInfer{\bvtQder'}{\ \bvtJ {\vlread}}$, for some
$\strK$. So, $\bvtQder'$ is the proof we are looking for.
The derivation is:
{\small \[
\vlderivation                              {
\vlde{\bvtDder}{}
     {\strS\vlscn{\strV}              }   {
\vlde{\bvtEder}{}
     {\vlfo{\vec{\atmb}}
           {\vlsbr[\strV;\strU]}      }  {
\vliq{\eqref{align:alpha-intro}}{}
     {\vlfo{\vec{\atmb}}
           {\vlsbr[\strV
                  ;\vlfo{\atma}{\strK}
                  ]}                  } {
\vlin{\bvtrdrule}{}
     {\vlfo{\vec{\atmb}}
           {\vlsbr[\vlfo{\atma}{\strV}
                  ;\vlfo{\atma}{\strK}
                  ]                   }}{
\vlhy{\vlfo{\atma,\vec{\atmb}}
           {\vlsbr[\strV;\strK]}}}}}}}
\]}%\small
The step \eqref{align:alpha-intro} applies thanks to the assumption that
$\vlstore{\strFN{\strV}\cap\strBN{\vlfo{\atma}{\strR}}=\emptyset}
 \vlread$, which implies $\atma\not\in\strFN{\strV}$.
\end{description}

\section{Proof of \textit{Admissibility of the up fragment}
(Theorem~\ref{theorem:Admissibility of the up fragment},
page~\pageref{theorem:Admissibility of the up fragment})}
\label{section:Proof of theorem:Admissibility of the up fragment}
%%%%%%%%% first case
As a \emph{first case} we show that ${\bvtatiurulein}$ is admissible
for $\BVT$. So, we start by assuming:
{\small \[
\vlderivation                           {
\vlin{\bvtatiurule}{}
     {\strS\vlscn{\vlone}}             {
\vlpr{\bvtPder'}{}
     {\strS\vlsbr(\atma;\natma)}}}
\] }%\small
Point~\ref{enum:Splitting-copar} of Splitting (Theorem~\ref{theorem:Splitting-ALT}) applies to $\bvtPder'$, whose conclusion is $\strS\vlsbr(\atma;\natma)$.
There are $\strK_1, \strK_2$, and $\vec{\atmb}$ \ST,
for every $\strV$ with
$\vlstore{\strFN{\strV}\cap\strBN{\vlsbr(\atma;\natma)}=\emptyset}
 \vlread$,
there exist
$\vlstore{\vlsbr[\strV;[\strK_1;\strK_2]]}
 \bvtInfer{\bvtDder}
          {\vlfo{\vec{\atmb}}{{\vlread}} \bvtJ \strS\vlscn{\strV}}$, and
$\vlstore{\vlsbr[\atma;\strK_1]}
 \bvtInfer{\bvtPder_1}{\ \bvtJ {\vlread}} $, and
$\vlstore{\vlsbr[\natma;\strK_2]}
 \bvtInfer{\bvtPder_2}{\ \bvtJ {\vlread}}$.
Shallow splitting (Proposition~\ref{proposition:Shallow Splitting}) on
$\bvtPder_1$, and $\bvtPder_2$ implies
$\bvtInfer{\bvtEder_1}{\natma \bvtJ{\strK_1}}$, and
$\bvtInfer{\bvtEder_2}{\atma \bvtJ {\strK_2}}$.
To build the following proof with the same conclusion as $\bvtPder$, but
without its bottommost instance of $\bvtatiurulein$ it is enough to observe
that among all the possible instances of $\strV$ there is $\vlone$, because
$\vlstore{\strFN{\vlone}\cap\strBN{\vlsbr(\atma;\natma)}=\emptyset}
 \vlread$. So, we can prove:
{\small \[
\vlderivation                                                 {
\vlde{\bvtDder'}{}
     {\strS\vlscn{\vlone}}                                    {
\vliq{\eqref{align:unit-co}}{}
     {\vlfo{\vec{\atmb}}
           {\vlsbr[\vlone;[\strK_1;\strK_2]]}}         {
\vlde{\bvtEder_1}{}
     {\vlfo{\vec{\atmb}}
           {\vlsbr[\strK_1;\strK_2]}}                  {
\vlde{\bvtEder_2}{}
     {\vlfo{\vec{\atmb}}
           {\vlsbr[\natma;\strK_2]}}                  {
\vlin{\bvtatidrule}{}
     {\vlfo{\vec{\atmb}}
           {\vlsbr[\natma;\atma]}}                     {
\vliq{\eqref{align:alpha-intro}}{}{\vlfo{\atmb}{\vlone}}{
\vlhy{\vlone}                                                 }}}}}}}
\] }%\small
where $\bvtDder'$ is $\bvtDder$ with $\strV$ instantiated as $\vlone$.
%%%%%%%%% second case
\par
As a \emph{second case} we show that ${\bvtsequrulein}$ is admissible
for $\BVT$. So, we start by assuming:
{\small \[
\vlderivation                                    {
\vlin{\bvtsequrule}{}
     {\strS\vlsbr<(\strR;\strT);(\strU;\strV)>} {
\vlpr{\bvtPder'}{}
     {\strS\vlsbr(<\strR;\strU>;<\strT;\strV>) }}}
\] }%\small
Point~\ref{enum:Splitting-seq} of Splitting (Theorem~\ref{theorem:Splitting-ALT}) applies to $\bvtPder$ --- beware, not $ \bvtPder' $ ---, whose conclusion is $\strS\vlsbr<(\strR;\strT);(\strU;\strV)>$.
There are
$\strK_1, \strK_2$, and $\vec{\atmb}$ \ST,
for every $\strV'$ with 
$\vlstore{\strFN{\strV'}\cap
          \strBN{\strS\vlsbr<(\strR;\strT);(\strU;\strV)>}=\emptyset}
 \vlread$,
there exist
$\vlstore{\vlsbr[\strV';<\strK_1;\strK_2>]}
 \bvtInfer{\bvtDder}
 {\vlfo{\vec{\atmb}}
       {\vlread}
  \bvtJ \strS\vlscn{\strV'}}$, and
$\vlstore{\vlsbr[(\strR;\strT);\strK_1]}
 \bvtInfer{\bvtPder_1}{\ \bvtJ {\vlread}}$, and
$\vlstore{\vlsbr[(\strU;\strV);\strK_2]}
 \bvtInfer{\bvtPder_2}{\ \bvtJ {\vlread}}$.
Shallow splitting (Proposition~\ref{proposition:Shallow Splitting}) on both
$\bvtPder_1$, and $\bvtPder_2$ implies
$\vlstore{\vlsbr[\strK_\strR;\strK_\strT]}
 \bvtInfer{\bvtEder   }{\vlread \bvtJ {\strK_1}} $, and
$\vlstore{\vlsbr{[\strR;\strK_\strR]}}
 \bvtInfer{\bvtQder_1}{\ \bvtJ {\vlread}}$, and
$\vlstore{\vlsbr{[\strT;\strK_\strT]}}
 \bvtInfer{\bvtQder_2}{\ \bvtJ {\vlread}}$, and
$\vlstore{\vlsbr[\strK_\strU;\strK_\strV] \bvtJ {\strK_2}}
 \bvtInfer{\bvtEder' }{\vlread}$, and
$\vlstore{\vlsbr{[\strU;\strK_\strU]}}
 \bvtInfer{\bvtQder'_1}{\ \bvtJ {\vlread}}$, and
$\vlstore{\vlsbr{[\strV;\strK_\strV]}}
 \bvtInfer{\bvtQder'_2}{\ \bvtJ {\vlread}}$.
To build the following proof with the same conclusion as $\bvtPder$, but without its
bottommost instance of ${\bvtsequrulein}$, it is enough to observe that one of the
possible instances of $\strV'$ is
$\vlsbr<(\strR;\strT);(\strU;\strV)>$ because,
thanks to \eqref{align:alpha-varsub}, we can always assume
$\vlstore{\strFN{\vlsbr<(\strR;\strT);(\strU;\strV)>}
 \cap
 \strBN{\vlsbr<(\strR;\strT);(\strU;\strV)>}}
 \vlread=\emptyset$:
{\small \[
\vlderivation                                            {
\vlde{\bvtDder'}{}
     {\strS\vlsbr<(\strR;\strT);(\strU;\strV)>}          {
\vlde{\bvtEder}{}
     {\vlfo{\vec{\atmb}}
           {\vlsbr[<(\strR;\strT);(\strU;\strV)>
                  ;[\strK_1
                   ;\strK_2]]  }}                 {
\vlde{\bvtEder'}{}
     {\vlfo{\vec{\atmb}}{\vlsbr[<(\strR;\strT);(\strU;\strV)>
                         ;[\strK_\strR;\strK_\strT
                          ;\strK_2]]  }}                 {
\vlin{\bvtpmixrule}{}
     {\vlfo{\vec{\atmb}}
           {\vlsbr[<(\strR;\strT);(\strU;\strV)>
                  ;[\strK_\strR;\strK_\strT
                   ;\strK_\strU;\strK_\strV]]}} {
\vlin{\bvtseqdrule}{}
     {\vlfo{\vec{\atmb}}
                {\vlsbr[<(\strR;\strT);(\strU;\strV)>
                       ;<[\strK_\strR;\strK_\strT]
                        ;[\strK_\strU;\strK_\strV]>]}} {
\vlin{\bvtswirule^{2}}{}
     {\vlfo{\vec{\atmb}}
           {\vlsbr<[(\strR;\strT);\strK_\strR;\strK_\strT]
                  ;[(\strU;\strV)
                   ;\strK_\strU;\strK_\strV]>}}   {
\vlde{\bvtQder_1}{}
     {\vlfo{\vec{\atmb}}
                {\vlsbr<[([\strR;\strK_\strR];\strT);\strK_\strT]
                       ;[([\strU;\strK_\strU]
                        ;\strV);\strK_\strV]>}}   {
\vlde{\bvtQder'_1}{}
     {\vlfo{\vec{\atmb}}
           {\vlsbr<[\strT;\strK_\strT]
                  ;[([\strU;\strK_\strU]
                   ;\strV);\strK_\strV]>}}   {
\vlde{\bvtQder''_1}{}
     {\vlfo{\vec{\atmb}}
           {\vlsbr<[\strT;\strK_\strT]
                  ;[\strV;\strK_\strV]>}}   {
\vlde{\bvtQder''_2}{}
     {\vlfo{\vec{\atmb}}
                {\vlsbr[\strV;\strK_\strV]}}         {
\vliq{\eqref{align:alpha-intro}}{}
     {\vlfo{\vec{\atmb}}
                {\vlone}}         {
\vlhy{\vlone}                                }}}}}}}}}}}}
\] }%\small
where $\bvtDder'$ is $\bvtDder$ with $\strV'$ instantiated as
$\vlsbr<(\strR;\strT);(\strU;\strV)>$.
%%%%%%%%% third case
\par
As a \emph{third case} we show that ${\bvtrurulein}$ is admissible for
$\BVT$. So, we start by assuming:
{\small \[
\vlderivation                                               {
\vlin{\bvtrurule}{}
     {\strS\vlfo{\atma}{\vlsbr(\strR;\strT)}}              {
\vlpr{\bvtPder'}{}
     {\strS\vlsbr(\vlfo{\atma}{\strR};\vlfo{\atma}{\strT})}}}
\] }%\small
Point~\ref{enum:Splitting-fo-ex} of Splitting (Theorem~\ref{theorem:Splitting-ALT}) applies to $\bvtPder$ --- beware, not
$ \bvtPder' $ ---, whose conclusion is 
$\vlstore{\vlsbr(\strR;\strT)}
 \strS\vlfo{\atma}{\vlread}$.
There is $\strK$, and $\vec{\atmb}$ \ST,
for every $\strV$ with
$\vlstore{\strFN{\strV}
          \cap
          \strBN{\strS\vlfo{\atma}{\vlsbr(\strR;\strT)}} =
          \emptyset}
 \vlread$,
there exist
$\vlstore{\vlfo{\atma,\vec{\atmb}}
               {\vlsbr[\strV;\strK]}}
 \bvtInfer{\bvtDder}
          {\vlread \bvtJ \strS\vlscn{\strV}}$, and
$\vlstore{\vlsbr[(\strR;\strT);\strK]}
 \bvtInfer{\bvtPder_1}{\ \bvtJ {\vlread}}$.
Shallow splitting (Proposition~\ref{proposition:Shallow Splitting}) on
$\bvtPder_1$ implies
$\vlstore{\vlsbr[\strK_\strR;\strK_\strT]}
 \bvtInfer{\bvtEder}{\vlread \bvtJ \strK}$, and
$\vlstore{\vlsbr{[\strR;\strK_\strR]}}
 \bvtInfer{\bvtQder_1 }{\ \bvtJ {\vlread}}$, and
$\vlstore{\vlsbr{[\strT;\strK_\strT]}}
 \bvtInfer{\bvtQder_2}{\ \bvtJ {\vlread}}$.
To build the following proof with the same conclusion as $\bvtPder$, but
without its bottommost instance of ${\bvtrurulein}$ it is enough to observe that
one of the possible instances of $ \strV $ is 
$\vlstore{\vlsbr(\strR;\strT)}
 \strS\vlfo{\atma}{\vlread} $
such that 
$\vlstore{\strFN{\vlfo{\atma}{\vlsbr(\strR;\strT)}}
          \cap
          \strBN{\vlfo{\atma}{\vlsbr(\strR;\strT)}} =
          \emptyset}
 \vlread$:
{\small \[
\vlderivation                                     {
\vlde{\bvtDder}{}
     {\strS\vlfo{\atma}
                {\vlsbr(\strR;\strT)}}            {
\vlde{\bvtEder}{}
     {\vlfo{\atma.\vec{\atmb}}
           {\vlsbr[(\strR;\strT)
                  ;\strK]
            }}                                     {
\vlin{\bvtswirule}{}
     {\vlfo{\atma,\vec{\atmb}}
           {\vlsbr[(\strR;\strT)
                   ;\strK_\strR
                   ;\strK_\strT]]
            }}                                     {
\vlin{\eqref{align:symm-co}
     ,\bvtswirule}{}
     {\vlfo{\atma,\vec{\atmb}}
           {\vlsbr[([\strR;\strK_\strR];\strT)
                         ;\strK_\strT]}}           {
\vlde{\bvtQder_2}{}
     {\vlfo{\atma,\vec{\atmb}}
           {\vlsbr([\strT;\strK_\strT]
                        ;[\strR;\strK_\strR])}}    {
\vlde{\bvtQder_1}{}
     {\vlfo{\atma,\vec{\atmb}}
           {\vlsbr(\vlone
                        ;[\strR;\strK_\strR])}}    {
\vliq{\eqref{align:unit-co}
     ,\eqref{align:alpha-intro}}{}
     {\vlfo{\atma,\vec{\atmb}}
           {\vlsbr(\vlone;\vlone)}} {
\vlhy{\vlone}                        }}}}}}}}
\] }%\small

\section{Proof that \textit{$\bvttradruleinp$ is derivable in $\BVT$}
(Lemma~\ref{lemma:Deriving substitution},
page~\pageref{lemma:Deriving substitution})}
\label{section:lemma:Simulating transition prime}

We proceed by induction on the size
$\vlstore{\mapLcToDi{\llcxM}{\atmo}}
 \Size{\vlread}$,
of
$\mapLcToDi{\llcxM}{\atmo}$, that occurs in the conclusion of
$\bvttradruleinp$, proceeding by cases on the form of $\llcxM$.
\par
The first base case is $\llcxM\equiv\llcxX$.
{\small
\[
\vlderivation                          {
\vlin{\bvttradrule}{}
     {\vlsbr[\mapLcToDi{\llcxX}{\atmr}
            ;<\atmr;\vlne\atmo>]
     \equiv
     \vlsbr[<\llcxX;\vlne\atmr>
            ;<\atmr;\vlne\atmo>]       }{
\vlhy{\mapLcToDi{\llcxX}{\llcxo}
      \equiv
      \vlsbr<\llcxX;\llcxno>           }}}
\]
}%small
\par
The second base case is $\llcxM\equiv\llcxA{\llcxM'}{\llcxM''}$.
{\small
\[
\vlderivation                                            {
\vliq{\eqref{align:symm-pa}
     ,\bvtrdrule
     }{}
     {\vlsbr[\mapLcToDi{\llcxA{\llcxM'}{\llcxM''}}{\atmr}
                  ;<\atmr;\vlne\atmo>
                  ]
            \equiv
      \vlsbr[\vlex{\atmp}
                  {\vlsbr[\mapLcToDi{\llcxM'}{\atmp}
                         ;\vlex{\atmq}{\mapLcToDi{\llcxM''}{\atmq}}
                         ;<\atmp;\vlne\atmr>]
                  }
             ;<\atmr;\vlne\atmo>
            ]}{
\vlin{\bvttradrule
     }{}
     {\vlex{\atmp}
           {\vlsbr[\mapLcToDi{\llcxM'}{\atmp}
                   ;\vlex{\atmq}{\mapLcToDi{\llcxM''}{\atmq}}
                   ;<\atmp;\vlne\atmr>
                   ;<\atmr;\vlne\atmo>
                  ]
           }
     }                                                                {
\vlhy{\mapLcToDi{\llcxA{\llcxM'}{\llcxM''}}
                      {\llcxo}
      \equiv
      \vlex{\atmp}
           {\vlsbr[\mapLcToDi{\llcxM'}{\atmp}
                   ;\vlex{\atmq}{\mapLcToDi{\llcxM''}{\atmq}}
                   ;<\atmp;\vlne\atmo>
                  ]
           }
      }}}}
\]
}%small
\par
The unique inductive case is with $\llcxM\equiv\llcxF{\llcxY}{\llcxM'}$ that,
without
loss of generality, can have $\llcxY\neq\llcxX$.
{\small
\[
\vlderivation                                            {
\vliq{\eqref{align:alpha-intro}
     ,\bvtrdrule}{}
     {\vlsbr[\mapLcToDi{\llcxF{\llcxY}{\llcxM'}}{\atmr}
                  ;<\atmr;\vlne\atmo>
                  ]
            \equiv
      \vlsbr[\vlfo{\llcxY}
                  {\mapLcToDi{\llcxM'}{\atmr}}
            ;<\atmr;\vlne\atmo>]}{
\vlin{\bvttradrulep}{}
     {\vlfo{\llcxY}
           {\vlsbr[\mapLcToDi{\llcxM'}{\atmr}
                  ;<\atmr;\vlne\atmo>]
           }                                             }{
\vlhy{\mapLcToDi{\llcxF{\llcxY}{\llcxM'}}
                {\llcxo}
      \equiv
      \vlfo{\llcxY}
           {\mapLcToDi{\llcxM'}{\llcxo}}
      }}}}
\]
}%small
where $\bvttradruleinp$ applies by induction because
$\vlstore
 {\mapLcToDi{\llcxM'}{\atmr}
 }
\Size{\vlread}$
$<
\vlstore
 {\mapLcToDi{\llcxF{\llcxY}{\llcxM'}}{\atmr}}
\Size{\vlread}$.
\section{Proof that \textit{$\bvtsubsrule$ is derivable in $\BVT$}
(Lemma~\ref{lemma:Deriving substitution},
page~\pageref{lemma:Deriving substitution})}
\label{section:Proof of lemma:Deriving substitution}
We proceed by induction on the size
$\vlstore{
 \vlsbr[\mapLcToDi{\llcxM}{\atmo}
             ;\mapLcToDi{\llcxN}{\llcxX}]}
 \Size{\vlread}$
of
$\vlsbr[\mapLcToDi{\llcxM}{\atmo}
      ;\mapLcToDi{\llcxN}{\llcxX}]$, that occurs in the conclusion of
$\bvtsubsrule$, proceeding by cases on the form of $\llcxM$.
%%%%%%%% base case
\par
Let $\llcxM\equiv\llcxX$. We have three situations:
\begin{description}
\item[$\llcxN\equiv\llcxY$.]
%%%%% DERIVATION IN BVT ONLY!!!
{\small
\[
\vlderivation                                                  {
\vlin{\bvttradrule}{}
     {\vlsbr[\mapLcToDi{\llcxX}{\llcxo}
                  ;\mapLcToDi{\llcxY}{\llcxX}]
      \equiv
      \vlsbr[<\llcxX;\llcxno>
            ;<\llcxY;\vlne\llcxX>]}                     {
\vlhy{\mapLcToDi{\llcxX\subst{\llcxY}{\llcxX}}{\llcxo}
      \equiv
      \mapLcToDi{\llcxY}{\llcxo}
      \equiv
      \vlsbr<\llcxY;\llcxno>
     }                            }}
\]
}%small

\item[$\llcxN\equiv\llcxA{\llcxN'}{\llcxN''}$.]
{\small
\[
\vlderivation                                            {
\vliq{\eqref{align:assoc-pa}
     ,\bvtrdrule
     ,\eqref{align:alpha-intro}}{}
     {\vlsbr[\mapLcToDi{\llcxX}{\llcxo}
                  ;\mapLcToDi{\llcxA{\llcxN'}{\llcxN''}}
                              {\llcxX}]
            \equiv
      \vlsbr[<\llcxX;\llcxno>
            ;\vlex{\atmp}
                  {[\mapLcToDi{\llcxN'}
                              {\atmp}
                    ;\vlex{\atmq}
                          {\mapLcToDi{\llcxN''}
                                     {\atmq}
                          }
                    ;<\atmp;\vlne\llcxX>
                   ]}
            ]                                           }{
\vlin{\bvttradrule}{}
     {\vlex{\atmp}
           {\vlsbr[\mapLcToDi{\llcxN'}
                              {\atmp}
                    ;\vlex{\atmq}
                          {\mapLcToDi{\llcxN''}
                                     {\atmq}
                          }
                    ;<\llcxX;\llcxno>
                    ;<\atmp;\vlne\llcxX>
                   ]}                                   }{
\vlhy{\mapLcToDi{\llcxX\subst{\llcxA{\llcxN'}{\llcxN''}}
                             {\llcxX}}{\atmo}
      \equiv
      \mapLcToDi{\llcxA{\llcxN'}{\llcxN''}}{\atmo}
      \equiv
      \vlex{\atmp}
           {\vlsbr[\mapLcToDi{\llcxN'}
                              {\atmp}
                    ;\vlex{\atmq}
                          {\mapLcToDi{\llcxN''}
                                     {\atmq}
                          }
                    ;<\atmp;\llcxno>
                   ]}
      }}}}
\]
}%small

\item[$\llcxN\equiv\llcxF{\llcxY}{\llcxN'}$] that, without loss of generality, can
be $\llcxY\neq\llcxX$.
{\small
\[
\vlderivation                                            {
\vliq{\eqref{align:alpha-intro}
     ,\bvtrdrule}{}
     {\vlsbr[\mapLcToDi{\llcxX}{\llcxo}
                  ;\mapLcToDi{\llcxF{\llcxY}{\llcxN'}}
                              {\llcxX}]
            \equiv
      \vlsbr[<\llcxX;\vlne\llcxo>
            ;\vlfo{\llcxY}
                  {\mapLcToDi{\llcxN'}{\llcxX}}]}{
\vlin{\bvttradrulep}{}
     {\vlfo{\llcxY}
           {\vlsbr[<\llcxX;\vlne\llcxo>
                  ;\mapLcToDi{\llcxN'}{\llcxX}]
                 }}{
\vlhy{\mapLcToDi{\llcxX\subst{\llcxF{\llcxY}{\llcxN'}}{\llcxX}}{\llcxo}
      \equiv
      \mapLcToDi{\llcxF{\llcxY}{\llcxN'}}{\llcxo}
      \equiv
      \vlfo{\llcxY}
           {\mapLcToDi{\llcxN'}{\llcxo}}
      }}}}
\]
}%small
\end{description}

%%%%%%%% first inductive case
\noindent
Let $\llcxM\equiv\llcxF{\llcxY}{\llcxM'}$ that, without loss of generality,
can always be such that $\llcxY\neq\llcxX$.
{\small
\[
\vlderivation                                            {
\vliq{\eqref{align:alpha-intro}
     ,\bvtrdrule}{}
     {\vlsbr[\mapLcToDi{\llcxF{\llcxY}{\llcxM'}}{\llcxo}
                  ;\mapLcToDi{\llcxN}{\llcxX}
                  ]
            \equiv
      \vlsbr[\vlfo{\llcxY}
                  {\mapLcToDi{\llcxM'}{\atmo}}
            ;\mapLcToDi{\llcxN}{\llcxX}]}{
\vlin{\bvtsubsrule}{}
     {\vlfo{\llcxY}
           {\vlsbr[\mapLcToDi{\llcxM'}{\atmo}
                  ;\mapLcToDi{\llcxN'}{\llcxX}]
           }                                             }{
\vlhy{\mapLcToDi{\llcxF{\llcxY}{\llcxM'\subst{\llcxN'}{\llcxX}}}
                {\llcxo}
      \equiv
      \vlfo{\llcxY}
           {\mapLcToDi{\llcxM'\subst{\llcxN'}{\llcxX}}{\llcxo}}
      }}}}
\]
}%small
where $\bvtsubsrule$ applies by induction because
$\vlstore
 {\vlsbr[\mapLcToDi{\llcxM'}{\llcxo}
        ;\mapLcToDi{\llcxN'}{\llcxX}]
 }
\Size{\vlread}$
$<
\vlstore
 {\vlsbr[\mapLcToDi{\llcxF{\llcxY}{\llcxM'}}{\llcxo}
        ;\mapLcToDi{\llcxN'}
                    {\llcxX}]}
\Size{\vlread}$.

%%%%%%%% second inductive case
\noindent
Let $\llcxM\equiv\llcxA{\llcxM'}{\llcxM''}$ with $\llcxX\in\llcxFV{\llcxM'}$.
{\small
\[
\vlderivation                                            {
\vliq{\eqref{align:symm-pa}
     ,\bvtrdrule
     ,\eqref{align:assoc-pa}
     }{}
     {\vlsbr[\mapLcToDi{\llcxA{\llcxM'}{\llcxM''}}{\llcxo}
            ;\mapLcToDi{\llcxN}{\llcxX}
            ]
            \equiv
      \vlsbr[\vlex{\atmp}
                  {\vlsbr[\mapLcToDi{\llcxM'}{\atmp}
                         ;\vlex{\atmq}{\mapLcToDi{\llcxM''}{\atmq}}
                         ;<\atmp;\vlne\atmo>
                         ]
                  }
            ;\mapLcToDi{\llcxN}{\llcxX}
            ]}{
\vlin{\bvtsubsrule}{}
     {\vlex{\atmp}
           {\vlsbr[[\mapLcToDi{\llcxM'}{\atmp}
                    ;\mapLcToDi{\llcxN}{\llcxX}]
                  ;\vlex{\atmq}{\mapLcToDi{\llcxM''}{\atmq}}
                  ;<\atmp;\vlne\atmo>
                  ]
           }                                                      }{
\vlhy{\mapLcToDi{\llcxA{\llcxM'\subst{\llcxN}{\llcxX}}{\llcxM''}}
                      {\llcxo}
      \equiv
      \vlex{\atmp}
           {\vlsbr[\mapLcToDi{\llcxM'\subst{\llcxN}{\llcxX}}
                              {\atmp}
                  ;\vlex{\atmq}{\mapLcToDi{\llcxM''}{\atmq}}
                  ;<\atmp;\vlne\atmo>
                  ]
           }
      }}}}
\]
}%small
where $\bvtsubsrule$ can be applied by induction because
$\vlstore
 {\vlsbr[\mapLcToDi{\llcxM'}{\llcxp}
        ;\mapLcToDi{\llcxN}{\llcxX}]
 }
\Size{\vlread}$
$<
\vlstore
 {\vlsbr[\mapLcToDi{\llcxA{\llcxM'}{\llcxM''}}{\llcxo}
         ;\mapLcToDi{\llcxN}
                     {\llcxX}]}
\Size{\vlread}$.

%%%%% third inductive case that requires two introduce two substitution rules
\noindent
Let $\llcxM\equiv\llcxA{\llcxM'}{\llcxM''}$ with
$\llcxX\in\llcxFV{\llcxM''}$.
{\small
\[
\vlderivation                                            {
\vliq{\eqref{align:symm-pa}
     ,\bvtrdrule
     ,\eqref{align:assoc-pa}
     }{}
     {\vlsbr[\mapLcToDi{\llcxA{\llcxM'}{\llcxM''}}{\llcxo}
                  ;\mapLcToDi{\llcxN}{\llcxX}
                  ]
            \equiv
      \vlsbr[\vlex{\atmp}
                  {\vlsbr[\mapLcToDi{\llcxM'}{\atmp}
                         ;\vlex{\atmq}{\mapLcToDi{\llcxM''}{\atmq}}
                         ;<\atmp;\vlne\atmo>]
                  }
             ;\mapLcToDi{\llcxN}{\llcxX}
            ]}{
\vliq{\eqref{align:symm-pa}
     ,\bvtrdrule
     }{}
     {\vlex{\atmp}
           {\vlsbr[\mapLcToDi{\llcxM'}{\atmp}
                   ;[\vlex{\atmq}{\mapLcToDi{\llcxM''}{\atmq}}
                     ;\mapLcToDi{\llcxN}{\llcxX}]
                   ;<\atmp;\vlne\atmo>
                  ]
           }
     }                                                                {
\vlin{\bvtsubsrule}{}
     {\vlex{\atmp}
           {\vlsbr[\mapLcToDi{\llcxM'}{\atmp}
                   ;[\vlex{\atmq}{\mapLcToDi{\llcxM''}{\atmq}}
                     ;\mapLcToDi{\llcxN}{\llcxX}]
                   ;<\atmp;\vlne\atmo>
                  ]
           }
     }                                                                {
\vlhy{\mapLcToDi{\llcxA{\llcxM'}{\llcxM''\subst{\llcxN}{\llcxX}}}
                      {\llcxo}
      \equiv
      \vlex{\atmp}
           {\vlsbr[\mapLcToDi{\llcxM'}{\atmp}
                   ;\vlex{\atmq}{\mapLcToDi{\llcxM''\subst{\llcxN}
                                                           {\llcxX}}
                                            {\atmq}}
                   ;<\atmp;\vlne\atmo>
                  ]
           }
      }}}}}
\]
}%small
where $\bvtsubsrule$ applies by induction as
$\vlstore
 {\vlsbr[\mapLcToDi{\llcxM''}{\llcxq}
        ;\mapLcToDi{\llcxN}{\llcxX}]
 }
\Size{\vlread}$
$<
\vlstore
 {\vlsbr[\mapLcToDi{\llcxA{\llcxM'}{\llcxM''}}{\llcxo}
         ;\mapLcToDi{\llcxN}
                     {\llcxX}]}
\Size{\vlread}$.
%%%%%%%%%%%%%%%%%%%%%%%%%%%%%%%%%%%%%%%%%%%%%%%%%%%%%%%%%%%%%%%%%%%%%%%%%%
\end{document}